  \providecommand\BibTeX{{%
    Bib\TeX}}}
\newcolumntype{P}[1]{>{\raggedright\arraybackslash}p{#1}}
\newcolumntype{Y}[1]{>{\centering\arraybackslash}p{#1}}
\def\BibTeX{{\rm B\kern-.05em{\sc i\kern-.025em b}\kern-.08em
    T\kern-.1667em\lower.7ex\hbox{E}\kern-.125emX}}
\newcommand{\oomit}[1]{}
\begin{document}
\title{Learning Deterministic Multi-Clock Timed Automata}

\author{Yu Teng\,\orcidlink{0000-0002-9434-1695}}
\affiliation{%
  \institution{School of Software Engineering, Tongji University}
  \city{Shanghai}
  \country{China}
}
\email{tengyu@tongji.edu.cn}

\author{Miaomiao Zhang\,\orcidlink{0000-0001-9179-0893}}
\authornote{The corresponding authors: M. Zhang and J. An, This is the author version.}
\affiliation{%
  \institution{School of Software Engineering, Tongji University}
  \city{Shanghai}
  \country{China}
}
\email{miaomiao@tongji.edu.cn}

\author{Jie An\,\orcidlink{0000-0001-9260-9697}}
\authornotemark[1]
\affiliation{%
  \institution{National Institute of Informatics}
  \city{Tokyo}
  \country{Japan}
}
\email{jiean@nii.ac.jp}

\renewcommand{\shortauthors}{Teng et al.}

\begin{abstract}
We present an algorithm for active learning of deterministic timed automata with multiple clocks. The algorithm is within the querying framework of Angluin's $L^*$ algorithm and follows the idea proposed in existing work on the active learning of deterministic one-clock timed automata. We introduce an equivalence relation over the reset-clocked language of a timed automaton and then transform the learning problem into learning the corresponding reset-clocked language of the target automaton. Since a reset-clocked language includes the clocks reset information which is not observable, we first present the approach of learning from a powerful teacher who can provide reset information by answering reset information queries from the learner. Then we extend the algorithm in a normal teacher situation in which the learner can only ask standard membership query and equivalence query while the learner guesses the reset information. We prove that the learning algorithm terminates and returns a correct deterministic timed automaton. Due to the need of guessing whether the clocks reset at the transitions, the algorithm is of exponential complexity in the size of the target automaton. 
\end{abstract}

\begin{CCSXML}
<ccs2012>
   <concept>
       <concept_id>10003752.10003766.10003776</concept_id>
       <concept_desc>Theory of computation~Regular languages</concept_desc>
       <concept_significance>500</concept_significance>
       </concept>
   <concept>
       <concept_id>10010520.10010570.10010572</concept_id>
       <concept_desc>Computer systems organization~Real-time languages</concept_desc>
       <concept_significance>500</concept_significance>
       </concept>
   <concept>
       <concept_id>10003752.10010070.10010071.10010286</concept_id>
       <concept_desc>Theory of computation~Active learning</concept_desc>
       <concept_significance>500</concept_significance>
</concept>
 </ccs2012>
\end{CCSXML}

\ccsdesc[500]{Theory of computation~Regular languages}
\ccsdesc[500]{Computer systems organization~Real-time languages}
\ccsdesc[500]{Theory of computation~Active learning}

\keywords{automata learning, active learning, deterministic multi-clock timed automata, timed language, reset-clocked language}


\settopmatter{printfolios=true}
\maketitle

\section{Introduction}\label{section:Introduction}
Automata learning (a.k.a. model learning) is a classic topic with an extensive history, as a research intersection of formal methods (e.g. automata theory~\cite{HopcroftU79}) and machine learning (e.g. computation learning theory~\cite{KearnsV94}). In general, according to the choices of the date set, it can be divided into two types. 
One is named \emph{passive learning}, in which the learner has no control of the available data set. In~\cite{Gold78,Angluin78}, Gold and Angluin prove that the passive learning problem of finding the smallest finite-state automaton consistent with a given set of accepted words and rejected words is NP-complete.
The other is named \emph{active learning}, in contract, in which a mechanism is designed for the learner to select the data set. In~\cite{Angluin87}, Angluin proposes the famous $L^*$ algorithm and proves that the complexity of the learning problem is polynomial under the active learning mechanism such that the learner is allowed to ask membership queries and equivalence queries for a teacher. In a membership query, the learner can ask whether a given word belongs to the target regular language. In an equivalence query, the learner provides a hypothesis automaton and asks whether it recognizes the target regular language. 
The learning process will continue until returning the exact model.
Following Angulin's approach, many efficient algorithms and tools have been proposed for active learning of different kinds of automata. It has found a wide range of applications in model checking~\cite{PeledVY02,Leucker06,Fiterau-Brostean16}, compositional verification~\cite{AlurMN05}, legacy code analysis~\cite{SchutsHV16}, etc. We refer to the surveys~\cite{SteffenHM11,Vaandrager17} for a comprehensive introduction.

In recent decades, timed systems have received extensive attention. As we know, timing constraints play a key role in the correctness and safety of timed systems. Since finite-state automata are unable to describe the timing behaviors, this yields a fundamental difference and makes learning formal models of timed systems a challenging yet interesting problem.

In general, for timed systems, automata learning requires learning a timed model from either passive observations or active tests of the system, consisting of a collection of time-event sequences. The learned model should describe these timing behaviors correctly. Timed automata~\cite{AlurD94}, extending finite-state automata with clock variables, is one of the most popular formal models of timed systems. However, there are several obstacles to active learning of timed automata. Since the transitions of timed automata contain both timing constraints that test the values of clocks and resets that update the clocks, from the view of constructing a timed automaton, we need to determine (i) the number of clocks, (ii) the reset information, and (iii) the timing constraints, none of which are directly observable from time-event sequences. The existing works consider timed automata with different restrictions. Recently, an active learning algorithm of deterministic timed automata is proposed in~\cite{waga2023active}. However, the number of locations in the learned automaton are highly increased compared to the original automaton, thus lacking interpretations. More introduction can be found in Section~\ref{sec:relatedwork}.

In this paper, following the idea in~\cite{AnCZZZ20} on active learning of deterministic one-clock timed automata, we extend the active learning method for deterministic timed automata (DTA) with multi-clocks under continuous-time semantics. Suppose that the number of clocks in the target DTA $\mathcal{A}$ is known before learning, i.e. we know that $\mathcal{A}$ has $m$ clocks, our learning method can terminate and return an exact DTA $\mathcal{H}$ equipped with $m$ clocks, i.e. the recognized timed languages of $\mathcal{A}$ and $\mathcal{H}$ are equivalent.

The basic idea of our learning method is as follows. We define the reset-clocked languages of timed automata and present an equivalence relation on the reset-clocked languages. Based on the equivalence relation, we show that given a DTA, its reset-clocked language has a finite number of equivalent classes. 
Then, we prove that timed languages for two DTAs are equivalent if the reset-clocked languages are equivalent, which reduces the problem of learning the timed language of a DTA to that of learning a corresponding reset-clocked language. 

Since a reset-clocked language includes the reset information which is not observable,
we first consider a simple situation such that the learner learns from a \emph{powerful teacher} who can answer three kinds of queries. In addition to answering membership queries and equivalence queries, the powerful teacher can provide the reset information on the transitions of target DTA by answering \emph{reset information queries}. By defining the reset information query, we formalize the ``powerful'' teacher situation considered in~\cite{AnCZZZ20}. Following $L^*$ framework, we design the timed observation table to collect all membership queries and provide corresponding table operations. Compared to~\cite{AnCZZZ20}, we need to consider the clock regions due to the multiple clocks instead of clock valuations of a single clock. Moreover, we record the reset information of every clock in the table. When transforming an observation table into a hypothesis DTA, a two-step construction is proposed. We first transform the table into a DFA and then we present a partition function for recovering the timing constraints of each clock on each transition from a list of clock valuations, and thus get a DTA as the hypothesis in the following equivalence query.

After presenting learning DTA from a powerful teacher, we extend the learning algorithm to the situation of learning from a \emph{normal teacher} who can only answer membership query and equivalence query. The main difference is that the learner now needs to guess the reset information instead of making reset information queries. By guessing the reset information in the observation table, every resulted possible table instance can be handled by the method in the powerful teacher situation. However, due to these guesses, the number of table instances is increased exponentially.

For both algorithms, we prove the termination and the correctness respectively. Moreover, we give the complexity analysis in the numbers of queries the learner needs to make. 

In summary, our main contributions are as follows.
\begin{itemize}
\item An equivalence relation over reset-clocked languages. 
\item Two active learning algorithms for DTA with multi-clocks. One is learning from the powerful teacher who can answer reset information queries in addition, and the other is learning from the normal teacher who can only answer membership queries and equivalence queries.
\item A preliminary implementation of the proposed method.
\end{itemize}



\paragraph{Structure} In what follows, Section~\ref{sec:preliminaries} recalls the background on timed automata. The equivalence relation over reset-clocked languages is presented in Section~\ref{sec:myhill}. We then present the learning algorithms in the powerful teacher situation and in the normal teacher situation in Section~\ref{sec:powerfulteacher} and Section~\ref{sec:normalteacher}, respectively. Section~\ref{sec:experiment} reports the experimental results.
Finally, Section~\ref{sec:relatedwork} briefly summarizes the related work and Section~\ref{sec:conclusion} concludes the paper.

All omitted proofs of the theorems and
lemmas in this paper can be found in Appendix~\ref{appendix:proofs}.

\section{Preliminaries}\label{sec:preliminaries}
Let $\mathbb{R}_{\geq 0}$ and $\mathbb{N}$ be the set of non-negative reals and natural numbers, respectively. $\mathbb{B}=\{\top,\bot\}$ represents the Boolean set, where $\top$ is true and $\bot$ is false. Let $\Sigma$ be the fixed alphabet.

\subsection{Timed automata} \label{sbsc:timedautomata}
A \emph{timed word} is a finite sequence $\omega=(\sigma_1,t_1)(\sigma_2,t_2)\cdots(\sigma_n,t_n)$ $ \in (\Sigma\times\mathbb{R}_{\geq 0})^*$, where $t_i$ represents the delay time length before taking action $\sigma_i$ for all $1\leq i\leq n$. An equivalent definition is based on timestamps, i.e. $\omega=(\sigma_1,\tau_1)(\sigma_2,\tau_2)\cdots(\sigma_n,\tau_n)$
where $\tau_k=\sum_{i=1}^{k}{t_i}$ for all $1\leq k \leq n$. We also call such timed words as \emph{delay-timed words} and use $\epsilon$ to denote the special empty word. A \emph{timed language} $\mathcal{L}$ is a set of timed words, i.e. $\mathcal{L}\subseteq (\Sigma\times\mathbb{R}_{\geq 0})^*$.

\emph{Timed Automata}~\cite{AlurD94} is a kind of finite automata equipped with a set of real-valued clocks. Let $\mathcal{C}$ be the set of clock variables. A \emph{clock constraint} $\phi$ is a conjunctive formula of atomic constraints of the form $c\sim n$, for $c\in \mathcal{C}$ and $n\in \mathbb{N}$, where ${\sim}\in\{\leq,\textless,\geq,\textgreater,=\}$. We use $\Phi(\mathcal{C})$ to denote the set of clock constraints. A \emph{clock valuation} $\nu: \mathcal{C}\rightarrow\mathbb{R}_{\geq 0}$ is a function assigning a non-negative real value to each clock $c\in \mathcal{C}$. We write $\nu\in\phi$ if the clock valuation $\nu$ \emph{satisfies} the clock constraint $\phi$, i.e. $\phi$ evaluates to true using the values given by $\nu$. For $d\in\mathbb{R}_{\geq 0}$, let $\nu+d$ denote the clock valuation which maps every clock $c\in\mathcal{C}$ to the value $\nu(c)+d$, and for a set $\mathcal{B}\subseteq\mathcal{C}$, let $[\mathcal{B}\rightarrow 0]\nu$ denote the clock valuation which resets all clock variables in $\mathcal{B}$ to $0$ and agrees with $\nu$ for other clocks in $\mathcal{C}\backslash\mathcal{B}$.

\begin{definition}[Timed automata~\cite{AlurD94}]
  A timed automaton is a tuple $\mathcal{A}=(\Sigma, L, l_0, F,$ $ \mathcal{C}, \Delta)$, where 
\begin{itemize}
    \item $\Sigma$ is the alphabet;
    \item $L$ is a finite set of locations;
    \item $l_0$ is the initial location;
    \item $F\subseteq L$ is a set of accepting locations;
    \item $\mathcal{C}$ is the set of clocks;
    \item $\Delta \subseteq L\times\Sigma\times\Phi(\mathcal{C})\times 2^\mathcal{C}\times L$ is a finite set of transitions.
\end{itemize}
\end{definition}

A transition $\delta = (l, \sigma, \phi, \mathcal{B}, l')$ in $\Delta$ represents a jump from the source location $l$ to the target location $l'$ by performing an action $\sigma \in \Sigma$ if the constraint $\phi \in \Phi(\mathcal{C})$ is satisfied by current clock valuation. The set $\mathcal{B}\subseteq\mathcal{C}$ gives the clocks to be reset with the transition. 
We call \emph{guards} such transition constraints.

A \emph{state} $s$ of $\mathcal{A}$ is a pair $(l,\nu)$, where $l \in L$ is a location and $\nu$ is a clock valuation. A \emph{run} $\rho$ of $\mathcal{A}$ over a timed word $\omega=(\sigma_1,t_1)(\sigma_2,t_2)\cdots(\sigma_n,t_n)$ is a sequence $\rho = (l_{0}, \nu_{0}) \xrightarrow{t_{1}, \sigma_{1}} (l_{1}, \nu_{1})$ $\xrightarrow{t_{2}, \sigma_{2}} \cdots \xrightarrow{t_{n}, \sigma_{n}} (l_{n}, \nu_{n})$, satisfying the requirements: (1) $l_0$ is the initial location and $\nu_{0}(c)=0$ for each clock $c\in\mathcal{C}$; (2) for all $1\leq i\leq n$, there is a transition $(l_{i-1},\sigma_i,\phi_i,\mathcal{B}_i,l_i)$ such that $(\nu_{i-1}+t_i)\in\phi_i$ and $\nu_{i}=[\mathcal{B}_i\rightarrow 0](\nu_{i-1}+t_i)$. The run $\rho$ is an \emph{accepting} run if $l_{n} \in F$. 
For a run $\rho$ over a timed word $\omega$, we define the \emph{\textit{trace}} of $\rho$ as the corresponding timed word $\textit{trace}(\rho)=\omega=(\sigma_1,t_1)(\sigma_2,t_2)\cdots(\sigma_n,t_n)\in (\Sigma\times\mathbb{R}_{\geq 0})^*$ if $\rho\neq(l_{0}, \nu_{0})$ or the empty word $\textit{trace}(\rho)=\epsilon$ if $\rho=(l_{0}, \nu_{0})$. 
Given a timed automaton $\mathcal{A}$, we define its (recognized) timed language $\mathcal{L}(\mathcal{A})=\{\textit{trace}(\rho) \, \vert \, \rho \text{ is an accepting run}$ $\text{of } \mathcal{A}\}$. We say two timed automata $\mathcal{A}_1$ and $\mathcal{A}_2$ are \emph{equivalent} if and only if $\mathcal{L}(\mathcal{A}_1)=\mathcal{L}(\mathcal{A}_2)$.

Given a timed word $\omega=(\sigma_1,t_1)(\sigma_2,t_2)\cdots(\sigma_n,t_n)$ and a timed automaton $\mathcal{A}$, if there exists a run $\rho$, then by recording the reset information along the run, we get a \emph{reset-delay-timed word} corresponding to $\omega$, denoted by $\omega_{r}=\textit{trace}_{r}(\rho)=(\sigma_{1}, t_{1}, \mathbf{b}_{1})(\sigma_{2}, t_{2},$ $ \mathbf{b}_{2}) \cdots (\sigma_{n}, t_{n}, \mathbf{b}_{n})$, where $\mathbf{b}_{i}\in\mathbb{B}^{|\mathcal{C}|}$ is a $|\mathcal{C}|$-tuple recording whether the corresponding transition resets the clocks or not when taking the timed action $(\sigma_i,t_i)$. We have $\mathbf{b}_{i,j}=\top$ if the $j$-th clock $c_j$ is in $\mathcal{B}_i$, otherwise $\mathbf{b}_{i,j}=\bot$, for all $c_j\in\mathcal{C}$ and $1\leq j \leq |\mathcal{C}|$. For example, if there are $4$ clocks $c_1,c_2,c_3,c_4$, i.e., $|\mathcal{C}|=4$, and there is a transition $(l_{i-1},\sigma_i,\phi_i,\{c_2,c_3\},l_i)$ taking the timed action $(\sigma_i,t_i)$, then we have $\mathbf{b}_i=(\bot,\top,\top,\bot)$. Similarly, the \emph{(recognized) reset-timed language} $\mathcal{L}_{r}(\mathcal{A})=\{\textit{trace}_{r}(\rho) \, \vert \,  \rho$ is an accepting run of $\mathcal{A}\}$. 

Additionally, based on the run $\rho$, we introduce the \emph{clocked word} $\gamma=(\sigma_1,\mathbf{v}_1)$ $(\sigma_2,\mathbf{v}_2)\cdots(\sigma_n,\mathbf{v}_n)$, where $\mathbf{v}_i\in\mathbb{R}_{\geq 0}^{|\mathcal{C}|}$ records the clock value for each clock when taking action $\sigma_i$, i.e. $\mathbf{v}_{i,j}=\nu_{i-1}(c_j)+t_i$ for all $c_j\in\mathcal{C}$ and $1\leq j \leq |\mathcal{C}|$. After recording the reset information along the run $\rho$, we get the \emph{reset-clocked word} $\gamma_r=(\sigma_1,\mathbf{v}_1,\mathbf{b}_1)(\sigma_2,\mathbf{v}_2,\mathbf{b}_2)\cdots(\sigma_n,\mathbf{v}_n,$ $\mathbf{b}_n)$. 

For a reset-clocked word $\gamma_r$, we define two functions: $vw(\gamma_r)$ returns the clocked word $\gamma=(\sigma_1,\mathbf{v}_1)$ $(\sigma_2,\mathbf{v}_2)\cdots(\sigma_n,$ $\mathbf{v}_n)$, and $resets(\gamma_r)=\mathbf{b}_1, \mathbf{b}_2,\cdots, \mathbf{b}_n$ extract the reset information of $\gamma_r$, which is a sequence from $\mathbf{b}_1$ to $\mathbf{b}_n$.

We denote by $\preceq$ the \emph{lexicographic order} on the product of $|\mathcal{C}|$ clocks. 
Thus, $\mathbf{v}_1\preceq\mathbf{v}_2$ means that there exists $1\leq j \leq |\mathcal{C}|$ such that $\mathbf{v}_{1,j}<\mathbf{v}_{2,j}$ and $\mathbf{v}_{1,i}=\mathbf{v}_{2,i}$ for all $1\leq i<j$, or specially $\mathbf{v}_1=\mathbf{v}_2$.

Given a system modelled by a timed automaton, a (delay) timed word can be regarded as a system behavior from the view of the global clock outside of the system. However, it is difficult to reflect the changing of the internal logical clock variables $\mathcal{C}$ if the system is treated as a black box. It is the main challenge for learning the transition constraints which all correspond to the clock valuations rather than the global clock. While clocked words contain clock valuations, they can be regarded as system behaviors from the view of the internal logical clock variables and are more suitable for learning transition constraints.

Moreover, if we know the reset information $\mathbf{b}_i$ along one run $\rho$ over the timed word $\omega=(\sigma_1,t_1)(\sigma_2,t_2)\cdots (\sigma_n,t_n)$, \footnote{Note that a timed word $\omega$ may lead to several runs in a timed automaton, while for each run $\rho$, the reset information is determined. Thus one run corresponds to only one clocked word of $\omega$.}
we can transform $\omega$ to a clocked word $\gamma=(\sigma_{1}, \mathbf{v}_1)(\sigma_{2}, \mathbf{v}_2)\cdots (\sigma_{n}, \mathbf{v}_n)$ by taking
\begin{equation}
\label{eq:transform}
    \mathbf{v}_{i,j} = \begin{cases}
	t_{i} , & \text{if}\ \  i=1\ \ \text{or}\ \ \mathbf{b}_{i-1,j}=\top \text{ for all } 2\leq i \leq n; \\
	\mathbf{v}_{i-1,j}+t_{i}, & \text{otherwise}.
	\end{cases}
\end{equation}
where $1\leq j \leq |\mathcal{C}|$. Here we define two functions $\Gamma_{\rho}$ and $\pi_{\rho}$ such that $\Gamma_{\rho}$ maps a (reset-)delay-timed word to the (reset-)clocked counterpart and $\pi_{\rho}$ maps a clocked word $\gamma$ to its reset-clocked counterpart $\gamma_r$ according to a run $\rho$ respectively. When the run $\rho$ is determined, we omit the subscript `$\rho$' and use $\Gamma$ and $\pi$ directly. Given a timed automaton $\mathcal{A}$, the \emph{recognized clocked language} of $\mathcal{A}$ is given as $\mathscr{L}(\mathcal{A})=\{\Gamma_{\rho}(\textit{trace}(\rho))\, \vert \, \rho$ is an accepting run of $\mathcal{A}\}$ and the \emph{recognized reset-clocked language} of $\mathcal{A}$ defined as $\mathscr{L}_{r}(\mathcal{A})=\{\Gamma_{\rho}(\textit{trace}_{r}(\rho))\, \vert \, \rho$ is an accepting run of $\mathcal{A}\}$. 

Given a timed automaton $\mathcal{A}$ with clocks $\mathcal{C}$, a clocked word $\gamma\in(\Sigma\times\mathbb{R}_{\geq 0}^{|\mathcal{C}|})^*$ is a \emph{valid} clocked word of $\mathcal{A}$ if there exists at least one run $\rho$ corresponding to $\gamma$, i.e. $\gamma=\Gamma_{\rho}(\textit{trace}(\rho))$. Similarly, a reset-clocked word $\gamma_r\in(\Sigma\times\mathbb{R}_{\geq 0}^{|\mathcal{C}|}\times\mathbb{B}^{|\mathcal{C}|})^*$ is a \emph{valid} reset-clocked word of $\mathcal{A}$ if there exists a run $\rho$ corresponding to $\gamma_r$, i.e. $\gamma_r=\Gamma_{\rho} (\textit{trace}_{r}(\rho))=\pi_{\rho}(\Gamma_{\rho}(\textit{trace}(\rho)))$. Otherwise, it is \emph{invalid} w.r.t $\mathcal{A}$. Specially, we say a reset-clocked word is \emph{doomed} if there is no timed automaton having a run corresponding to it. That means it cannot be transformed into a timed word according to Equation~\eqref{eq:transform}.

\begin{definition}[Deterministic timed automata]
A timed automaton $\mathcal{A}$ is a deterministic timed automaton (DTA) if and only if there is at most one run $\rho$ for any timed word $\omega$.
\end{definition}

It means that, for all $l \in L$ and $\sigma\in\Sigma$, for every pair of transitions of the form $(l,\sigma,\phi_{1},-,-)$ and 
$(l,\sigma,\phi_{2},-,-)$ in $\Delta$, the clock constraints $\phi_{1}$ and $\phi_{2}$ are mutually exclusive. 

A \emph{complete deterministic timed automaton} (CTA) is that for any given timed word $\omega$, there is exactly one run $\rho$. 
Given a DTA, it can be transformed into a CTA accepting the same timed language in three steps: (1) augment $L$ with a ``sink” location which is not an accepting location; (2) for every $l\in L$ and $\sigma\in\Sigma$, if there is no outgoing transition from $l$ labelled with $\sigma$, introduce a transition resetting all clocks from $l$ to ``sink" with label $\sigma$ and guards $c\geq 0$ for each clock $c\in \mathcal{C}$; (3) otherwise, let $\textit{Compl}_{l,\sigma}$ be the subset of $\mathbb{R}_{\geq0}^{|\mathcal{C}|}$ not covered by the guards of transitions from $l$ with label $\sigma$. Write $\textit{Compl}_{l,\sigma}$ as a union of guards $I_{1}, \cdots, I_{k}$ in a minimal way, then introduce a transition resetting all clocks from $l$ to ``sink" with label $\sigma$ and guard $I_{j}$ for each $1\leq j\leq k$. Fig.~\ref{fig:timedautomata} depicts the transformation of DTA $\mathcal{A}$ with $\Sigma=\{a,b\}$ and $\mathcal{C}=\{c_1,c_2\}$ on the left into the CTA on the right by the above operation. 
The detailed description can be found in Appendix~\ref{appendix:transitionCTA}.

\begin{figure}
\begin{center}
\begin{minipage}[b]{0.49\textwidth}
\begin{center}
\resizebox{!}{0.35\textwidth}{
\begin{tikzpicture}[scale=0.65, ->, >=stealth', shorten >=1pt, auto, node distance=2cm, semithick, every node/.style={scale=0.8}]
        \node[initial, state]  (0) {$l_0$};
        \node[accepting, state](1) at (0,-3) {$l_1$};
        \path  (0) edge node[right, align=center] {$a,$ \\ $c_{1}> 1\wedge c_{2} > 1,$\\ $\{c_2\}$} (1)
        
        (1) edge[in= 240, out=120, color=black] node[left, black, align=center] {$a, 0 \leq c_{1} < 3\wedge c_{2} > 1,$ \\ $\{c_1,c_2\}$} (0)
        
        (1) edge[loop below] node[left, align=center] {$b,$ \\ $ c_{1}\geq 0\wedge 0\leq c_{2}< 1,$\\ $\{c_2\}$} (1);
\end{tikzpicture}

\begin{tikzpicture}[scale=0.65, ->, >=stealth', shorten >=1pt, auto, node distance=2cm, semithick, every node/.style={scale=.8}]
        \node[initial, state]  (0) {$l_0$};
        \node[accepting, state](1) at (0,-3) {$l_1$};
         \node[state](2) at (5.5,-1.5)  {$l_2$};
         
        \path  (0) edge node[right, align=center] {$a,$ \\$c_{1}> 1\wedge c_{2}> 1,$\\ $\{c_2\}$} (1)
        
        (1) edge[in= 240, out=120, color=black] node[left, black, align=center] {$a, 0 \leq c_{1} < 3\wedge c_{2} > 1,$ \\ $\{c_1,c_2\}$} (0)
        
        (1) edge[loop below] node[left, align=center] {$b,$ \\$c_{1}\geq 0 \wedge 0\leq c_{2}< 1,$\\ $\{c_2\}$} (1)
         
        (0) edge[in= 110, out=50, color=red] node[above, sloped, black] {$a, c_{1}> 1\wedge 0 \leq c_{2}\leq 1, \{c_1,c_2\}$} (2)
         
         (0) edge[in= 140, out=10, color=red] node[above, sloped, black] {$a, 0\leq c_{1}\leq 1\wedge c_{2}\geq 0,\{c_1,c_2\}$} (2)
         
         (0) edge[color=blue] node[above, sloped, black] {$b, c_{1}\geq 0\wedge c_{2}\geq 0, \{c_1,c_2\}$} (2)
         
         (2) edge[in= 40, out=10,loop, color=blue] node[right, black, align=center, pos=0.7] {$a, c_{1}\geq 0\wedge c_{2}\geq 0,$ \\$\{c_1,c_2\}$} (2)
     
     (2) edge[in= 340, out=310,loop, color=blue] node[right, black, align=center, pos=0.7] {$b, c_{1}\geq 0\wedge c_{2}\geq 0,$ \\$\{c_1,c_2\}$} (2)

          (1) edge[in= 240, out=340, color=red] node[above, sloped, black] {$b,  c_{1}\geq 0\wedge c_{2}\geq 1, \{c_1,c_2\}$} (2)
          
          (1) edge[in= 220, out=0, color=red] node[above, sloped, black] {$a, c_{1} \geq 3 \wedge c_{2}\geq 0, \{c_1,c_2\}$} (2)
          
          (1) edge[color=red] node[above, sloped, black] {$a, 0\leq c_{1} < 3\wedge c_{2}\leq 1, \{c_1,c_2\}$} (2);
\end{tikzpicture}
}
\end{center}
\end{minipage}
\end{center}
\caption{A DTA $\mathcal{A}$ (left) and its corresponding CTA (right). The initial location $l_0$ is indicated by `start’ and the accepting location $l_1$ is doubly cycled.}
\label{fig:timedautomata}
\end{figure}

Since every DTA can be transformed into a CTA, in order to present our learning method clearly, we assume that we work with CTA in this paper. 

According to the definition, a timed automaton has infinite states since the clock values are in $\mathbb{R}_{\geq 0}$. In the theory of timed automata, \emph{clock region} is an abstraction used to form a finite partition of the state space. It will also be used in proving the termination of our learning algorithm in this paper. Here we review the definitions of regions and symbolic states of timed automata over regions introduced in~\cite{AlurD94}.

\begin{definition}[Region equivalence~\cite{AlurD94}]
Given a timed automaton $\mathcal{A}$, let $\kappa$ be a function, called a clock ceiling, mapping each clock $c\in\mathcal{C}$ to the largest integer appearing in the guards over $c$. For a real number $d$, let $\textit{frac}(d)$ denote the fractional part of $d$, and $\lfloor d\rfloor$ denote its integer part. Two clock valuations $\nu$, $\nu'$ are region-equivalent, denoted by $\nu \sim \nu'$, 
iff all the following conditions hold:
\begin{itemize}
    \item for all $c\in\mathcal{C}$, either $\lfloor\nu(c)\rfloor=\lfloor\nu'(c)\rfloor$ or both $\nu(c)\textgreater\kappa(c)$ and $\nu'(c)\textgreater\kappa(c)$.
    \item for all $c\in\mathcal{C}$, if $\nu(c)\leq\kappa(c)$ then $\textit{frac}(\nu(c))=0$ iff $\textit{frac}(\nu'(c))=0$ and
    \item for all $c_{i},c_{j}\in\mathcal{C}$, if $\nu(c_{i})\leq\kappa(c_{i})$ and $\nu(c_{j})\leq\kappa(c_{j})$ then $\textit{frac}(\nu(c_{i})) \leq \textit{frac}(\nu(c_{j}))$ iff $\textit{frac}(\nu'(c_{i}))\leq
    \textit{frac}(\nu'(c_{j}))$.
\end{itemize}
A clock region for $\mathcal{A}$ is an equivalence class induced by the equivalence relation $\sim$. 
For a clock valuation $\nu$, we denote by $\llbracket \nu \rrbracket$ the region containing it.
\end{definition}

Given a timed automaton $\mathcal{A}$, there is only a finite number of regions. 
We denote the set of regions by $\mathcal{R}$. 
According to the definition of timed automata, given a transition constraint $\phi$ of $\mathcal{A}$ if $\nu\sim \nu'$ then $\nu\in\phi$ if and only if $\nu'\in\phi$. The number of clock regions $|\mathcal{R}|$ is bounded by $|\mathcal{C}|!\cdot 2^{|\mathcal{C}|}\cdot\prod_{c\in\mathcal{C}}(2\kappa(c)+2)$~\cite{AlurD94}.

\begin{definition}[Symbolic state~\cite{AlurD94}]\label{def:symbolicstate}
    A symbolic state of timed automata $\mathcal{A}=(L, l_0, F, \mathcal{C},\Sigma, $ $\Delta)$ is a pair $(l,\llbracket\nu\rrbracket)$, where $l\in L$ and $\llbracket\nu\rrbracket$ is a clock region.
\end{definition}

Therefore, the number of the symbolic states of $\mathcal{A}$ is bounded by $|L|\cdot |\mathcal{C}|!\cdot 2^{|\mathcal{C}|} \cdot\prod_{c\in\mathcal{C}}(2\kappa(c)+2)$.

\subsection{Myhill-Nerode Theorem and $L^*$ Algorithm}\label{sbsc:lstar}
In this section, we review the famous active learning framework for deterministic finite automata (DFA), named $L^*$ algorithm~\cite{Angluin87}. Before that, we recall the classic Myhill-Nerode Theorem on which it is based.

\newtheorem{thm}{Theorem}

\begin{definition}[Right congruence relation $\sim_{\mathcal{L}}$]\label{def:Rightcongruence}
Given a language $\mathcal{L}$ over $\Sigma$, for $u,u'\in\Sigma^*$, we say $u$ and $u'$ are equivalent under $\mathcal{L}$, denoted by $u\sim_{\mathcal{L}}u'$, iff $\forall v\in\Sigma^*, uv\in \mathcal{L} \Longleftrightarrow u'v\in \mathcal{L}$.
\end{definition}

\begin{theorem}[Myhill-Nerode Theorem]
A language $\mathcal{L}\subseteq \Sigma^*$ is regular if and only if $\sim_{\mathcal{L}}$ has a finite number of equivalence classes, and furthermore, this number is equal to the number of states in the minimal DFA.
\end{theorem}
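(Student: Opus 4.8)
The plan is to prove the two directions of the equivalence separately and then read off the minimality claim from the constructions used in each direction. Throughout I write $[u]$ for the $\sim_{\mathcal{L}}$-class of a word $u\in\Sigma^*$, and for a DFA $M=(Q,\Sigma,\delta,q_0,F)$ I write $\hat{\delta}(q_0,u)$ for the state reached from $q_0$ after reading $u$.

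First I would show that regularity implies finite index. Assuming $\mathcal{L}$ is regular, I fix a DFA $M$ with $\mathcal{L}(M)=\mathcal{L}$. The key observation is that the map $u\mapsto\hat{\delta}(q_0,u)$ \emph{refines} $\sim_{\mathcal{L}}$: if $\hat{\delta}(q_0,u)=\hat{\delta}(q_0,u')$, then for every $v\in\Sigma^*$ we have $\hat{\delta}(q_0,uv)=\hat{\delta}(q_0,u'v)$, hence $uv\in\mathcal{L}\iff u'v\in\mathcal{L}$, i.e.\ $u\sim_{\mathcal{L}}u'$. Consequently the number of equivalence classes of $\sim_{\mathcal{L}}$ is bounded above by $|Q|$ and is therefore finite; more precisely it is at most the size of \emph{any} DFA recognizing $\mathcal{L}$, a fact I will reuse for the counting argument.

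For the converse I would build the canonical automaton directly from the classes. Define $M_{\mathcal{L}}$ with state set $\{[u]\mid u\in\Sigma^*\}$, initial state $[\epsilon]$, transition $\delta([u],\sigma)=[u\sigma]$, and accepting set $\{[u]\mid u\in\mathcal{L}\}$. The two well-definedness checks are the crux of this direction. For the transition: if $u\sim_{\mathcal{L}}u'$, then instantiating $v=\sigma w$ in Definition~\ref{def:Rightcongruence} yields $u\sigma\sim_{\mathcal{L}}u'\sigma$, so $[u\sigma]$ does not depend on the representative, i.e.\ $\sim_{\mathcal{L}}$ is a right congruence. For the accepting set: taking $v=\epsilon$ shows $u\in\mathcal{L}\iff u'\in\mathcal{L}$ whenever $u\sim_{\mathcal{L}}u'$. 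A short induction on $|u|$ then gives $\hat{\delta}([\epsilon],u)=[u]$, so $u$ is accepted iff $[u]$ is accepting iff $u\in\mathcal{L}$; hence $\mathcal{L}(M_{\mathcal{L}})=\mathcal{L}$. Since by hypothesis there are finitely many classes, $M_{\mathcal{L}}$ is a genuine DFA and $\mathcal{L}$ is regular.

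Finally I would combine the two constructions to obtain the count. The automaton $M_{\mathcal{L}}$ has exactly as many states as $\sim_{\mathcal{L}}$ has classes, so the minimal DFA has at most this many states; conversely the refinement argument from the first direction shows every DFA recognizing $\mathcal{L}$ has at least this many states. The minimum is therefore attained exactly, and the number of states of the minimal DFA equals the index of $\sim_{\mathcal{L}}$. I expect the main obstacle to be conceptual rather than computational: getting the well-definedness of $M_{\mathcal{L}}$ right by correctly exploiting the right-congruence property, and phrasing the minimality argument as a refinement of partitions so that it applies to an arbitrary DFA, not merely a reachable or reduced one. Everything else reduces to the single induction noted above.
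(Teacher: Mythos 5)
Your proof is correct: it is the standard argument, with the kernel of $u\mapsto\hat{\delta}(q_0,u)$ refining $\sim_{\mathcal{L}}$ to give both finiteness and the lower bound on state counts, the quotient automaton $M_{\mathcal{L}}$ well defined via the right-congruence and $v=\epsilon$ instantiations, and the two bounds combining to give the exact count for the minimal DFA. Note that the paper itself does not prove this statement at all---it is recalled in the preliminaries as the classical background on which the $L^*$ algorithm rests---so there is no paper proof to compare against; yours is the canonical textbook argument and it is complete.
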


$L^*$ algorithm gives a way to find the finite number of equivalence classes and build the minimal DFA recognizing $\mathcal{L}$. It can be described as a learning process between a learner and a teacher. 

The learner wants to learn an unknown regular language $\mathcal{L}$ from the teacher who knows the language $\mathcal{L}$ and holds two oracles to answer two kinds of queries from the learner. One kind of query is named \emph{membership query}. In such a query, the learner asks whether a word $u\in\Sigma^*$ is in the language $\mathcal{L}$. The teacher can answer ``yes ($+$)" or ``no" ($-$). The other is named \emph{equivalence query}. The learner asks whether the language $\mathcal{L}(H)$ of a hypothesis DFA $H$ is equal to $\mathcal{L}$. The teacher answers ``yes" if $H$ is correct, otherwise answers ``no" with a counterexample $w\in\Sigma^*$ to show a difference between $\mathcal{L}(H)$ and $\mathcal{L}$.

The learner maintains an \emph{observation table} $O=(U,V,f)$ to record the membership queries and query results, where $U\subseteq \Sigma^*$ is a prefix-closed set of words that identify different states in the current hypothesis DFA $H$, and $V\subseteq \Sigma^*$ is a suffix-closed set of words used to distinguish the states and $f$ maps $(U\cup U\cdot\Sigma)\cdot V$ to $\{+,-\}$. Thus we can find that such design is directed by the right congruence relation. For the rows in the table, a function $\textit{row}: U\cup U\Sigma\rightarrow (V\rightarrow \{+,-\})$ returns a list of the query results given by $\textit{row}(u)(v)=f(uv)$. The learner always wants to keep the observation table \emph{closed} and \emph{consistent} in the learning process. 
\begin{itemize}
    \item An observation table is closed if for each $u\in U$ and $\sigma\in\Sigma$, there always exists $u'\in U$ such that $\textit{row}(u')=\textit{row}(u\sigma)$.
    \item An observation table is consistent if for all elements $u,u'\in U$ such that $\textit{row}(u)=\textit{row}(u')$, then $\textit{row}(u\sigma)=\textit{row}(u'\sigma)$ for all $\sigma\in\Sigma$.
\end{itemize}

At the beginning, $U$ and $V$ only contain the empty word $\epsilon$, respectively. The learner makes membership queries by asking whether the words of the set $(U\cup U\cdot\Sigma)\cdot V$ are in $\mathcal{L}$ and collects the words and the query results in the table. If the current table is not closed, then the learner finds $u'\in U\Sigma$ such that $\textit{row}(u')\neq \textit{row}(u)$ for all $u\in U$ and moves such $u'$ from $U\Sigma$ to $U$. After that, the learner makes membership queries for every $u'\sigma v$ where $\sigma\in\Sigma$ and $v\in V$. If the table is not consistent, the learner finds $u,u'\in U$, $\sigma\in\Sigma$ and $v\in V$ such that $\textit{row}(u')=\textit{row}(u)$ and $f(u'\sigma v)\neq f(u\sigma v)$ and moves such word $\sigma v$ to $V$. After that, the learner makes membership for every $u\sigma v$ where $u\in U\cup U\Sigma$. Once the observation table $O$ is closed and consistent, the learner can construct a hypothesis DFA $H=(\Sigma_{H}, L_{H}, l_{H}^0, F_{H}, \delta_{H})$ where the alphabet $\Sigma_{H}=\Sigma$, the set of states $L_{H}=\{l_{\textit{row}(u)}\mid u\in U\}$, the initial state $ l_{H}^0=l_{\textit{row}(\epsilon)}$, the set of accepting states $F_{H}=\{l_{\textit{row}(u)} \mid u\in U \wedge f(u\cdot\epsilon)=+\}$ and the transition function $\delta_{H}(l_{\textit{row}(u)},\sigma)=l_{\textit{row}(u\sigma)}$. Subsequently, the learner makes an equivalence query. If the answer is no, the teacher gives a counterexample $w$. The learner adds every prefix of $w$ to $U$ and the corresponding membership queries and table operations are performed until the table is closed and consistent. Then a new hypothesis can be constructed from the table for an equivalence query. The learning loop terminates until the answer to an equivalence query is true and the current hypothesis $H$ is a correct DFA such that $\mathcal{L}(H)=\mathcal{L}$.

\section{Equivalence relation on reset-clocked words}\label{sec:myhill}
 
In this section, we aim to build an equivalence relation on reset-clocked words introduced in Section~\ref{sbsc:timedautomata} and prove that it has a finite number of equivalence classes. We first introduce the \emph{region word}, which is helpful in presenting the equivalence relation.

\begin{definition}[Region word]
Given a timed automaton $\mathcal{A}$, a region word $\xi$ is a finite sequence of pairs $(\sigma,\llbracket \nu \rrbracket)$, where $\sigma\in \Sigma$ is an action of $\mathcal{A}$ and $\llbracket \nu \rrbracket$ represents a region, i.e. $\llbracket \nu \rrbracket\in\mathcal{R}$.
\end{definition}

For each clocked word $\gamma$ of $\mathcal{A}$, there must be a unique region word $\xi$ corresponding to $\gamma$. 
We denote it by $\xi=\llbracket \gamma \rrbracket$.
For example, if $\gamma=(a,\{0,0\})(a,\{1.5,1.5\})$, 
then $\xi=\llbracket\gamma\rrbracket=(a,c_{1}=0\wedge c_{2}=0)(a,1< c_{1}< 2\wedge 1< c_{2}< 2\wedge c_{1}=c_{2})$.

\begin{lemma}\label{lemma:logictoresetlogic}
    Given a CTA $\mathcal{A}$, for all valid clocked words $\gamma$ and $\gamma'$ such that $ \llbracket\gamma\rrbracket=\llbracket\gamma'\rrbracket$, they have the same transition sequence of $\mathcal{A}$ and reach the same symbolic state. 
    Thus, the following two conclusions hold:
  \begin{itemize}
      \item $\gamma\in \mathscr{L}(\mathcal{A})$ iff $\gamma'\in \mathscr{L}(\mathcal{A})$;
      \item $resets(\gamma_r)=resets(\gamma_r')$, where $\gamma_r$ and $\gamma_r'$ are the reset-clocked words corresponding to $\gamma$ and $\gamma'$, respectively.
  \end{itemize}
\end{lemma}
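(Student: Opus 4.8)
The plan is to argue by induction on the common length $n$ of $\gamma$ and $\gamma'$ that the unique runs of $\mathcal{A}$ on them (unique because $\mathcal{A}$ is a CTA, and existing because both words are valid) proceed through the same locations and take the same transitions step by step, while remaining region-equivalent at every intermediate state. First I would unfold the hypothesis $\llbracket\gamma\rrbracket=\llbracket\gamma'\rrbracket$: it forces $\gamma$ and $\gamma'$ to have equal length, identical action components $\sigma_i=\sigma'_i$, and region-equivalent clock-value vectors $\mathbf{v}_i\sim\mathbf{v}'_i$ at every position $i$. Writing the two runs as $(l_0,\nu_0)\xrightarrow{t_1,\sigma_1}\cdots\xrightarrow{t_n,\sigma_n}(l_n,\nu_n)$ and $(l'_0,\nu'_0)\xrightarrow{t'_1,\sigma'_1}\cdots\xrightarrow{t'_n,\sigma'_n}(l'_n,\nu'_n)$, I recall that $\mathbf{v}_i=\nu_{i-1}+t_i$ is exactly the valuation at which the guard of the $i$-th transition is tested, and similarly for the primed run.

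The induction claim I would carry is: for every $i$, $l_i=l'_i$, the $i$-th transitions of the two runs coincide, and $\nu_i\sim\nu'_i$. The base case is immediate, since both runs start in $(l_0,\nu_0)$ with all clocks $0$. For the step, assume $l_{i-1}=l'_{i-1}$. Since $\mathbf{v}_i\sim\mathbf{v}'_i$, the region property recalled in Section~\ref{sbsc:timedautomata} (region-equivalent valuations satisfy the same guards) gives $\mathbf{v}_i\in\phi\Leftrightarrow\mathbf{v}'_i\in\phi$ for every guard $\phi$. Because $\mathcal{A}$ is complete and deterministic, at location $l_{i-1}$ with action $\sigma_i$ exactly one outgoing transition is enabled by $\mathbf{v}_i$, and by the equivalence above it is the same transition $(l_{i-1},\sigma_i,\phi_i,\mathcal{B}_i,l_i)$ enabled by $\mathbf{v}'_i$; hence $l_i=l'_i$ and the two runs reset the same set $\mathcal{B}_i$. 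It then remains only to propagate region-equivalence through the reset, i.e. to show $\nu_i=[\mathcal{B}_i\to0]\mathbf{v}_i\sim[\mathcal{B}_i\to0]\mathbf{v}'_i=\nu'_i$.

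I expect this last point --- stability of $\sim$ under clock resets --- to be the only non-bookkeeping step, and I would invoke the standard fact from~\cite{AlurD94} that region equivalence is preserved when the same set of clocks is reset to $0$: the reset clocks acquire integer and fractional part $0$ in both valuations, while the integer parts, the zero-fractionality tests, and the fractional orderings of the untouched clocks are precisely those inherited from $\mathbf{v}_i\sim\mathbf{v}'_i$. This closes the induction, establishing that $\gamma$ and $\gamma'$ share the same transition sequence and reach the identical symbolic state $(l_n,\llbracket\nu_n\rrbracket)=(l'_n,\llbracket\nu'_n\rrbracket)$. The two displayed conclusions then follow directly: membership in $\mathscr{L}(\mathcal{A})$ depends only on whether the final location lies in $F$, so $l_n=l'_n$ yields $\gamma\in\mathscr{L}(\mathcal{A})\Leftrightarrow\gamma'\in\mathscr{L}(\mathcal{A})$; and since each shared transition fixes the reset set $\mathcal{B}_i$, the reset tuples satisfy $\mathbf{b}_i=\mathbf{b}'_i$ for all $i$, whence $resets(\gamma_r)=resets(\gamma'_r)$.
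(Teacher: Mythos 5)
Your proposal is correct and follows essentially the same route as the paper's proof: unpack $\llbracket\gamma\rrbracket=\llbracket\gamma'\rrbracket$ into per-position region equivalence of the $\mathbf{v}_i$, use the fact that region-equivalent valuations satisfy the same guards together with determinism and completeness to conclude that each step triggers the identical transition, and then derive the shared resets and final symbolic state. Your explicit induction and the explicit appeal to stability of $\sim$ under resetting the same clock set merely make rigorous what the paper's proof asserts informally (``it triggers a unique transition \ldots{} which is the same as the one triggered by $(\sigma_i,\mathbf{v}_i')$'' and ``the clock valuations after passing $\gamma$ and $\gamma'$ \ldots{} must belong to the same region''), so this is the same argument, carefully written.
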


The first conclusion shows that given a region word $\xi$, all valid clocked words $\gamma$ with $\xi=\llbracket \gamma \rrbracket$ have the same accepting result over $\mathcal{A}$. The second conclusion tells that such clocked words witness the same reset information over the transitions of $\mathcal{A}$. 

\begin{definition}[Valid successor]\label{def:validsuccessor}
    Given a CTA $\mathcal{A}$, a reset-clocked word $\gamma_{r}$ and a region word $\xi$, we say a reset-clocked word $\gamma_r'$ is a valid successor of $\gamma_{r}$ corresponding to $\xi$ if it satisfies two conditions:
\begin{itemize}
    \item $\llbracket vw(\gamma_r') \rrbracket=\xi$;
    \item $\gamma_{r}\cdot\gamma_r'$ is a valid reset-clocked word of $\mathcal{A}$, i.e. there exists a run $\rho$ in $\mathcal{A}$ such that $\gamma_{r}\cdot\gamma_{r}'=\Gamma(\textit{trace}_r(\rho))$.
\end{itemize}
We denote the set of \emph{valid successors} of $\gamma_r$ corresponding to $\xi$ by $\textit{vs}_{\mathcal{A}}(\gamma_{r},\xi)$.
\end{definition}

\begin{example}
Consider the CTA $\mathcal{A}$ in Fig.~\ref{fig:timedautomata}. Given a valid reset-clocked word $\gamma_{r}=(a, \{1.1,$ $1.1\}, \{\bot,\top\})$ and a region word $\xi=(b,1<c_{1}< 2\wedge c_{2}=0)$, $\gamma_r'=(b, \{1.1,0\}, \{\bot,\top\})$ is a valid successor of $\gamma_r$ corresponding to $\xi$ since $\llbracket vw(\gamma_r')\rrbracket=\llbracket (b, \{1.1,0\})\rrbracket = \xi$ and $\gamma_r\cdot\gamma_r'$ corresponds to the run $\rho=(l_0,\{0,0\})\xrightarrow{a,1.1} (l_1,\{1.1,0\})$ $\xrightarrow{b,0}(l_1,\{1.1,0\})$. 
We denote it by $\gamma_r'\in \textit{vs}_{\mathcal{A}}(\gamma_r,\xi)$.
\end{example}

\begin{lemma}\label{lemma:alltssame}
    Given a valid reset-clocked word $\gamma_{r}$ of a CTA $\mathcal{A}$ and a region word $\xi$, for all $\gamma_r',\gamma_r''\in \textit{vs}_{\mathcal{A}}(\gamma_r,\xi)$, $\gamma_{r}\gamma_r'$ and $\gamma_{r}\gamma_r''$ witness the same transition sequence of $\mathcal{A}$ and reach the same symbolic state in the end.
\end{lemma}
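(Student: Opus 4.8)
The plan is to reduce the statement to Lemma~\ref{lemma:logictoresetlogic} by showing that the two concatenations, once projected to clocked words, carry \emph{identical} region words. Write $\gamma^{*} = vw(\gamma_{r}\gamma_{r}')$ and $\gamma^{**} = vw(\gamma_{r}\gamma_{r}'')$ for the underlying clocked words. Since $\gamma_{r}'$ and $\gamma_{r}''$ lie in $\textit{vs}_{\mathcal{A}}(\gamma_{r},\xi)$, both $\gamma_{r}\gamma_{r}'$ and $\gamma_{r}\gamma_{r}''$ are valid reset-clocked words of $\mathcal{A}$; hence by the definition of validity each admits a run $\rho$ with $\gamma_{r}\gamma_{r}' = \pi_{\rho}(\Gamma_{\rho}(\textit{trace}(\rho)))$, which in particular makes $\gamma^{*}$ and $\gamma^{**}$ \emph{valid clocked words} of $\mathcal{A}$. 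This is exactly the hypothesis required to invoke Lemma~\ref{lemma:logictoresetlogic}.

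The core step is to verify $\llbracket\gamma^{*}\rrbracket = \llbracket\gamma^{**}\rrbracket$. First I would observe that the map $vw$ distributes over concatenation, so $\gamma^{*} = vw(\gamma_{r})\,vw(\gamma_{r}')$ and $\gamma^{**} = vw(\gamma_{r})\,vw(\gamma_{r}'')$, the two words sharing the common prefix $vw(\gamma_{r})$. Next, because a region is determined solely by a single clock valuation (together with the fixed clock ceiling $\kappa$ of $\mathcal{A}$), the region-word operator $\llbracket\cdot\rrbracket$ acts \emph{pointwise}, replacing each pair $(\sigma_{i},\mathbf{v}_{i})$ by $(\sigma_{i},\llbracket\mathbf{v}_{i}\rrbracket)$ independently of the other positions. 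Consequently $\llbracket\cdot\rrbracket$ commutes with concatenation, giving $\llbracket\gamma^{*}\rrbracket = \llbracket vw(\gamma_{r})\rrbracket\cdot\llbracket vw(\gamma_{r}')\rrbracket$ and likewise for $\gamma^{**}$. The common prefix contributes $\llbracket vw(\gamma_{r})\rrbracket$ in both cases, while the first condition in the definition of $\textit{vs}_{\mathcal{A}}(\gamma_{r},\xi)$ forces $\llbracket vw(\gamma_{r}')\rrbracket = \xi = \llbracket vw(\gamma_{r}'')\rrbracket$. Therefore the two region words coincide.

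With $\llbracket\gamma^{*}\rrbracket = \llbracket\gamma^{**}\rrbracket$ established and both $\gamma^{*},\gamma^{**}$ valid, Lemma~\ref{lemma:logictoresetlogic} yields at once that $\gamma^{*}$ and $\gamma^{**}$ follow the same transition sequence of $\mathcal{A}$ and reach the same symbolic state. Since each reset-clocked word $\gamma_{r}\gamma_{r}'$ is valid, the transition sequence it witnesses is precisely that of its projection $\gamma^{*}$ (the reset bits recorded in $\gamma_{r}\gamma_{r}'$ are exactly the resets of the transitions taken along its unique run), so this transfers directly to the desired conclusion for $\gamma_{r}\gamma_{r}'$ and $\gamma_{r}\gamma_{r}''$.

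I expect no deep obstacle here: the argument is essentially a clean reduction to Lemma~\ref{lemma:logictoresetlogic}. The only point demanding care is the bookkeeping in the second paragraph---justifying that $\llbracket\cdot\rrbracket$ is pointwise and therefore compatible with concatenation, and confirming that validity of the reset-clocked concatenation transfers to validity of its projected clocked word so that the earlier lemma applies. Everything beyond that is definitional.
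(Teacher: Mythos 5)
Your proof is correct and follows essentially the same route as the paper's: both reduce the claim to Lemma~\ref{lemma:logictoresetlogic} by observing that $\gamma_{r}\gamma_r'$ and $\gamma_{r}\gamma_r''$ are valid (by Definition~\ref{def:validsuccessor}) and that their projected clocked words share the region word $\llbracket vw(\gamma_r)\rrbracket\cdot\xi$. You merely make explicit two points the paper leaves implicit --- the pointwise/concatenation compatibility of $\llbracket\cdot\rrbracket$, and the transfer of validity to the projection --- and you invoke Lemma~\ref{lemma:logictoresetlogic}'s symbolic-state conclusion directly where the paper re-derives it from the equality of resets; neither difference changes the substance of the argument.
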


\begin{lemma}\label{lemma:samesymbolicstate}
Given two valid reset-clocked words $\gamma_{r1}$ and $\gamma_{r2}$ of $\mathcal{A}$ and a region word $\xi$, if $\gamma_{r1}$ and $\gamma_{r2}$ reach the same symbolic state of $\mathcal{A}$, then for all $\gamma_{r1}'\in\textit{vs}_{\mathcal{A}}(\gamma_{r1},\xi)$ and $\gamma_{r2}'\in\textit{vs}_{\mathcal{A}}(\gamma_{r2},\xi)$, 
the following two conditions hold:
\begin{itemize}
    \item $resets(\gamma_{r1}')=resets(\gamma_{r2}')$;
    \item $\gamma_{r1}\gamma_{r1}'$ and $\gamma_{r2}\gamma_{r2}'$ reach some same symbolic state.
\end{itemize}
\end{lemma}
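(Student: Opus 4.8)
The plan is to prove both claims simultaneously by induction on the length of the region word $\xi$, showing that $\gamma_{r1}'$ and $\gamma_{r2}'$ drive $\mathcal{A}$ along identical transition sequences when appended to $\gamma_{r1}$ and $\gamma_{r2}$, respectively. The starting point is the hypothesis that $\gamma_{r1}$ and $\gamma_{r2}$ reach a common symbolic state $(l,\llbracket\nu\rrbracket)$; in particular they end in the same location $l$, even though the concrete clock valuations reached, say $\mu_1$ and $\mu_2$, may differ while still being region-equivalent.

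For the inductive step I would fix a position $i$ and assume that the first $i$ steps of $\gamma_{r1}'$ and $\gamma_{r2}'$ have forced the same transition sequence, so that after these $i$ steps both extended runs sit at a common location $l_i$. Let the $(i{+}1)$-th pair of $\xi$ be $(\sigma_{i+1},R_{i+1})$. By the first defining condition of a valid successor, $\llbracket vw(\gamma_{r1}')\rrbracket=\llbracket vw(\gamma_{r2}')\rrbracket=\xi$, so the clock valuations $\mathbf{v}_{i+1}^{(1)}$ and $\mathbf{v}_{i+1}^{(2)}$ recorded at step $i{+}1$ both lie in the region $R_{i+1}$ and are therefore region-equivalent. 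Since $\mathcal{A}$ is a CTA, from location $l_i$ under action $\sigma_{i+1}$ exactly one transition has its guard satisfied; because guards respect region equivalence (a valuation satisfies a guard iff every region-equivalent valuation does), that single transition is the one whose guard contains $R_{i+1}$, and it is fired by both runs. Hence both fire the same transition $(l_i,\sigma_{i+1},\phi,\mathcal{B},l_{i+1})$, recording the same reset tuple $\mathbf{b}_{i+1}$ and moving to the same location $l_{i+1}$, which closes the induction.

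The reset claim $resets(\gamma_{r1}')=resets(\gamma_{r2}')$ then falls out immediately: since $\gamma_{r1}\gamma_{r1}'$ and $\gamma_{r2}\gamma_{r2}'$ are valid reset-clocked words, the reset tuple recorded at each step is exactly the reset set of the transition fired there, and we have just shown these transitions coincide step by step. For the symbolic-state claim I would read off the final step: both words end in the same location $l_{end}$, and their final clock valuations are $[\mathcal{B}_{end}\rightarrow 0]\mathbf{v}_{end}^{(1)}$ and $[\mathcal{B}_{end}\rightarrow 0]\mathbf{v}_{end}^{(2)}$, obtained by applying the same reset $\mathcal{B}_{end}$ to two region-equivalent valuations $\mathbf{v}_{end}^{(1)}\sim\mathbf{v}_{end}^{(2)}$ (both in the last region of $\xi$). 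Invoking the standard fact that clock reset preserves region equivalence, these two valuations are region-equivalent, so $\gamma_{r1}\gamma_{r1}'$ and $\gamma_{r2}\gamma_{r2}'$ reach the same symbolic state.

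The main obstacle to guard against is the temptation to track the concrete intermediate valuations after each prefix, since they genuinely differ. The key realization that makes the argument go through is that the next transition is selected by the region of the recorded clock value $\mathbf{v}_{i+1}$ — which is pinned to $R_{i+1}$ by the valid-successor definition regardless of the starting valuation — together with the current location, so one never needs the intermediate valuations to agree, only their locations and the regions prescribed by $\xi$. Feasibility questions (whether the prescribed regions are actually reachable from $\mu_1$ and from $\mu_2$) never arise, because the lemma presupposes that $\gamma_{r1}'$ and $\gamma_{r2}'$ are already valid successors; no existence claim is made. This mirrors the reasoning behind Lemma~\ref{lemma:alltssame}, which is the special case $\gamma_{r1}=\gamma_{r2}$, the only genuinely new ingredient being that the shared starting \emph{symbolic state}, rather than a shared prefix, supplies the common initial location and region.
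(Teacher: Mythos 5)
Your proof is correct and follows essentially the same route as the paper's: the paper likewise starts from the common location supplied by the shared symbolic state and concludes that $\gamma_{r1}'$ and $\gamma_{r2}'$ trigger the same transition sequence, doing so by invoking Lemma~\ref{lemma:logictoresetlogic} on the suffixes (whose underlying argument is exactly the determinism-plus-completeness-plus-region-invariance induction you spell out), from which equal resets and a common final region follow. The only cosmetic differences are that the paper first splits on whether the two valid-successor sets are both empty or both non-empty (citing the time-successor property of regions), a step you legitimately bypass by noting the statement is vacuous when either set is empty, and that you make explicit the standard fact that resetting the same clocks preserves region equivalence.
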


Given a DTA $\mathcal{A}$, let $\Sigma$ be the finite alphabet and $\mathcal{R}$ the finite set of regions of $\mathcal{A}$. We let $\bm{\Sigma}=\Sigma\times\mathbb{R}_{\geq 0}^{|\mathcal{C}|}$ be the infinite set of clocked actions, $\bm{\Sigma_{G}}=\Sigma\times\mathcal{R}$ be the finite set of region actions, and $\bm{\Sigma_{r}}=\Sigma\times\mathbb{R}_{\geq 0}^{|\mathcal{C}|}\times\mathbb{B}^{|\mathcal{C}|}$ be the infinite set of reset-clocked actions. 
An equivalence relation over the reset-clocked language $\mathscr{L}_{r}(\mathcal{A})$ is defined as follows.

\begin{definition}[Equivalence relation]\label{def:equivalencerelation}
    Let $\mathscr{L}_{r}(\mathcal{A})$ be the reset-clocked language of a DTA $\mathcal{A}$,
two reset-clocked words $ \gamma_{r1}, \gamma_{r2} \in  \bm{\Sigma_{r}}^*$ are indistinguishable by $\mathscr{L}_{r}(\mathcal{A})$, denoted by $\gamma_{r1}\sim_{\mathscr{L}_{r}(\mathcal{A})}\gamma_{r2}$,
if for all $\gamma_{r1}'\in \textit{vs}_{\mathcal{A}}(\gamma_{r1},\xi)$ and $\gamma_{r2}' \in \textit{vs}_{\mathcal{A}}(\gamma_{r2},\xi)$, 
the following conditions holds for all $\xi\in \bm{\Sigma_{G}}^*$:
\begin{itemize}
\item $\gamma_{r1}\gamma_{r1}'\in\mathscr{L}_r(\mathcal{A})$ iff $\gamma_{r2}\gamma_{r2}'\in\mathscr{L}_r(\mathcal{A})$;
\item $resets(\gamma_{r1}')=resets(\gamma_{r2}')$.
\end{itemize}
\end{definition}

\begin{lemma}\label{lemma:finitevalid}
    If two valid reset-clocked words $\gamma_{r1}, \gamma_{r2}$ of $\mathcal{A}$ reach the same symbolic state of $\mathcal{A}$, then $\gamma_{r1}\sim_{\mathscr{L}_{r}(\mathcal{A})}\gamma_{r2}$.
\end{lemma}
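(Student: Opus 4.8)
The plan is to derive the statement almost immediately from Lemma~\ref{lemma:samesymbolicstate}, which already relates valid successors emanating from a common symbolic state; the only additional ingredient needed is that membership in $\mathscr{L}_{r}(\mathcal{A})$ is determined by the location reached at the end of a run.

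Unwinding Definition~\ref{def:equivalencerelation}, I fix an arbitrary region word $\xi\in\bm{\Sigma_{G}}^*$ together with arbitrary valid successors $\gamma_{r1}'\in\textit{vs}_{\mathcal{A}}(\gamma_{r1},\xi)$ and $\gamma_{r2}'\in\textit{vs}_{\mathcal{A}}(\gamma_{r2},\xi)$, and show that the two bulleted conditions hold. If one of the valid-successor sets is empty, the corresponding universally quantified condition is vacuous; moreover, since $\gamma_{r1}$ and $\gamma_{r2}$ reach the same symbolic state of the CTA $\mathcal{A}$, the same outgoing transitions are enabled, so $\textit{vs}_{\mathcal{A}}(\gamma_{r1},\xi)$ and $\textit{vs}_{\mathcal{A}}(\gamma_{r2},\xi)$ are simultaneously empty or non-empty and no genuine case is lost.

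By hypothesis $\gamma_{r1}$ and $\gamma_{r2}$ reach the same symbolic state, so Lemma~\ref{lemma:samesymbolicstate} applies directly and yields both $resets(\gamma_{r1}')=resets(\gamma_{r2}')$ and the fact that $\gamma_{r1}\gamma_{r1}'$ and $\gamma_{r2}\gamma_{r2}'$ reach a common symbolic state $(l,\llbracket\nu\rrbracket)$. The second bullet of Definition~\ref{def:equivalencerelation} is then exactly the first consequence. For the first bullet, I observe that because $\mathcal{A}$ is a CTA each of $\gamma_{r1}\gamma_{r1}'$ and $\gamma_{r2}\gamma_{r2}'$ determines a unique run, and membership in $\mathscr{L}_{r}(\mathcal{A})$ depends only on whether the location reached by that run is accepting. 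Since the two extended words reach the same symbolic state, they reach the same location $l$, whence $\gamma_{r1}\gamma_{r1}'\in\mathscr{L}_{r}(\mathcal{A})\iff l\in F\iff\gamma_{r2}\gamma_{r2}'\in\mathscr{L}_{r}(\mathcal{A})$. As $\xi$ and the chosen successors were arbitrary, both conditions of Definition~\ref{def:equivalencerelation} hold, giving $\gamma_{r1}\sim_{\mathscr{L}_{r}(\mathcal{A})}\gamma_{r2}$.

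I expect no substantive obstacle here, since Lemma~\ref{lemma:samesymbolicstate} already performs the transition-level bookkeeping. The only points requiring care are ensuring that the quantification over (possibly empty) valid-successor sets is handled consistently on both sides, and making explicit the observation that $\mathscr{L}_{r}$-membership factors through the reached symbolic state — both of which rely on working with a CTA, whose runs are unique and total.
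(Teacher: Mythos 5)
Your proof is correct and follows essentially the same route as the paper's: both invoke Lemma~\ref{lemma:samesymbolicstate} to obtain equal reset sequences and a common symbolic state for the extended words, and then conclude the membership equivalence because acceptance depends only on the location reached. The extra care you take with empty valid-successor sets and with making explicit that $\mathscr{L}_{r}$-membership factors through the reached location are details the paper leaves implicit, but they do not alter the argument.
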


\begin{lemma}\label{lemma:finiteinvalid}
    All invalid reset-clocked words belong to the same equivalence class of $\sim_{\mathscr{L}_{r}(\mathcal{A})}$.
\end{lemma}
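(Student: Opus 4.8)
The plan is to reduce the statement to a vacuous-truth argument: I would show that every invalid reset-clocked word has an empty set of valid successors, so that the two defining conditions of $\sim_{\mathscr{L}_{r}(\mathcal{A})}$ in Definition~\ref{def:equivalencerelation} are automatically satisfied whenever both arguments are invalid.

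First I would establish the key auxiliary fact that \emph{validity is prefix-closed}: if $\gamma_r\cdot\gamma_r'$ is a valid reset-clocked word of $\mathcal{A}$, then the prefix $\gamma_r$ is valid as well. This is immediate from the definition of a run, since a run over $\gamma_r\cdot\gamma_r'$ is a sequence $(l_0,\nu_0)\xrightarrow{t_1,\sigma_1}\cdots\xrightarrow{t_n,\sigma_n}(l_n,\nu_n)$ whose initial segment of the appropriate length is itself a run of $\mathcal{A}$, and recording the reset information along that initial segment yields exactly $\gamma_r$. Taking the contrapositive, if $\gamma_r$ is invalid, i.e. no run corresponds to it, then $\gamma_r\cdot\gamma_r'$ is invalid for every $\gamma_r'$. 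Combining this with Definition~\ref{def:validsuccessor}, where membership $\gamma_r'\in\textit{vs}_{\mathcal{A}}(\gamma_r,\xi)$ requires $\gamma_r\cdot\gamma_r'$ to be valid, I conclude that $\textit{vs}_{\mathcal{A}}(\gamma_r,\xi)=\emptyset$ for every region word $\xi\in\bm{\Sigma_G}^*$ whenever $\gamma_r$ is invalid.

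Finally, I would take any two invalid reset-clocked words $\gamma_{r1},\gamma_{r2}$. By the previous step, for every $\xi$ both $\textit{vs}_{\mathcal{A}}(\gamma_{r1},\xi)$ and $\textit{vs}_{\mathcal{A}}(\gamma_{r2},\xi)$ are empty, so there are no successors $\gamma_{r1}',\gamma_{r2}'$ to test, and both universally quantified conditions in Definition~\ref{def:equivalencerelation} (the membership equivalence and $resets(\gamma_{r1}')=resets(\gamma_{r2}')$) hold vacuously. Hence $\gamma_{r1}\sim_{\mathscr{L}_{r}(\mathcal{A})}\gamma_{r2}$, and since $\sim_{\mathscr{L}_{r}(\mathcal{A})}$ is an equivalence relation, all invalid reset-clocked words fall into one class.

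The main obstacle — indeed essentially the only point requiring care — is the prefix-closure step, because once it is in place the argument collapses to vacuous satisfaction. I would phrase it at the level of reset-clocked words rather than plain runs, checking that stripping the suffix $\gamma_r'$ from a run witnessing $\gamma_r\cdot\gamma_r'$ leaves precisely the clock-valuation and reset data encoded by $\gamma_r$, so that the retained prefix is a genuine witness for $\gamma_r$ itself and not merely for some word sharing its underlying action sequence.
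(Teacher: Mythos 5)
Your proof is correct and follows essentially the same route as the paper: both arguments observe that an invalid word has no valid successors for any region word $\xi$, so the two conditions of Definition~\ref{def:equivalencerelation} hold vacuously. The only difference is that you explicitly justify the emptiness of $\textit{vs}_{\mathcal{A}}(\gamma_r,\xi)$ via prefix-closure of validity, a step the paper's proof asserts without elaboration.
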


\begin{theorem}\label{thm:equivalenceclass}
    Let $\mathscr{L}_{r}(\mathcal{A})$ be the reset-clocked language of a DTA $\mathcal{A}$, and then $\sim_{\mathscr{L}_{r}(\mathcal{A})}$ has a finite number of equivalence classes.
\end{theorem}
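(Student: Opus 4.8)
The plan is to reduce the counting of equivalence classes to the counting of reachable symbolic states of $\mathcal{A}$, leaning on the two preparatory lemmas that already carry the conceptual weight. First I would observe that the set $\bm{\Sigma_{r}}^*$ of all reset-clocked words splits into exactly two kinds with respect to $\mathcal{A}$: those that are \emph{valid} (there is a run $\rho$ of $\mathcal{A}$ with $\gamma_r=\Gamma_{\rho}(\textit{trace}_r(\rho))$) and those that are \emph{invalid} (this includes the doomed ones). This dichotomy is exhaustive over $\bm{\Sigma_{r}}^*$, so every equivalence class of $\sim_{\mathscr{L}_{r}(\mathcal{A})}$ either consists entirely of valid words or is contained in the invalid words. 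The invalid words are dispatched immediately: by Lemma~\ref{lemma:finiteinvalid} they all lie in a single equivalence class, contributing exactly one class to the total.

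For the valid words I would use Lemma~\ref{lemma:finitevalid}, which states that any two valid reset-clocked words reaching the same symbolic state are $\sim_{\mathscr{L}_{r}(\mathcal{A})}$-equivalent. Because $\mathcal{A}$ is a CTA, each valid reset-clocked word determines a unique run and hence a unique symbolic state it reaches, so this lemma shows that the map sending each equivalence class of valid words to ``the common symbolic state reached by its members'' is well defined, and moreover that distinct reachable symbolic states cannot force distinct classes to merge beyond this bound. Consequently the number of equivalence classes among valid words is at most the number of reachable symbolic states of $\mathcal{A}$.

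Finally I would invoke the bound established in Section~\ref{sec:preliminaries}: the number of symbolic states of $\mathcal{A}$ is finite, bounded by $|L|\cdot|\mathcal{C}|!\cdot 2^{|\mathcal{C}|}\cdot\prod_{c\in\mathcal{C}}(2\kappa(c)+2)$. Adding the single class of invalid words, the total number of equivalence classes of $\sim_{\mathscr{L}_{r}(\mathcal{A})}$ is at most $|L|\cdot|\mathcal{C}|!\cdot 2^{|\mathcal{C}|}\cdot\prod_{c\in\mathcal{C}}(2\kappa(c)+2)+1$, which is finite, establishing the claim.

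As for difficulty, the genuine content sits in Lemma~\ref{lemma:finitevalid} together with the supporting Lemmas~\ref{lemma:alltssame} and~\ref{lemma:samesymbolicstate}, which make it legitimate to compare valid successors across words reaching a common symbolic state; once these are in hand the theorem itself is a short finiteness argument. The one point I would verify carefully is that the valid/invalid split is truly exhaustive over $\bm{\Sigma_{r}}^*$ and that Lemma~\ref{lemma:finitevalid} delivers the counting bound in the right direction, namely that ``reaching the same symbolic state $\Rightarrow$ same equivalence class'' is exactly the implication that lets the number of classes be bounded by the number of symbolic states rather than the other way around.
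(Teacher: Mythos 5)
Your proof is correct and follows essentially the same route as the paper: Lemma~\ref{lemma:finitevalid} bounds the equivalence classes of valid reset-clocked words by the (finite) number of symbolic states, and Lemma~\ref{lemma:finiteinvalid} contributes a single additional class for the invalid words. One phrasing slip worth noting: what Lemma~\ref{lemma:finitevalid} makes well defined is the map from reachable symbolic states \emph{onto} classes (same symbolic state $\Rightarrow$ same class), not the map from a class to ``the'' symbolic state of its members (that would need the converse); since your final paragraph uses the implication in the correct direction for the counting, the argument stands as the paper's does.
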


\section{Learning DTA from a powerful teacher}\label{sec:powerfulteacher}

In this section, we present the algorithm for learning a DTA $\mathcal{A}$ from a powerful teacher. The basic idea is that we transform the learning problem for the timed language of an underlying DTA $\mathcal{A}$ into the learning problem for the reset-clocked language of~$\mathcal{A}$. We first introduce the basic idea and then describe the settings of the powerful teacher. After presenting the details of the learning algorithm, we show its correctness, termination and complexity.

\begin{theorem}\label{theorem:basicidea}
  Given two DTAs $\mathcal{A}_1$ and $\mathcal{A}_2$, if $\mathscr{L}_{r}(\mathcal{A}_1)=\mathscr{L}_{r}(\mathcal{A}_2)$, then $\mathcal{L}(\mathcal{A}_1)=\mathcal{L}(\mathcal{A}_2)$.
\end{theorem}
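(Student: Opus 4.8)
The plan is to prove the contrapositive-free direction directly: assuming $\mathscr{L}_r(\mathcal{A}_1) = \mathscr{L}_r(\mathcal{A}_2)$, I show the two timed languages contain exactly the same timed words. First I recall the relationship between a timed word $\omega$ and its clocked/reset-clocked counterparts. The central fact I would lean on is that the map $\Gamma_\rho$ from a (reset-)delay-timed word to its (reset-)clocked counterpart, and its partner $\pi_\rho$, are determined once a run $\rho$ is fixed, and that by Equation~\eqref{eq:transform} the reset-clocked word $\gamma_r$ carries enough information to recover the original delay-timed word $\omega$. Concretely, given a reset-clocked word $\gamma_r = (\sigma_1,\mathbf{v}_1,\mathbf{b}_1)\cdots(\sigma_n,\mathbf{v}_n,\mathbf{b}_n)$ that is not doomed, one reconstructs the delays by inverting the piecewise definition: $t_i$ equals $\mathbf{v}_{i,j}$ for any clock $j$ that was reset at the previous step (or at $i=1$), and the consistency across clocks guarantees this is well defined. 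Thus there is a bijective correspondence between timed words $\omega$ accepted along a run of a DTA and the non-doomed reset-clocked words of that DTA.

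The key step is then to argue, for an arbitrary timed word $\omega$, the chain of equivalences
\begin{equation*}
\omega \in \mathcal{L}(\mathcal{A}_1) \iff \gamma_r \in \mathscr{L}_r(\mathcal{A}_1) \iff \gamma_r \in \mathscr{L}_r(\mathcal{A}_2) \iff \omega \in \mathcal{L}(\mathcal{A}_2),
\end{equation*}
where $\gamma_r$ is the reset-clocked word obtained from the unique accepting run (if any) of $\mathcal{A}_1$ over $\omega$. The middle equivalence is exactly the hypothesis $\mathscr{L}_r(\mathcal{A}_1) = \mathscr{L}_r(\mathcal{A}_2)$. The leftmost equivalence uses determinism of $\mathcal{A}_1$: since $\mathcal{A}_1$ is a DTA (indeed a CTA in this paper), there is at most one run over $\omega$, so $\omega \in \mathcal{L}(\mathcal{A}_1)$ holds precisely when that unique run is accepting, which by definition of $\mathscr{L}_r$ means the associated $\gamma_r = \Gamma_\rho(\textit{trace}_r(\rho))$ lies in $\mathscr{L}_r(\mathcal{A}_1)$. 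The rightmost equivalence is the symmetric statement, but here lies the subtlety: the reset-clocked word $\gamma_r$ was built from $\mathcal{A}_1$'s run, and I must ensure that $\gamma_r$ being in $\mathscr{L}_r(\mathcal{A}_2)$ really corresponds to $\omega$ being accepted by $\mathcal{A}_2$.

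The main obstacle is precisely reconciling the two automata's notions of ``the'' reset-clocked word for $\omega$. A priori, $\mathcal{A}_1$ and $\mathcal{A}_2$ might reset different clocks along their respective runs over $\omega$, producing different reset patterns and hence different reset-clocked words, so one cannot naively assert that the same $\gamma_r$ serves both. I would resolve this by exploiting that reconstruction via Equation~\eqref{eq:transform} maps each non-doomed reset-clocked word back to a \emph{unique} delay-timed word, so the map $\gamma_r \mapsto \omega$ is well defined regardless of which automaton produced $\gamma_r$. Hence if $\gamma_r \in \mathscr{L}_r(\mathcal{A}_2)$, then by definition $\gamma_r = \Gamma_{\rho'}(\textit{trace}_r(\rho'))$ for some accepting run $\rho'$ of $\mathcal{A}_2$, and the delay-timed word $\textit{trace}(\rho')$ recovered from $\gamma_r$ must coincide with $\omega$ because both are obtained by the same inversion of Equation~\eqref{eq:transform}. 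This forces $\omega \in \mathcal{L}(\mathcal{A}_2)$, and the converse direction is symmetric.

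Finally I would handle the boundary cases to make the argument airtight: if $\mathcal{A}_1$ has no accepting run over $\omega$, then $\omega \notin \mathcal{L}(\mathcal{A}_1)$, and I must check no corresponding $\gamma_r$ enters $\mathscr{L}_r(\mathcal{A}_2)$ either, which again follows from the uniqueness of the $\gamma_r \mapsto \omega$ reconstruction together with the language equality. The empty word $\epsilon$ is treated separately but trivially, since $\epsilon \in \mathcal{L}(\mathcal{A}_i)$ iff the initial location is accepting, and this is recorded identically in both reset-clocked languages. Combining all cases establishes $\mathcal{L}(\mathcal{A}_1) = \mathcal{L}(\mathcal{A}_2)$.
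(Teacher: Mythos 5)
Your proposal is correct and rests on exactly the same idea as the paper's proof: the transformation of Equation~\eqref{eq:transform} is invertible given the reset information, so each (non-doomed) reset-clocked word determines a unique delay-timed word, and this is what lets membership transfer from $\mathscr{L}_{r}(\mathcal{A}_1)=\mathscr{L}_{r}(\mathcal{A}_2)$ down to the timed languages. The only difference is presentational: the paper argues at the level of whole languages (reset-clocked language equality iff reset-delay-timed language equality, then project away the reset components to obtain timed language equality), whereas you argue pointwise through runs with a chain of equivalences, which also makes explicit the reset-pattern mismatch subtlety and the use of determinism that the paper's two-line proof leaves implicit.
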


According to Theorem~\ref{theorem:basicidea}, to construct a DTA that recognizes a target timed language $\mathcal{L} = \mathcal{L}(\mathcal{A})$, it suffices to learn a hypothesis DTA $\mathcal{H}$ such that $\mathscr{L}_{r}(\mathcal{A})=\mathscr{L}_{r}(\mathcal{H})$. Instead of checking directly if $\mathscr{L}_{r}(\mathcal{A})=\mathscr{L}_{r}(\mathcal{H})$, the contraposition of the theorem guarantees that we can still perform equivalence queries over their timed languages: if $\mathcal{L}(\mathcal{A})=\mathcal{L}(\mathcal{H})$, then $\mathcal{H}$ recognizes the target language already; otherwise, a counterexample making $\mathcal{L}(\mathcal{A})\neq\mathcal{L}(\mathcal{H})$ also yields an evidence for $\mathscr{L}_{r}(\mathcal{A})\neq\mathscr{L}_{r}(\mathcal{H})$.

There are two roles in the learning algorithm: a learner and a powerful teacher. The teacher holds the underlying target DTA $\mathcal{A}$ and three oracles to answer three kinds of queries: \emph{reset information query}, membership query and equivalence query from the learner. For a reset information query, the learner asks the reset information along the run of a valid clocked word $\gamma$. The teacher returns the reset information of the clocks, and the learner gets a reset-clocked word. For a membership query, the teacher receives a reset-clocked word $\gamma_r$ from the learner and answers whether $\gamma_{r}\in\mathscr{L}_{r}(\mathcal{A})$. For an equivalence query, the learner submits a hypothesis DTA $\mathcal{H}$ and the teacher answers whether $\mathcal{L}(\mathcal{A})=\mathcal{L}(\mathcal{H})$. If the answer is no, the powerful teacher returns a reset-delay-timed word $\omega_r$ as a counterexample in addition. 
We denote the three oracles for reset information query, membership query and equivalence query by $\textsf{RQ}$, $\textsf{MQ}$ and $\textsf{EQ}$, respectively.

Suppose that the DTA $\mathcal{A}$ is a complete DTA and the learner knows the number of the clocks $|\mathcal{C}|$ in advance. 
We show more details as follows.

\subsection{Timed observation table, reset information query, and membership query}\label{sbsc:membership}

Similar to the classic $L^*$ algorithm, we design an observation table to collect the query results.

\begin{definition}[Timed observation table for DTA]
A timed observation table for DTA is a tuple $\mathbf{T}=(\Sigma, \bm{\Sigma_{G}}, \bm{\Sigma_{r}}, \bm{S}, \bm{R}, \bm{E}, f, g)$, where $\bm{S},\bm{R}\subset\bm{\Sigma_r}^*$ are finite sets of reset-clocked words, named prefix set and boundary, respectively, and $\bm{E}\subset\bm{\Sigma_G}^*$ is a finite set of region words, named suffix set. Specifically,
\begin{itemize}
\item$\bm{S}$ and $\bm{R}$ are disjoint, i.e., $\bm{S}\cup\bm{R}=\bm{S}\uplus\bm{R}$, and $\bm{S}\cup\bm{R}$ is prefix-closed.
\item The empty word is, by default, both a prefix and a suffix, i.e., $\epsilon\in \bm{E}$ and $\epsilon\in \bm{S}$;
\item $f:\bm{S}\cup\bm{R}\times\bm{E}\rightarrow \{+,-\}$ is the classification function such that for a reset-clocked word $\gamma_r\in\bm{S}\cup\bm{R}$ and a region word $e\in\bm{E}$, $f(\gamma_r,e)=-$ if $\textit{vs}_{\mathcal{A}}(\gamma_r, e)=\emptyset$ or $\mathsf{MQ}(\gamma_r e_r)=-$ i.e. $\gamma_r e_r \notin \mathscr{L}_{r}(\mathcal{A})$ where $e_r \in \textit{vs}_{\mathcal{A}}(\gamma_r,e)$, and $f(\gamma_r,e)=+$ if $\mathsf{MQ}(\gamma_r e_r)=+$, i.e. $\gamma_r e_r \in \mathscr{L}_{r}(\mathcal{A})$.
\item $g:\bm{S}\cup\bm{R}\times\bm{E}\rightarrow \{\bot,\top\}^{|e\in\bm{E}|\times|\mathcal{C}|}$ is a labelling function such that for a reset-clocked word $\gamma_r\in \bm{S}\cup\bm{R}$ and a region word $e\in\bm{E}\backslash \{\epsilon\}$, $g(\gamma_r,e)=resets(e_r)$ where $e_r\in \textit{vs}_{\mathcal{A}}(\gamma_r,e)$ if $\textit{vs}_{\mathcal{A}}(\gamma_r, e)\neq\emptyset$, otherwise $g(\gamma_r,e)=\{\top\}^{|e\in\bm{E}|\times|\mathcal{C}|}$.

\end{itemize}
\end{definition}

Consider the classification function $f$. It records the results of the membership queries i.e., whether $\gamma_r e_r$ is in $\mathscr{L}_{r}(\mathcal{A})$. Depending on Lemma~\ref{lemma:alltssame} and Lemma~\ref{lemma:finitevalid}, $e_r$ can represent all other valid successors $e_r'$ according to the region word $e$ such that $\gamma_r e_r\sim_{\mathscr{L}_{r}(\mathcal{A})}\gamma_r e_r'$. 
The function $g$ records the reset information in the valid successor $e_r$ and assigns all resets to $\top$ if there is no valid successor $e_r$.
However, according to Definition~\ref{def:validsuccessor}, the valid successor $e_r$ depends on a known DTA $\mathcal{A}$. We claim that the learner can compute the valid successors using reset information queries instead of knowing DTA $\mathcal{A}$.

\begin{algorithm}[!t]
  \caption{FindValidSuccessor}
  \label{alg:findvalidsuccessor}
  \SetKwInOut{Input}{input}\SetKwInOut{Output}{output}

  \Input{A reset-clocked word $\gamma_{r}=(\sigma_1,\mathbf{v}_1,\mathbf{b}_1)(\sigma_2,\mathbf{v}_2,\mathbf{b}_2)\cdots(\sigma_n,\mathbf{v}_n,\mathbf{b}_n)\in\bm{S}\cup\bm{R}$; \\a region word $e=(\sigma_1',\llbracket \nu_1 \rrbracket)(\sigma_2',\llbracket \nu_2 \rrbracket)\cdots(\sigma_k',\llbracket \nu_k \rrbracket)\in\bm{E}$.}
  \Output{A valid successor $e_r\in\textit{vs}_{\mathcal{A}}(\gamma_r, e)$.}
  
  $e_r\gets \epsilon$ \;
  \If{$\gamma_{r}$ is doomed}{\Return $\bot$} \label{line:doomed}
  \Else{
  \If{$\gamma_r=\epsilon$}{$\mathbf{v}_n\gets 0; \mathbf{b}_n\gets \top$ \tcp*{Set the initial clock valuation.}}
  \For{$i \gets 1$ \KwTo $k$}{
  \For{$j\gets 1$ \KwTo $|\mathcal{C}|$}{
  \If{$\mathbf{b}_{n+i-1,j}=\top$}{$\mathbf{v}_{n+i,j}\gets 0$;}
  \Else {$\mathbf{v}_{n+i,j}\gets \mathbf{v}_{n+i-1,j}$;}
  }
  $\textit{flag}, d \gets \textsf{Solve}(\llbracket\mathbf{v}_{n+i}+d\rrbracket=\llbracket \nu_i \rrbracket)$ \label{line:SMT} 
  
  \If{$\textit{flag}=\bot$}{\Return $\bot$;}
  \Else{$\mathbf{v}_{n+i}\gets\mathbf{v}_{n+i}+d$;}
    $\mathbf{b}_{n+i}\gets \textbf{RQ}(vw(\gamma_r)\cdot (\sigma_i',\mathbf{v}_{n+i}))$\;
    \label{line:RQ}
 $e_r\gets e_r\cdot (\sigma_{i}',\mathbf{v}_{n+i},\mathbf{b}_{n+i})$\;
 $\gamma_r\gets \gamma_r\cdot (\sigma_i',\mathbf{v}_{n+i},\mathbf{b}_{n+i})$\;
  }
\Return $e_r$\;
  }
\end{algorithm}

Algorithm~\ref{alg:findvalidsuccessor} shows the computation for a valid successor $e_r$. Given a reset-clocked word $\gamma_r\in \bm{S}\cup \bm{R}$, we can use Equation~\ref{eq:transform} to determine whether $\gamma_r$ is doomed. If it is doomed, then we stop the computation as we know that there is no valid successor (Line~\ref{line:doomed}). Otherwise, we compute $e_r$ by considering the region actions in the region word $e\in\bm{E}$ one by one. For the first region action $(\sigma_1',\llbracket \nu_1 \rrbracket)$, depending on the last reset information in $\gamma_r$, we can compute the clock values $\mathbf{v}_{n+1}$. 
Then we find a delay time $d\in\mathbb{R}_{\geq 0}$ such that $\mathbf{v}_{n+1} + d\in \llbracket \nu_1 \rrbracket$ by the function $Solve()$ and use $flag$ to represent whether there is such $d$ or not. 
If there is no such $d$, it means that there is no clock valuation in $\llbracket \nu_1 \rrbracket$ that can form a valid successor. Otherwise, we update $\mathbf{v}_{n+1}$ and get a clocked word $(\sigma_1',\mathbf{v}_{n+1})$. Then we make a reset information query for $vw(\gamma_r)\cdot(\sigma_1',\mathbf{v}_{n+1})$ to get the last reset information $\mathbf{b}_{n+1}$ (Line~\ref{line:RQ}). After that, we update $e_r$ and $\gamma_r$. The computation for other region actions in $e$ is similar.

\begin{example}
Fig.~\ref{fig:observationtable} shows an example of a timed observation table. Suppose that the CTA $\mathcal{A}$ in Fig.~\ref{fig:timedautomata} is the underlying target automaton. Consider $\gamma_{r}=(a,\{0,0\},\{\top,\top\})$ and $e=(a,1< c_{1}< 2\wedge 1< c_{2}< 2\wedge c_{1}=c_{2})$, $n=1$ for the length of $\gamma_r$ and there is only one region action in $e$. 
According to Algorithm~\ref{alg:findvalidsuccessor}, we get $e_{r}=(a,\{1.05,1.05\},\{\top,\top\})\in \textit{vs}_{\mathcal{A}}(\gamma_r, e)$. 
The detailed process of finding $e_{r}$ can be found in Appendix~\ref{appendix:findvalidsucc}.
\end{example}

\begin{figure}[!t]
\begin{center}
\resizebox{0.49\textwidth}{!}{
\begin{tabular}{ P{5.5cm}|Y{0.5cm} Y{5cm}  }
 \hline
  &$\epsilon$&$(a,1< c_{1}< 2\wedge  1<c_{2}<2\wedge c_1=c_2)$\\
 \hline
 $\epsilon$&$-$&$+,\bot,\top$\\
  $(a,\{1.05,1.05\},\{\bot,\top\})$   & $+$    &$-,\top,\top$\\
   $(a,\{0,0\},\{\top,\top\})$   & $-$    &$-,\top,\top$\\
 \hline
  $(b,\{0,0\},\{\top,\top\})$   & $-$    &$-,\top,\top$\\
  $(a,\{1.05,1.05\},\{\bot,\top\})(a,\{0,0\},\{\top,\top\})$   & $-$    &$-,\top,\top$\\
  $(a,\{1.05,1.05\},\{\bot,\top\})(b,\{0,0\},\{\top,\top\})$   & $-$    &$-,\top,\top$\\
  $(a,\{1.05,1.05\},\{\bot,\top\})(b,\{1.05,0\},\{\bot,\top\})$   & $+$    &$-,\top,\top$\\
  $(a,\{1.05,1.05\},\{\bot,\top\})(b,\{2.05,1\},\{\top,\top\})$   & $-$    &$-,\top,\top$\\
  $(a,\{0,0\},\{\top,\top\})(a,\{1.05,1.05\},\{\top,\top\})$   & $-$    &$-,\top,\top$\\
  $(a,\{0,0\},\{\top,\top\})(a,\{0,0\},\{\top,\top\})$   & $-$    &$-,\top,\top$\\
  $(a,\{0,0\},\{\top,\top\})(b,\{0,0\},\{\top,\top\})$   & $-$    &$-,\top,\top$\\
 \hline
\end{tabular}
}
\end{center}
\vspace{-0.2cm}
\caption{An example of the observation table.}
\label{fig:observationtable}
\end{figure}

Similar to the $L^*$ algorithm, for the observation table, we also define a function $\textit{row}$ mapping each
$\gamma_{r}\in \bm{S}\cup\bm{R}$ to a vector indexed by every $e\in \bm{E}$. Each element of the vector is defined as $\{f(\gamma_{r}, e), g(\gamma_r, e)\}$.
For example, in Fig.~\ref{fig:observationtable}, consider $\gamma_r=(a,\{1.05,$ $1.05\},\{\bot,\top\})$, we have
$\textit{row}(\gamma_r)=\{\{+\},\{-,\top,\top\}\}$.

Before constructing a hypothesis $\mathcal{H}$ based on the timed observation table $\mathbf{T}$, the
learner has to ensure that $\mathbf{T}$ satisfies the following conditions:
\begin{itemize}
	\item \emph{Reduced:} $\forall \emph{s}, \emph{s}'\in \bm{S}\colon \emph{s}\neq \emph{s}' \text{ implies }\mathit{row}(\emph{s})\neq \mathit{row}(\emph{s}')$;
	
	\item \emph{Closed:} $\forall r \in \bm{R}, \exists s \in \bm{S}\colon \mathit{row}(s) = \mathit{row}(r)$;

\item \emph{Consistent:} $\forall \gamma_{r}, \gamma_{r}' \in \bm{S}\cup\bm{R}$, $\mathit{row}(\gamma_{r})=\mathit{row}(\gamma_{r}')$ implies that: for all $\bm{\sigma_{r}},\bm{\sigma_{r}}'\in \bm{\Sigma_{r}}$ such that $\llbracket vw(\bm{\sigma_{r}})\rrbracket=\llbracket vw(\bm{\sigma_{r}}')\rrbracket$,  if $\gamma_{r}\bm{\sigma_{r}}, \gamma_{r}'\bm{\sigma_{r}}'\in \bm{S}\cup\bm{R}$,  the following conditions are satisfied:
	\begin{itemize}
	    \item 	$\mathit{row}(\gamma_{r}\bm{\sigma_{r}})=\mathit{row}(\gamma_{r}'\bm{\sigma_{r}}')$;
	    \item $resets(\bm{\sigma_{r}})=resets(\bm{\sigma_{r}}')$.
	    
	\end{itemize}

	\item \emph{Evidence-closed:} $\forall \emph{s}\in\bm{S}$ and $\forall \emph{e}\in\bm{E}$, if $\textit{vs}_{\mathcal{A}}(\emph{s}, e)$ is not empty, there is $e_{r}\in \textit{vs}_{\mathcal{A}}(\emph{s}, e)$ such that $\emph{s}e_{r}\in \bm{S}\cup\bm{R}$;
	
\end{itemize}

A timed observation table $\mathbf{T}$ is prepared if it satisfies the above four conditions. To get the table prepared, the learner can perform the following operations.

\begin{figure}[!t]
\begin{center}
\resizebox{0.35\textwidth}{!}{

\begin{tabular}{ P{6.5cm}|Y{0.8cm}  }
 \hline
  $\textbf{T1}$&$\epsilon$\\
 \hline
 $\epsilon$&$-$\\
  $(a,\{1.05,1.05\},\{\bot,\top\})$   & $+$ \\
  \hline
   $(a,\{0,0\},\{\top,\top\})$   & $-$    \\
  $(b,\{0,0\},\{\top,\top\})$   & $-$    \\
  $(a,\{1.05,1.05\},\{\bot,\top\})(a,\{0,0\},\{\top,\top\})$   & $-$    \\
  $(a,\{1.05,1.05\},\{\bot,\top\})(b,\{0,0\},\{\top,\top\})$   & $-$    \\
  $(a,\{1.05,1.05\},\{\bot,\top\})(b,\{1.05,0\},\{\bot,\top\})$   & $+$    \\
  $(a,\{1.05,1.05\},\{\bot,\top\})(b,\{2.05,1\},\{\top,\top\})$   & $-$    \\
  $(a,\{0,0\},\{\top,\top\})(a,\{1.05,1.05\},\{\top,\top\})$   & $-$    \\
 \hline
\end{tabular}
}
\hfill
\resizebox{0.49\textwidth}{!}{
\begin{tabular}{ P{5.5cm}|Y{0.2cm} Y{4.8cm}  }
 \hline
  $\textbf{T2}$&$\epsilon$&$(a,1< c_{1}< 2\wedge  1<c_{2}<2\wedge c_1=c_2 )$\\
 \hline
 $\epsilon$&$-$&$+,\bot,\top$\\
  $(a,\{1.05,1.05\},\{\bot,\top\})$   & $+$    &$-,\top,\top$\\
   \hline
   $(a,\{0,0\},\{\top,\top\})$   & $-$    &$-,\top,\top$\\
  $(b,\{0,0\},\{\top,\top\})$   & $-$    &$-,\top,\top$\\
  $(a,\{1.05,1.05\},\{\bot,\top\})(a,\{0,0\},\{\top,\top\})$   & $-$    &$-,\top,\top$\\
  $(a,\{1.05,1.05\},\{\bot,\top\})(b,\{0,0\},\{\top,\top\})$   & $-$    &$-,\top,\top$\\
  $(a,\{1.05,1.05\},\{\bot,\top\})(b,\{1.05,0\},\{\bot,\top\})$   & $+$    &$-,\top,\top$\\
  $(a,\{1.05,1.05\},\{\bot,\top\})(b,\{2.05,1\},\{\top,\top\})$   & $-$    &$-,\top,\top$\\
    $(a,\{0,0\},\{\top,\top\})(a,\{1.05,1.05\},\{\top,\top\})$   & $-$    &$-,\top,\top$\\
 \hline
\end{tabular}
}
\end{center}
\vspace{-0.2cm}
\caption{Example of making the observation table consistent.}
\label{fig:consistent}
\vspace{-0.1cm}
\end{figure}


\textbf{Making \bm{$\mathbf{T}$} closed.}  If there is $r\in\bm{R}$ such that for all $s\in\bm{S}$, $\mathit{row}(r)\neq \mathit{row}(s)$, the learner will move $r$ from $\bm{R}$ to $\bm{S}$. Moreover, for each $\sigma\in\Sigma$, a reset-clocked word $\pi(vw(r)\cdot \bm{\sigma})$ needs to be added to $\bm{R}$, where $\bm{\sigma}=(\sigma,\{0\}^{|\mathcal{C}|})$.
The mapping $\pi$ defined in Section~\ref{sec:preliminaries} can be implemented by a reset information query in the powerful teacher situation.

\textbf{Making \bm{$\mathbf{T}$} consistent.} Considering the first case, there exist two reset-clocked words $\gamma_{r},\gamma_{r}'\in \bm{S}\cup\bm{R}$ at least, for  $\bm{\sigma_{r}},\bm{\sigma_{r}}'\in \bm{\Sigma_{r}}$ such that $\llbracket vw(\bm{\sigma_{r}})\rrbracket=\llbracket vw(\bm{\sigma_{r}}')\rrbracket$,  both $\gamma_{r}\bm{\sigma_{r}}$ and $\gamma_{r}'\bm{\sigma_{r}}'\in \bm{S}\cup\bm{R}$. However, $\mathit{row}(\gamma_{r})=\mathit{row}(\gamma_{r}')$ but $\mathit{row}(\gamma_{r}\bm{\sigma_{r}})\neq\mathit{row}(\gamma_{r}'\bm{\sigma_{r}'})$. In this case, there must be $\emph{e}\in\bm{E}$ such that $f(\gamma_{r}\bm{\sigma_{r}},\emph{e})\neq f(\gamma_{r}'\bm{\sigma_{r}'},\emph{e})$ or $g(\gamma_{r}\bm{\sigma_{r}},\emph{e})\neq g(\gamma_{r}'\bm{\sigma_{r}'},\emph{e})$. Thus, the learner adds $\llbracket vw(\bm{\sigma_{r}})\rrbracket\cdot\emph{e}$ to $\bm{E}$, so that $\gamma_{r}$ and $\gamma_{r}'$ are separated. Consider the second case that $resets(\bm{\sigma_{r}})\neq resets(\bm{\sigma_{r}}')$. In this case, $\llbracket vw(\bm{\sigma_{r}})\rrbracket$ 
is enough to distinguish $\gamma_{r}$ from $\gamma_{r}'$. So, the learner adds $\llbracket vw(\bm{\sigma_{r}})\rrbracket$ to $\bm{E}$. After extending $\bm{E}$, the learner fills the table by making reset information queries and membership queries.

\textbf{Making \bm{$\mathbf{T}$} evidence-closed.} In this case, $\exists \emph{s}\in\bm{S}$ and $\exists \emph{e}\in\bm{E}$, $\textit{vs}_{\mathcal{A}}(\emph{s}, e)$ is not empty, and there is no $e_{r}\in \textit{vs}_{\mathcal{A}}(\emph{s}, e)$ such that $\emph{s} e_{r}\in \bm{S}\cup\bm{R}$. The learner finds $e_{r}\in \textit{vs}_{\mathcal{A}}(\emph{s}, e)$ and then adds all prefixes of $\emph{s} e_{r}$ to $\bm{R}$, except for those already in $\bm{S}\cup\bm{R}$. 
Similarly, the learner needs to fill the table.

The condition that an observation table is reduced is inherently preserved by the aforementioned operations, together with the counterexample
processing described later in Section~\ref{sbsc:eqctx}. Furthermore, a table may need several rounds of
these operations before being prepared since certain conditions may
be violated by different interleaved operations.

\begin{example}\label{example:consistent}
Fig.~\ref{fig:consistent} shows a process for making the table consistent. Assuming that the CTA in Fig.~\ref{fig:timedautomata} recognizes the target timed language and an observation table \textbf{T1} is generated in the learning process. We show the inconsistency of $\textbf{T1}$ as follows. Let $\gamma_r=\epsilon$ and $\gamma_r'=(a,\{0,0\},\{\top,\top\})$, we have $\mathit{row}(\gamma_r)=\mathit{row}(\gamma_r')$. Choose $\bm{\sigma_r}=(a,\{1.05,$ $1.05\},\{\bot,\top\})$ and $\bm{\sigma_r}'=(a,\{1.05,$ $1.05\},\{\top,\top\})$ such that $\llbracket vw(\bm{\sigma_{r}})\rrbracket=\llbracket vw(\bm{\sigma_{r}}')\rrbracket$. We have $\gamma_r\bm{\sigma_r}=(a,\{1.05,1.05\},\{\bot,\top\})$ and $\gamma_r'\bm{\sigma_r}'=(a,\{0,0\},\{\top,\top\})$ $(a,\{1.05,1.05\},\{\top,\top\})$ in $\bm{S}\cup\bm{R}$. However, we find $\mathit{row}(\gamma_r\bm{\sigma_r})\neq\mathit{row}(\gamma_r'\bm{\sigma_r}')$ and $resets(\bm{\sigma_r})\neq resets(\bm{\sigma_r}')$. Thus, $\textbf{T1}$ violates the two conditions of consistency. In order to make the table consistent, we add $\llbracket(a,\{1.05,1.05\})\rrbracket\cdot\epsilon$, which equals $(a,1\textless c_{1}\textless 2\wedge  1<c_{2}< 2\wedge c_1=c_2)$, to $\bm{E}$ to distinguish $\epsilon$ and $(a,\{0,0\},\{\top,\top\})$, and then fill the table and get \textbf{T2}.
\end{example}

\subsection{Hypothesis construction}\label{sbsc:hypo}
When the observation table $\mathbf{T}$ is prepared, a hypothesis $\mathcal{H}$ can be constructed based on $\mathbf{T}$ in two steps: 1. Build a DFA $\text{M}$ from $\mathbf{T}$; 2. Convert $\text{M}$ into DTA $\mathcal{H}$. 

Given a prepared table $\mathbf{T}=(\Sigma, \bm{\Sigma_{G}}, \bm{\Sigma_{r}}, \bm{S}, \bm{R}, \bm{E}, f, g)$, the learner first builds a DFA $\text{M}=(L_{M},l_M^0, F_{M},$ $\Sigma_{M}, \Delta_{M})$ as follows. 
\begin{itemize}
    \item the finite set of locations $L_{M}=\{l_{\mathit{row}(s)} \mid s\in\bm{S}\}$;
    \item the initial location $l_M^0=l_{\mathit{row}(\epsilon)}$ for $\epsilon\in\bm{S}$;
    \item the finite set of accepting locations $F_{M}=\{l_{\mathit{row}(s)} \mid f(s\cdot\epsilon)=+ \text{ for } s\in\bm{S} \text{ and } \epsilon\in\bm{E}\}$;
    \item the finite abstract alphabet $\Sigma_{M}=\{\bm{\sigma_r}\in\bm{\Sigma_r} \mid \gamma_r\bm{\sigma_r}\in\bm{S}\cup\bm{R} \text{ for } \gamma_r\in\bm{\Sigma_r}^*\}$;
    \item the finite set of transitions $\Delta_M = \{(l_{\mathit{row}(\gamma_r)},\bm{\sigma_r},l_{\mathit{row}(\gamma_r\bm{\sigma_r})}) \mid $ $ \gamma_r,\gamma_r\bm{\sigma_r}\in\bm{S}\cup\bm{R} \text{ for } \bm{\sigma_r}\in\bm{\Sigma_r}\}$.
\end{itemize}

The basic idea in the above construction is that we view $\bm{\sigma_r}\in\bm{\Sigma_r}$ as an abstract action if $\bm{\sigma_r}$ occurs in the table and consider each row in the table as the location which is reached by the row.  We label the locations by function $\mathit{row}$. The constructed DFA $\text{M}$ is compatible with the timed observation table $\mathbf{T}$ in a sense captured by the following lemma. 

\begin{lemma}\label{lemma:tabledfa}
	Given a prepared observation table $\mathbf{T}=(\Sigma, \bm{\Sigma_{G}}, \bm{\Sigma_{r}}, $ $\bm{S}, \bm{R}, \bm{E}, f, g)$,  for all $\gamma_{r}\cdot e\in(\bm{S}\cup\bm{R})\cdot\bm{E}$, we have
	\begin{itemize}
	    \item if $f(\gamma_{r}, e)=+$, there is $\gamma_r'\in vs_{\mathcal{A}}(\gamma_{r}, e)$ such that $resets(\gamma_r')=g(\gamma_{r}, e)$ and $\gamma_{r}\gamma_r'$ is accepted in the constructed DFA $\text{M}$;
	    \item if $f(\gamma_{r}, e)=-$ and $vs_{\mathcal{A}}(\gamma_{r}, e)\neq\emptyset$, there is $\gamma_r'\in vs_{\mathcal{A}}(\gamma_{r}, e)$ such that $resets(\gamma_r')=g(\gamma_{r}, e)$ and $\gamma_{r}\gamma_r'$ is not accepted in the constructed DFA $\text{M}$. 
	    \item if $f(\gamma_{r}, e)=-$ and $vs_{\mathcal{A}}(\gamma_{r},e)=\emptyset$, DFA $\text{M}$ has no accepting word in the form of $\gamma_r\gamma_r'$ such that $\llbracket vw(\gamma_r')\rrbracket=e$.
	\end{itemize}
\end{lemma}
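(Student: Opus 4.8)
The plan is to induct on the length of the region word $e\in\bm{E}$, after first fixing how words of $\bm{S}\cup\bm{R}$ are read in $\text{M}$. As a preliminary I would show that reading any $\gamma_r\in\bm{S}\cup\bm{R}$ from $l_M^0$ ends in $l_{\mathit{row}(\gamma_r)}$, which is accepting iff $f(\gamma_r,\epsilon)=+$. This is a routine induction on $|\gamma_r|$: since $\bm{S}\cup\bm{R}$ is prefix-closed, every one-step prefix extension $\gamma_r^{(i-1)}\bm{\sigma_r}=\gamma_r^{(i)}$ again lies in $\bm{S}\cup\bm{R}$ and hence supplies the transition $(l_{\mathit{row}(\gamma_r^{(i-1)})},\bm{\sigma_r},l_{\mathit{row}(\gamma_r^{(i)})})\in\Delta_M$; determinism of these transitions (needed for ``reading'' to be well defined) comes from consistency, with reducedness guaranteeing the locations are well defined, and the accepting characterisation holds because $f(\cdot,\epsilon)$ is one coordinate of $\mathit{row}$.

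For the base case $e=\epsilon$ I take $\gamma_r'=\epsilon\in vs_{\mathcal{A}}(\gamma_r,\epsilon)$ (nonempty exactly when $\gamma_r$ is valid): its reset sequence is empty, and by the preliminary $\gamma_r$ is accepted in $\text{M}$ iff $f(\gamma_r,\epsilon)=+$, which settles the first two bullets; the third follows because $vs_{\mathcal{A}}(\gamma_r,\epsilon)=\emptyset$ leaves $\epsilon$ as the only candidate suffix with region word $\epsilon$, and $f(\gamma_r,\epsilon)=-$ makes $l_{\mathit{row}(\gamma_r)}$ non-accepting. For the inductive step I write $e=\alpha\cdot e'$ with $\alpha\in\bm{\Sigma_G}$ and $e'\in\bm{E}$ (by suffix-closedness of $\bm{E}$) and decompose a valid successor into a single region-$\alpha$ action $\bm{\sigma_r}$ followed by an element of $vs_{\mathcal{A}}(\gamma_r\bm{\sigma_r},e')$.

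The engine of the step is evidence-closedness. Applied to the $\bm{S}$-representative $s$ of $\gamma_r$ furnished by closedness (so $\mathit{row}(s)=\mathit{row}(\gamma_r)$), it produces a concrete witness $e_r\in vs_{\mathcal{A}}(s,e)$ whose \emph{entire} path lies in $\bm{S}\cup\bm{R}$ and is therefore readable in $\text{M}$, reaching $l_{\mathit{row}(se_r)}$ by the preliminary. Applying the induction hypothesis at the one-step extension with the shorter suffix $e'$, and then using the defining identities $f(s,e)=f(se_r,\epsilon)$ and $g(s,e)=resets(e_r)$ together with $\mathit{row}(s)=\mathit{row}(\gamma_r)$, transports the verdict back to $\gamma_r$. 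Lemmas~\ref{lemma:alltssame}, \ref{lemma:finitevalid} and~\ref{lemma:samesymbolicstate} legitimise these transfers: for a fixed region word all valid successors share a transition sequence, reach one symbolic state, and carry one reset pattern, so the concrete values picked in Algorithm~\ref{alg:findvalidsuccessor} affect neither acceptance nor $resets$. The third bullet I argue contrapositively: an accepting $\text{M}$-run on some $\gamma_r\gamma_r'$ with $\llbracket vw(\gamma_r')\rrbracket=e$ would force $\gamma_r\gamma_r'\in\bm{S}\cup\bm{R}$, hence a valid reset-clocked word of $\mathcal{A}$, so $\gamma_r'$ would witness $vs_{\mathcal{A}}(\gamma_r,e)\neq\emptyset$, contradicting the hypothesis.

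I expect the main obstacle to be reconciling the two notions of ``successor'': the valid successors of $\mathcal{A}$, whose concrete clock values are chosen by Algorithm~\ref{alg:findvalidsuccessor}, versus the concrete reset-clocked actions that actually label transitions of $\text{M}$. Evidence-closedness bridges this gap from each $\bm{S}$-state, and the boundary case $\gamma_r\in\bm{R}$ must be routed through its $\bm{S}$-representative while ensuring that the leading region action and its reset agree on both sides---precisely where consistency and Lemma~\ref{lemma:samesymbolicstate} are indispensable. The arithmetic of $+/-$ labels and reset vectors, by contrast, is forced by the definitions of $f$ and $g$ and should be routine.
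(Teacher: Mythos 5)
Your plan for the first two bullets follows essentially the same skeleton as the paper's proof: evidence-closedness supplies a witness suffix $e_r$ with $s e_r\in\bm{S}\cup\bm{R}$, prefix-closedness lets you read it in $\text{M}$ ending at $l_{\mathit{row}(s e_r)}$, and closedness plus row-equality routes a word of $\bm{R}$ through its $\bm{S}$-representative. The induction on $|e|$ you superimpose is unnecessary (evidence-closedness already yields the \emph{whole} path inside the table, so no decomposition $e=\alpha\cdot e'$ is needed), and note that it quietly assumes suffix-closedness of $\bm{E}$, which is not among the four conditions defining a prepared table --- it is only an invariant of how the algorithm happens to extend $\bm{E}$.

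The genuine gap is your contrapositive argument for the third bullet. You claim that an accepting run of $\text{M}$ on $\gamma_r\gamma_r'$ forces $\gamma_r\gamma_r'\in\bm{S}\cup\bm{R}$. That inference is false: a transition $(l,\bm{\sigma_r},l')\in\Delta_M$ only requires \emph{some} table pair $\eta_r,\eta_r\bm{\sigma_r}\in\bm{S}\cup\bm{R}$ with $\mathit{row}(\eta_r)=l$, so a run of $\text{M}$ may stitch together transitions contributed by different table words. Concretely, if $\mathit{row}(\eta_r)=\mathit{row}(\gamma_r)$ and $\eta_r\bm{\sigma_r}$ is in the table, then $\text{M}$ reads $\gamma_r\bm{\sigma_r}$ even though $\gamma_r\bm{\sigma_r}$ need not be in the table; were acceptance to imply table membership, $\text{M}$ would accept only finitely many words and the construction would generalize nothing, defeating its purpose. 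Consequently your chain ``accepted by $\text{M}$ $\Rightarrow$ in the table $\Rightarrow$ valid in $\mathcal{A}$ $\Rightarrow$ $vs_{\mathcal{A}}(\gamma_r,e)\neq\emptyset$'' collapses at the first link. The paper closes this case by a different mechanism: from an accepting word $\gamma_r\gamma_r'$ with $\llbracket vw(\gamma_r')\rrbracket=e$ it extracts a table word $\gamma_r''$ with $\gamma_r''\gamma_r'\in\bm{S}\cup\bm{R}$, $\mathit{row}(\gamma_r'')=\mathit{row}(\gamma_r)$ and $f(\gamma_r'',e)=+$, and then contradicts $f(\gamma_r,e)=-$ using the fact that $f(\cdot,e)$ is a coordinate of $\mathit{row}(\cdot)$, so equal rows force equal $f$-entries. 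Your argument for this bullet must be replaced by a row-equality argument of this form; as written, the step fails.
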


After the DFA $\text{M}$ is constructed, the learner constructs a hypothesis $\mathcal{H}=(L, l_0, F, \mathcal{C},$ $\Sigma, \Delta)$ from $\text{M}$. In the hypothesis $\mathcal{H}$, $\mathcal{C}$ is the set of clocks known before learning and $\Sigma$ is the given alphabet as in~$\mathbf{T}$. In addition, $L=L_{M}$, $l_{0}=l_M^0$, and $F=F_{M}$.
The set of transitions $\Delta$ in $\mathcal{H}$ can be constructed as follows: For any $l\in L_{M}$ and $\sigma\in\Sigma$, let
$\Psi_{l,\sigma}=\{\mathbf{v}_i\mid (l,(\sigma,\mathbf{v}_i,\mathbf{b}_i),l')\in \Delta_{M}\}$,
then applying the partition function $P^c(\cdot)$ to $\Psi_{l,\sigma}$ returns $n$ disjoint clock constraints, written as $I_{1}, \cdots, I_{n}$, satisfying $\mathbf{v}_{i}\in I_{i}$ for any $1\leq i\leq n$, where $n=|\Psi_{l,\sigma}|$. Then for each transition $(l,(\sigma,\mathbf{v}_i,\mathbf{b}_i),l')\in \Delta_{M}$, we add a transition $(l, \sigma, I_i, \{c_j \vert \mathbf{b}_{i,j}=\top\}, l')$ in $\Delta$. After that, the construction of hypothesis $\mathcal{H}$ is finished.

\begin{definition}[Partition function]\label{def:partition} Given a set of clock valuations $\Psi_{l,\sigma}$, the partition function $P(\cdot)$ operates on it in the five steps:
\begin{itemize}

\item For all clock valuations in $\Psi_{l,\sigma}$, we list them as $\mathbf{v}_{1}, \mathbf{v}_{2},\cdots,$ $\mathbf{v}_{n}$, with $\{0\}^{|\mathcal{C}|}=\mathbf{v}_{1}\preceq\mathbf{v}_{2}\preceq\cdots\preceq\mathbf{v}_{n}$. Then for $1\leq i \leq n$, we map $ \mathbf{v}_{i}$ to a constraint $A_i$, which is defined as
\begin{equation*}
\quad\quad A_i = \begin{cases}
	\llbracket \mathbf{v}_{i} \rrbracket, & \text{if}\ \  \exists 1\leq j \leq |\mathcal{C}| \text{ satisfying that }\mathbf{v}_{i,j}>\kappa(c_j); \\
	\emptyset, & \text{otherwise}.
	\end{cases}
\end{equation*}
We denote the union of $A_{i}$ for all $1\leq i \leq n$ by $U_0$.

\item After $U_0$ is constructed, for $1\leq i \leq n$, we map clock valuation $\mathbf{v}_i$ to a new constraint $U_i=\bigwedge_{1\leq j \leq |\mathcal{C}|}ic_{i,j}$ where 
\begin{equation*}
 \quad\quad ic_{i,j} = \begin{cases}
	c_{j}\geq \mathbf{v}_{i,j}, & \text{if}\ \  \mathbf{v}_{i,j}\in \mathbb{N}; \\
	c_{j}\textgreater\lfloor \mathbf{v}_{i,j} \rfloor, & \text{otherwise}.
	\end{cases}
\end{equation*}
\item Based on $U_0$ constructed in the first step and $U_1,\cdots, U_n$ constructed in the second step, map the $n$ clock valuations $\mathbf{v}_{1}, \mathbf{v}_{2},\cdots,\mathbf{v}_{n}$ to the $n$ constraints $W_{1}, W_{2}, \cdots,$ $W_{n}$, respectively, where $W_i$ is defined as
\begin{equation*}
 \quad\quad \begin{cases}
	U_{n}-(U_{n}\cap U_{0}), & \text{if}\ \  i=n; \\
	U_{i}-(U_{i}\cap(W_{i+1}\cup W_{i+2}\cup\cdots\cup W_{n}\cup U_0)), & \text{otherwise}.
\end{cases}
\end{equation*}

\item Based on $W_1,\cdots, W_n$ and $A_1,\cdots, A_n$, map the $n$ clock valuations $\mathbf{v}_{1}, \mathbf{v}_{2},\cdots,\mathbf{v}_{n}$ to the $n$ constraints $I_{1}, I_{2}, \cdots,$ $I_{n}$, respectively, which is defined as
\begin{equation*}
 \quad\quad I_{i} = W_{i}+A_i.
\end{equation*}

\item Finally, find whether there are $\mathbf{v}_{i}, \mathbf{v}_{j}$ such that $A_i=A_j=\emptyset$ and $U_{i}=U_{j}$, but $\llbracket\mathbf{v}_{i}\rrbracket\neq\llbracket\mathbf{v}_{j}\rrbracket$.
If so, we support that $\mathbf{v}_i\preceq\mathbf{v}_j$, then after the previous three steps, $W_{i}=\emptyset$ and $W_{j}=U_{j}-(U_{j}\cap(W_{j+1}\cup W_{j+2}\cup\cdots\cup W_{n}\cup U_0))$ since $U_i=U_j$. Next, in the fourth step, $I_i=\emptyset$ and $I_j=W_j$ since $A_i=A_j=\emptyset$. To map $\mathbf{v}_{i}$ to the correct partition, let $I_{i}=\llbracket\mathbf{v}_{i}\rrbracket$ and thus the updated $I_{j}=W_j-I_{i}$. Note that after this step, the union of the updated $I_{i}$ and $I_{j}$ is not changed.

\end{itemize}
\end{definition}

The basic idea of the partition function is to generate disjoint clock constraints $I_1,\cdots,I_n$ containing the clock valuations. Specially, we need to generate that for each $\mathbf{v}_{i}$, there is one and only one final constraint $I_i$ containing it. In the first step, to avoid dividing a region $\llbracket\mathbf{v}_{i}\rrbracket$ into more than one final constraint, we select $\llbracket\mathbf{v}_{i}\rrbracket$ which may lead to this situation, add it to $U_0$ and guarantee that $U_0$ does not participate in the division operations in the second step and third step. In the second step, we compute a coarse constraints $U_i$ containing the corresponding clock valuations $\mathbf{v}_i$. But they are not disjoint. In the third step, we guarantee the disjoint property and get the constraints $W_i$. In the fourth step, since $U_0$ does not participate in the division operations in the second step and third step, we select $A_i$ from $U_0$ and add it to $W_i$. In the second step and third step, we do not consider the relation between clocks in one clock valuation. But they may form different regions (constraints). Hence, for two clock valuations such that $\llbracket\mathbf{v}_{i}\rrbracket\neq\llbracket\mathbf{v}_{j}\rrbracket$ but $U_{i}=U_{j}$ and $A_i=A_j=\emptyset$, if $\mathbf{v}_i\preceq\mathbf{v}_j$, then $I_i=\emptyset$, which means that $\mathbf{v}_{i}$ has no corresponding final constraint containing it. Therefore, in the final step,  we divide the constraint $I_j$ into two constraints $I_i$ and $I_j$. 

\begin{lemma}\label{lemma:partitiondeterministic} The clock constraints $I_{1}, \cdots, I_{n}$ generated by partition function $P(\cdot)$ are disjoint, i.e. $I_{i}\cap I_{j}=\emptyset$ for all $i\neq j$ and $1\leq i,j\leq n$.
\end{lemma}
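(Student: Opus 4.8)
The plan is to first isolate a structural property shared by every set built during the partition, namely that each of $U_0$, $A_i$, $U_i$ and $W_i$ is \emph{region-respecting}: it is a union of whole clock regions, so that any region either lies entirely inside it or entirely outside it. This holds because every atomic constraint occurring in $U_i$ has the form $c_m\geq \mathbf{v}_{i,m}$ or $c_m>\lfloor \mathbf{v}_{i,m}\rfloor$ with an integer bound, and such a bound never splits a region; $A_i$ and $U_0$ are unions of regions by definition; and $W_i$ is obtained from region-respecting sets by finitely many unions and differences, which preserve the property. With this in hand I may argue region by region, treating each region $\llbracket\mathbf{v}\rrbracket$ as an atom with respect to all the sets in play.

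Next I would reduce the disjointness of $I_i=W_i+A_i$ to four cross terms. First, $W_i\cap W_j=\emptyset$ for $i\neq j$: for $i<j$ the set $W_j$ occurs in the union subtracted when forming $W_i$, and the same definition shows every $W_i$ is disjoint from $U_0$. Second, because each $A_j\subseteq U_0$, the mixed terms $W_i\cap A_j$ and $A_i\cap W_j$ vanish since $W_i\cap U_0=\emptyset$. Third, the nonempty $A_i$ coincide with the regions $\llbracket\mathbf{v}_i\rrbracket$; recalling that the valuations read off from $\Psi_{l,\sigma}$ occupy pairwise distinct regions, distinct nonempty $A_i$ are distinct regions and hence disjoint. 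Combining the four cases gives $I_i\cap I_j=\emptyset$ for the constraints produced by the first four steps.

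Finally I must check that the corrective fifth step preserves disjointness, and this is where I expect the main obstacle. In the triggering case $\mathbf{v}_i\preceq\mathbf{v}_j$ with $A_i=A_j=\emptyset$, $U_i=U_j$ and $\llbracket\mathbf{v}_i\rrbracket\neq\llbracket\mathbf{v}_j\rrbracket$, one first shows $W_i=\emptyset$, so $I_i=\emptyset$ before correction, and the step resets $I_i\leftarrow\llbracket\mathbf{v}_i\rrbracket$ and $I_j\leftarrow W_j-\llbracket\mathbf{v}_i\rrbracket$. The crux is to prove that the reinstated region satisfies $\llbracket\mathbf{v}_i\rrbracket\subseteq W_j$. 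The argument is that $W_i=\emptyset$ yields $\llbracket\mathbf{v}_i\rrbracket\subseteq U_i\subseteq(W_{i+1}\cup\cdots\cup W_n\cup U_0)$, and since $A_i=\emptyset$ places $\mathbf{v}_i$ below the ceiling we have $\llbracket\mathbf{v}_i\rrbracket\cap U_0=\emptyset$; by the region-respecting property the single region $\llbracket\mathbf{v}_i\rrbracket$ then lies in exactly one block $W_k$ with $k>i$, and a lexicographic-order argument comparing $U_i=U_j$ against the integer lower bounds defining the $U_k$ pins this down to $k=j$. Granting $\llbracket\mathbf{v}_i\rrbracket\subseteq W_j$, the new $I_i$ is contained in the old $W_j$, hence disjoint from every $W_k$ and every $A_k$ with $k\neq j$ by the cases already settled, while $I_i\cap I_j=\emptyset$ holds because $\llbracket\mathbf{v}_i\rrbracket$ is removed from $I_j$; since the union $I_i\cup I_j$ is unchanged, no prior disjointness is disturbed, which completes the proof.
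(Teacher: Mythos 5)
Your handling of the first four steps is essentially the paper's own proof: $W_i\cap W_j=\emptyset$ from the third step, disjointness of the nonempty $A_i$ as (claimed) distinct regions, and vanishing cross terms because each $A_k$ lies in $U_0$ while every $W_k$ has $U_0$ subtracted out. The divergence is at the fifth step. The paper disposes of it in one sentence (``the last step does not influence $I_i\cap I_j$''), whereas you correctly single it out as the main obstacle and try to justify it via the inclusion $\llbracket\mathbf{v}_i\rrbracket\subseteq W_j$. That inclusion is exactly where your proof has a genuine gap: the ``lexicographic-order argument \dots pins this down to $k=j$'' is never actually given, and it cannot be given, because the inclusion is false in general.

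Concretely, take two clocks with $\kappa(c_1)=\kappa(c_2)=2$ and $\Psi_{l,\sigma}=\{(0,0),(1.2,1.5),(1.5,1.2),(1.7,1.0)\}$, listed lexicographically as $\mathbf{v}_1,\dots,\mathbf{v}_4$. Then $U_0=\emptyset$, $U_2=U_3=(c_1>1\wedge c_2>1)$ and $U_4=(c_1>1\wedge c_2\geq 1)$, so $W_4=U_4$, and since $U_3\subseteq U_4$ we get $W_3=W_2=\emptyset$. The fifth step triggers for the pair $(i,j)=(2,3)$ (same $U$, empty $A$'s, different regions) and sets $I_2=\llbracket\mathbf{v}_2\rrbracket$ and $I_3=W_3-I_2=\emptyset$. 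But $\llbracket\mathbf{v}_2\rrbracket$ is contained in $W_4=I_4$, not in $W_3=\emptyset$: the region sits in a block $W_k$ with $k\neq j$ and $U_k\neq U_j$, so your pinning argument fails, and indeed $I_2\cap I_4=\llbracket\mathbf{v}_2\rrbracket\neq\emptyset$, i.e.\ the corrected constraints are not disjoint. Note this also defeats the paper's own one-line dismissal of the fifth step (and the lemma as literally stated for an arbitrary $\Psi_{l,\sigma}$ containing $\{0\}^{|\mathcal{C}|}$), so you located a real weakness, but your repair does not close it; closing it needs an extra hypothesis such as ``no later $U_m$ strictly contains $U_i$,'' which neither you nor the paper establishes. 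Two smaller inaccuracies: your structural claim that every $U_i$ is a union of regions fails when some $\mathbf{v}_{i,m}$ exceeds the ceiling (a bound like $c\geq 4$ with $\kappa(c)=2$ splits the region $c>2$), though this is patchable because the region you apply it to lies below the ceiling; and your premise that the valuations in $\Psi_{l,\sigma}$ occupy pairwise distinct regions is asserted, not proved --- nothing in the construction of $\Delta_M$ rules out two region-equivalent valuations, which is the same unproved jump the paper makes for $A_i\cap A_j=\emptyset$.
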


\begin{theorem}
    \label{theorem:partitioncomplete} $P(\cdot)$ returns
a partition on
$\mathbb{R}_{\geq0}^{\lvert \mathcal{C} \rvert}$.

\end{theorem}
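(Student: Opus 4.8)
The plan is to leverage Lemma~\ref{lemma:partitiondeterministic}, which already establishes pairwise disjointness of $I_1,\dots,I_n$, so that the only remaining obligations for a partition are \emph{coverage}, i.e. $\bigcup_{i=1}^n I_i=\mathbb{R}_{\geq0}^{|\mathcal{C}|}$, and \emph{nonemptiness} of each block (concretely, $\mathbf{v}_i\in I_i$). I would organize the whole argument around a single telescoping identity for the sets $W_i$ produced in the third step of the partition function.

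First I would record two elementary facts. Since $\mathbf{v}_1=\{0\}^{|\mathcal{C}|}$ and $0\in\mathbb{N}$, every conjunct of $U_1$ is $c_j\geq 0$, so $U_1=\mathbb{R}_{\geq0}^{|\mathcal{C}|}$ is the entire space. Moreover, for arbitrary sets one has $U_i-(U_i\cap X)=U_i\setminus X$, so the recursive definition in the third step reads $W_n=U_n\setminus U_0$ and $W_i=U_i\setminus(W_{i+1}\cup\cdots\cup W_n\cup U_0)$ for $i<n$.

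Then I would prove by downward induction on $i$ (from $n$ to $1$) the identity
\[
W_i\cup W_{i+1}\cup\cdots\cup W_n\cup U_0 \;=\; U_i\cup U_{i+1}\cup\cdots\cup U_n\cup U_0 .
\]
The base case $i=n$ is $(U_n\setminus U_0)\cup U_0=U_n\cup U_0$. For the inductive step I would apply $(A\setminus B)\cup B=A\cup B$ with $A=U_i$ and $B=W_{i+1}\cup\cdots\cup W_n\cup U_0$ to get $W_i\cup B=U_i\cup B$, and then rewrite $B$ via the induction hypothesis. Taking $i=1$ and using $U_1=\mathbb{R}_{\geq0}^{|\mathcal{C}|}$ gives $W_1\cup\cdots\cup W_n\cup U_0=\mathbb{R}_{\geq0}^{|\mathcal{C}|}$. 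Since $I_i=W_i+A_i$ and $U_0=\bigcup_i A_i$ by the first step, I conclude $\bigcup_i I_i=(\bigcup_i W_i)\cup U_0=\mathbb{R}_{\geq0}^{|\mathcal{C}|}$; the fifth step merely moves a piece $\llbracket\mathbf{v}_i\rrbracket$ from some $I_j$ into $I_i$ and, as noted in the construction, leaves this union unchanged. Combined with Lemma~\ref{lemma:partitiondeterministic}, this yields the partition.

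The step I expect to be the main obstacle is not coverage but confirming that every block is nonempty and actually contains its valuation, i.e. $\mathbf{v}_i\in I_i$. When $A_i\neq\emptyset$ this is immediate, since $A_i=\llbracket\mathbf{v}_i\rrbracket\ni\mathbf{v}_i$; the delicate case is $A_i=\emptyset$, where $I_i=W_i$ and I must show that $\mathbf{v}_i$ survives the subtractions in the third step. This is precisely the scenario the fifth step is designed to repair: two valuations with $A_i=A_j=\emptyset$ and $U_i=U_j$ but $\llbracket\mathbf{v}_i\rrbracket\neq\llbracket\mathbf{v}_j\rrbracket$ can collapse one block to $\emptyset$, and I would argue that the explicit reassignment $I_i=\llbracket\mathbf{v}_i\rrbracket$ and updated $I_j=W_j-I_i$ restores $\mathbf{v}_i\in I_i$ while preserving both disjointness and the total union, thereby completing the verification that $P(\cdot)$ returns a genuine partition of $\mathbb{R}_{\geq0}^{|\mathcal{C}|}$.
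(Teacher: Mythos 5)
Your proof is correct and follows essentially the same route as the paper's: disjointness is delegated to Lemma~\ref{lemma:partitiondeterministic}, and coverage follows from $U_1=\mathbb{R}_{\geq0}^{|\mathcal{C}|}$ together with the telescoping union identity $(\bigcup_{i} W_i)\cup U_0=U_1$, the fifth step leaving the union unchanged. Your downward induction is merely a more explicit organization of the paper's one-step computation $W_1\cup\cdots\cup W_n=U_1-U_0$, and your extra verification that $\mathbf{v}_i\in I_i$ (nonemptiness of blocks) goes beyond the paper, whose proof checks only disjointness and coverage.
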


\begin{lemma}\label{lemma:dfatocta}
Given a DFA $\text{M}=(L_{M},l^0_M, F_{M}, \Sigma_{M}, \Delta_{M})$, which is generated from a prepared timed observation table $\mathbf{T}$, the hypothesis $\mathcal{H}=(L, l_0, F, \mathcal{C}, \Sigma, \Delta)$ is transformed from $\text{M}$. For all $\gamma_{r}\cdot e\in(\bm{S}\cup\bm{R})\cdot\bm{E}$,
we have
	\begin{itemize}
	    \item if $f(\gamma_{r}, e)=+$, there is $\gamma_r'\in vs_{\mathcal{A}}(\gamma_{r}, e)$ such that $resets(\gamma_r')=g(\gamma_{r}, e)$ and $\gamma_{r}\gamma_r'$ is accepted in the constructed hypothesis $\mathcal{H}$;
	    \item if $f(\gamma_{r}, e)=-$ and $vs_{\mathcal{A}}(\gamma_{r}, e)\neq\emptyset$, there is $\gamma_r'\in vs_{\mathcal{A}}(\gamma_{r}, e)$ such that $resets(\gamma_r')=g(\gamma_{r}, e)$ and $\gamma_{r}\gamma_r'$ is not accepted in the constructed hypothesis $\mathcal{H}$. 
	    \item if $f(\gamma_{r}, e)=-$ and $vs_{\mathcal{A}}(\gamma_{r},e)=\emptyset$,  $\mathcal{H}$ has no accepting word in the form of $\gamma_r\gamma_r'$ such that $\llbracket vw(\gamma_r')\rrbracket=e$.
	\end{itemize}
\end{lemma}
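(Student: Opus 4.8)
\emph{Plan.} This lemma is the DTA-level counterpart of Lemma~\ref{lemma:tabledfa}, which already records the corresponding facts for the intermediate DFA $\text{M}$. The plan is therefore to (i) establish a bridging invariant linking runs of the hypothesis $\mathcal{H}$ to rows of the table through the partition function, and then (ii) transport each of the three clauses from $\text{M}$ to $\mathcal{H}$. Concretely, I would first prove the following invariant $(\star)$: for every non-doomed $\gamma_{r}\in\bm{S}\cup\bm{R}$, $\gamma_{r}$ is a valid reset-clocked word of $\mathcal{H}$ whose unique run ends in the location $l_{\mathit{row}(\gamma_{r})}$ with the clock valuation that $\gamma_{r}$ itself encodes via Equation~\eqref{eq:transform}.

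\emph{Proving $(\star)$.} I would argue by induction on the length of $\gamma_{r}$. The base case $\gamma_{r}=\epsilon$ holds since $l_{0}=l^{0}_{M}=l_{\mathit{row}(\epsilon)}$. For the step, write $\gamma_{r}=\eta\cdot(\sigma,\mathbf{v},\mathbf{b})$ with $\eta\in\bm{S}\cup\bm{R}$ by prefix-closedness. The induction hypothesis puts $\mathcal{H}$ in $(l_{\mathit{row}(\eta)},\nu)$, where $\nu$ is the valuation encoded by $\eta$. By construction $\text{M}$ has the edge $(l_{\mathit{row}(\eta)},(\sigma,\mathbf{v},\mathbf{b}),l_{\mathit{row}(\gamma_{r})})$, which the conversion turns into a transition $(l_{\mathit{row}(\eta)},\sigma,I_{i},\{c_{j}\mid\mathbf{b}_{j}=\top\},l_{\mathit{row}(\gamma_{r})})$ with $\mathbf{v}\in I_{i}$. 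Since the clocked word is built consistently, the valuation reached after the delay equals $\mathbf{v}$, the guard $I_{i}$ is satisfied because $\mathbf{v}\in I_{i}$, and the reset set coincides with $\mathbf{b}$; uniqueness of the firing transition follows from disjointness (Lemma~\ref{lemma:partitiondeterministic}) and completeness (Theorem~\ref{theorem:partitioncomplete}) of the partition, together with table consistency guaranteeing that equi-region continuations carry identical resets and targets. Hence the run extends to $l_{\mathit{row}(\gamma_{r})}$, closing the induction. I would also note that $l_{\mathit{row}(\gamma_{r})}\in F$ iff $f(\gamma_{r},\epsilon)=+$, directly from $F=F_{M}$.

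\emph{Clauses 1 and 2.} Both are existential. Since the table is evidence-closed and $\textit{vs}_{\mathcal{A}}(\gamma_{r},e)\neq\emptyset$, there is a witness $e_{r}\in\textit{vs}_{\mathcal{A}}(\gamma_{r},e)$ with $\gamma_{r}e_{r}\in\bm{S}\cup\bm{R}$; by Lemma~\ref{lemma:alltssame} all valid successors share their transition sequence and hence their resets, so $resets(e_{r})=g(\gamma_{r},e)$. Applying $(\star)$ to $\gamma_{r}e_{r}$ shows that $\mathcal{H}$ has a run over $\gamma_{r}e_{r}$ ending in $l_{\mathit{row}(\gamma_{r}e_{r})}$. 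Finally $f(\gamma_{r},e)=f(\gamma_{r}e_{r},\epsilon)$ by the definition of $f$, so $l_{\mathit{row}(\gamma_{r}e_{r})}$ is accepting exactly when $f(\gamma_{r},e)=+$; this yields clause 1 when $f=+$ and clause 2 when $f=-$ with $\textit{vs}_{\mathcal{A}}(\gamma_{r},e)\neq\emptyset$.

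\emph{Clause 3 and the main obstacle.} The hard part is clause 3, because $\mathcal{H}$, unlike $\text{M}$, is a \emph{complete} automaton and therefore admits runs over region words that $\mathcal{A}$ cannot realize; I must rule out any such run being \emph{accepting}. If $\gamma_{r}$ is doomed there is nothing to prove, as $\mathcal{H}$ has no run over $\gamma_{r}$ at all. Otherwise, by $(\star)$, $\mathcal{A}$ and $\mathcal{H}$ reach states carrying the \emph{same} valuation $\nu$ after $\gamma_{r}$, so delay-reachability of the first region of $e$ is automaton-independent; when $e$ has length one this already forces $\textit{vs}_{\mathcal{A}}(\gamma_{r},e)=\emptyset$ to mean that region is unreachable, whence $\mathcal{H}$ has no run over any $\gamma_{r}\gamma_{r}'$ with $\llbracket vw(\gamma_{r}')\rrbracket=e$. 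For longer $e$, I would trace a putative accepting $\mathcal{H}$-run through its transitions: by the conversion each step uses a table action, so the run visits locations $l_{\mathit{row}(\cdot)}$, and reaching an accepting one would exhibit a table continuation whose reset-clocked form is, by Lemma~\ref{lemma:logictoresetlogic} (resets are constant on a region) and table consistency, a genuine valid successor of $\gamma_{r}$ realizing $e$ in $\mathcal{A}$, contradicting $\textit{vs}_{\mathcal{A}}(\gamma_{r},e)=\emptyset$. The delicate point, and the step I expect to require the most care, is that the completing pieces of the partition may send a valuation of $e$ into a guard $I_{i}$ whose representative lies in a \emph{different} region; showing this cannot produce an accepting run, equivalently that $\mathcal{H}$'s resets still agree with $\mathcal{A}$'s region-wise along the realized prefix so that any acceptance would already be visible as a valid successor in the table, is where the argument must be made precise, ultimately reducing clause 3 to the $\text{M}$-level statement of Lemma~\ref{lemma:tabledfa}.
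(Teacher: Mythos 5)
Your overall route is the same as the paper's: convert each transition $(l,(\sigma,\mathbf{v},\mathbf{b}),l')\in\Delta_{M}$ of $\text{M}$ into the transition $(l,\sigma,I,\mathcal{B},l')$ of $\mathcal{H}$ with $\mathbf{v}\in I$ and matching resets, and then transport the three clauses of Lemma~\ref{lemma:tabledfa} from $\text{M}$ to $\mathcal{H}$. The paper does this in three sentences, asserting outright that ``for every $\gamma_r\gamma_r'$, $\mathcal{H}$ accepts it iff $\text{M}$ accepts it''; your invariant $(\star)$, proved by induction on table words using prefix-closedness, disjointness (Lemma~\ref{lemma:partitiondeterministic}), completeness (Theorem~\ref{theorem:partitioncomplete}) and consistency, is a correct and more rigorous rendering of that assertion restricted to table words, and your treatment of clauses 1 and 2 via $(\star)$ is in substance the paper's reduction to Lemma~\ref{lemma:tabledfa}.

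Two points, however. First, a fixable slip: you invoke evidence-closedness for arbitrary $\gamma_{r}\in\bm{S}\cup\bm{R}$, but the table condition only guarantees it for prefixes in $\bm{S}$; for $\gamma_{r}\in\bm{R}$ you must first pass, via closedness, to some $s\in\bm{S}$ with $\mathit{row}(s)=\mathit{row}(\gamma_{r})$, exactly as the paper does in the proof of Lemma~\ref{lemma:tabledfa} (note also that mere existence of a witness $\gamma_r'\in \textit{vs}_{\mathcal{A}}(\gamma_{r},e)$ with $resets(\gamma_r')=g(\gamma_{r},e)$ is already built into the definition of $g$; what needs the table is the acceptance claim). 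Second, and more substantively, your clause 3 stops at exactly the crux: a block $I$ of the partition may contain valuations lying in regions other than that of its table representative, so an accepting $\mathcal{H}$-run whose continuation realizes $e$ only projects to an accepting path of $\text{M}$ labelled by table actions whose region word need not be $e$ (and need not even belong to $\bm{E}$), so Lemma~\ref{lemma:tabledfa} yields no immediate contradiction. You name this obstacle explicitly but leave it unresolved, so as written your proof of clause 3 is incomplete --- a genuine gap. In fairness, the paper's own proof does not close it either: its asserted equivalence between $\mathcal{H}$-acceptance and $\text{M}$-acceptance silently ignores precisely those valuations that satisfy a guard but lie outside the region of that guard's representative, which is the one place where this lemma needs a real argument rather than a transfer.
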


\begin{theorem}
    \label{theorem:cta}
    The hypothesis $\mathcal{H}$ is a CTA.
\end{theorem}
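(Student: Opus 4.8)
The plan is to verify the two defining properties of a CTA separately at each location–action pair: first that the guards on equally-labelled outgoing transitions from a location are pairwise mutually exclusive (\emph{determinism}), and second that these guards cover the whole valuation space $\mathbb{R}_{\geq 0}^{|\mathcal{C}|}$ (\emph{completeness}). Fixing $l\in L$ and $\sigma\in\Sigma$, all $\sigma$-transitions out of $l$ in $\mathcal{H}$ are obtained by applying the partition function $P^c(\cdot)$ to the set $\Psi_{l,\sigma}=\{\mathbf{v}_i\mid (l,(\sigma,\mathbf{v}_i,\mathbf{b}_i),l')\in\Delta_M\}$ and attaching the resulting guard $I_i$ to the transition carrying $\mathbf{v}_i$. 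Thus both properties will follow from the two already-established facts about $P^c(\cdot)$, namely Lemma~\ref{lemma:partitiondeterministic} and Theorem~\ref{theorem:partitioncomplete}, once the bookkeeping between $\Delta_M$ and $\Delta$ is reconciled.

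For determinism, I would argue that every $\sigma$-guard out of $l$ lies in $\{I_1,\dots,I_n\}=P^c(\Psi_{l,\sigma})$, so any two \emph{distinct} guards are disjoint by Lemma~\ref{lemma:partitiondeterministic} and hence mutually exclusive. The one remaining possibility is that two different $\Delta_M$-transitions from $l$ share the same clock valuation $\mathbf{v}_i$ and therefore the same guard $I_i$; here the \emph{consistency} of the prepared table $\mathbf{T}$ is decisive. Since all $\gamma_r$ reaching $l$ carry a common row, the consistent condition forces any two reset-clocked actions $\bm{\sigma_r},\bm{\sigma_r}'$ with $\llbracket vw(\bm{\sigma_r})\rrbracket=\llbracket vw(\bm{\sigma_r}')\rrbracket$ to agree on both their successor rows and their reset bits. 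In particular two $\Delta_M$-transitions sharing $\mathbf{v}_i$ must have the same target location and the same reset set $\{c_j\mid \mathbf{b}_{i,j}=\top\}$, so they collapse to a single transition in $\Delta$. No genuine pair of transitions from $l$ then shares a guard, and determinism follows.

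For completeness, I would invoke Theorem~\ref{theorem:partitioncomplete}, which guarantees that $P^c(\Psi_{l,\sigma})$ is a partition of $\mathbb{R}_{\geq 0}^{|\mathcal{C}|}$; since each block $I_i$ is realized as the guard of an actual $\sigma$-transition out of $l$, the union of the $\sigma$-guards at $l$ is exactly $\mathbb{R}_{\geq 0}^{|\mathcal{C}|}$, so every valuation enables precisely one transition. This step requires $\Psi_{l,\sigma}\neq\emptyset$, which I would obtain from the closedness machinery: the operation making $\mathbf{T}$ closed adds, for every location-defining word and every $\sigma\in\Sigma$, a reset-clocked word $\pi(vw(\cdot)\cdot\bm{\sigma})$ with $\bm{\sigma}=(\sigma,\{0\}^{|\mathcal{C}|})$ to $\bm{R}$, so each location of $\text{M}$ has at least one outgoing $\sigma$-transition. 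Combining the two parts over all pairs $(l,\sigma)$ shows that $\mathcal{H}$ has exactly one run on every timed word, i.e.\ $\mathcal{H}$ is a CTA.

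I expect the main obstacle to be the determinism bookkeeping rather than completeness: the partition function is indexed by the \emph{distinct} valuations in $\Psi_{l,\sigma}$, whereas $\Delta$ is populated per $\Delta_M$-transition, so the argument must rule out two transitions receiving identical guards while differing in target or reset. Making this rigorous is precisely where the \emph{consistent} (and implicitly \emph{reduced}) conditions of the prepared table must be used carefully; once that is settled, the disjointness and covering properties of $P^c(\cdot)$ established in Lemma~\ref{lemma:partitiondeterministic} and Theorem~\ref{theorem:partitioncomplete} carry the rest of the proof essentially for free.
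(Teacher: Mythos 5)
Your proposal is correct and follows essentially the same route as the paper's proof: determinism comes from the pairwise disjointness of the partition blocks (Lemma~\ref{lemma:partitiondeterministic}), and completeness from Theorem~\ref{theorem:partitioncomplete} together with the closedness operation, which places the zero valuation $\{0\}^{|\mathcal{C}|}$ into every $\Psi_{l,\sigma}$. Your additional consistency argument---collapsing $\Delta_{M}$-transitions that share a valuation into a single transition of $\Delta$---is a careful refinement that the paper only makes explicit later (in the proof of Theorem~\ref{theorem:powerfulcorrectness}); the one imprecision is that completeness needs not merely $\Psi_{l,\sigma}\neq\emptyset$ but $\{0\}^{|\mathcal{C}|}\in\Psi_{l,\sigma}$, which the closedness mechanism you cite does in fact supply.
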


\begin{theorem}
\label{theorem:7} For all $\gamma_{r}\cdot
e\in(\bm{S}\cup\bm{R}\cdot\bm{E})$, for every valid clocked word $\gamma'$ such that  $\llbracket\gamma'\rrbracket=\llbracket vw(\gamma_r)\rrbracket\cdot e$, $\mathcal{H}$ accepts  $\pi(\gamma')=\gamma_{r}'$ if and only if $f(\gamma_{r}, e)=+$.
\end{theorem}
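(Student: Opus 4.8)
The plan is to reduce the claim to a single representative clocked word per region word, and then read off acceptance from Lemma~\ref{lemma:dfatocta}. First I would note that, since $\mathcal{H}$ is a CTA by Theorem~\ref{theorem:cta}, every non-doomed clocked word has a \emph{unique} run in $\mathcal{H}$; hence $\pi(\gamma')=\gamma_r'$ is well defined along that run, and ``$\mathcal{H}$ accepts $\pi(\gamma')$'' is equivalent to ``$\gamma'\in\mathscr{L}(\mathcal{H})$''. So the theorem reduces to proving $\gamma'\in\mathscr{L}(\mathcal{H})\iff f(\gamma_r,e)=+$.

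Next I would invoke Lemma~\ref{lemma:logictoresetlogic} for the CTA $\mathcal{H}$: any two valid clocked words of $\mathcal{H}$ with the same region word traverse the same transition sequence, reach the same symbolic state, and induce the same reset sequence. Thus both membership in $\mathscr{L}(\mathcal{H})$ and the resets produced by $\mathcal{H}$ depend only on the region word $\llbracket\gamma'\rrbracket=\llbracket vw(\gamma_r)\rrbracket\cdot e$. It therefore suffices to check the biconditional for one convenient representative sharing this region word, and the value $f(\gamma_r,e)$ is, by its definition through valid successors, constant over this region word as well.

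The core is then a case analysis following the three bullets of Lemma~\ref{lemma:dfatocta}. When $\textit{vs}_{\mathcal{A}}(\gamma_r,e)\neq\emptyset$, I would pick the witness $\gamma_r'\in\textit{vs}_{\mathcal{A}}(\gamma_r,e)$ supplied by the lemma; the clocked word $vw(\gamma_r\gamma_r')=vw(\gamma_r)\cdot vw(\gamma_r')$ has exactly the region word $\llbracket vw(\gamma_r)\rrbracket\cdot e$, so it is a legitimate representative. If $f(\gamma_r,e)=+$ the lemma says $\gamma_r\gamma_r'$ is accepted by $\mathcal{H}$, whence $vw(\gamma_r\gamma_r')\in\mathscr{L}(\mathcal{H})$ and, by the region invariance above, $\gamma'\in\mathscr{L}(\mathcal{H})$; if $f(\gamma_r,e)=-$ the lemma says $\gamma_r\gamma_r'$ is rejected, giving $\gamma'\notin\mathscr{L}(\mathcal{H})$ in the same way. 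In the remaining case $f(\gamma_r,e)=-$ with $\textit{vs}_{\mathcal{A}}(\gamma_r,e)=\emptyset$, I would argue by contradiction: factor $\gamma'$ as a prefix of region $\llbracket vw(\gamma_r)\rrbracket$ followed by a suffix of region $e$, use region invariance to see that the prefix reaches the same $\mathcal{H}$-location as $vw(\gamma_r)$ and reproduces the resets recorded in $\gamma_r$, so that $\pi(\gamma')$ factors as $\gamma_r\gamma_r''$ with $\llbracket vw(\gamma_r'')\rrbracket=e$; an accepting run on $\gamma'$ would then supply an accepting word of exactly that shape, contradicting the third bullet of Lemma~\ref{lemma:dfatocta}.

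I expect the reset reconciliation to be the main obstacle: the delicate point throughout is that the reset-clocked word $\pi(\gamma')$ computed along $\mathcal{H}$'s run must factor \emph{exactly} as $\gamma_r$ (with its recorded resets $\mathbf{b}_{i,j}$) followed by a region-$e$ tail, rather than merely agreeing on the underlying clocked word. This is precisely what the hypothesis construction guarantees---each DFA edge labelled $(\sigma,\mathbf{v}_i,\mathbf{b}_i)$ is turned into a DTA transition resetting exactly $\{c_j\mid\mathbf{b}_{i,j}=\top\}$---together with table consistency ensuring these resets are well defined per region. Making this identification airtight, especially when $\gamma_r$ lies on the boundary $\bm{R}$, is where the argument will need the most care.
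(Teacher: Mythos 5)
Your proposal is correct and follows essentially the same route as the paper's proof: both rest on Theorem~\ref{theorem:cta} together with the three-case split of Lemma~\ref{lemma:dfatocta}, transferring acceptance from one representative valid successor to every region-equivalent clocked word (including the empty-successor case handled via the third bullet). The only cosmetic difference is that you invoke Lemma~\ref{lemma:logictoresetlogic} applied to the CTA $\mathcal{H}$ to get this region invariance, whereas the paper re-derives it inline by observing that each triggered guard $\phi_i$ satisfies $\llbracket\mathbf{v}_i\rrbracket\subseteq\phi_i$, so region-equal valuations trigger the same transitions with the same resets.
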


\subsection{Equivalence query and counterexample processing}\label{sbsc:eqctx}
The equivalence problem determining whether $\mathcal{L}(\mathcal{H})=\mathcal{L}(\mathcal{A})$ can be decomposed into the two-side inclusion problem, i.e., whether $\mathcal{L}(\mathcal{H})\subseteq\mathcal{L}(\mathcal{A})$ and $\mathcal{L}(\mathcal{A})\subseteq\mathcal{L}(\mathcal{H})$. Since the deterministic timed automata are closed under complementation~\cite{AlurD94}, we can implement the equivalence oracle as follows: $\mathcal{L}(\mathcal{H})\subseteq\mathcal{L}(\mathcal{A})$ iff $\mathcal{L}(\mathcal{H})\cap\overline{\mathcal{L}(\mathcal{A})}=\emptyset$, and $\mathcal{L}(\mathcal{A})\subseteq\mathcal{L}(\mathcal{H})$ iff $\mathcal{L}(\mathcal{A})\cap\overline{\mathcal{L}(\mathcal{H})}=\emptyset$. If $\mathcal{L}(\mathcal{H})\not\subseteq\mathcal{L}(\mathcal{A})$, there is a delay-timed word $\omega$ where the corresponding run $\rho$ in $\mathcal{H}$ reaches an accepting location and the corresponding run $\rho'$ in $\mathcal{A}$ reaches an unaccepting location. In this case, the teacher provides the corresponding reset-delay-timed word of $\omega_r$ in the target automaton $\mathcal{A}$ as a negative counterexample $\mathit{ctx}_{-}=(\omega_{r},-)$. Similarly, if $\mathcal{L}(\mathcal{A})\not\subseteq\mathcal{L}(\mathcal{H})$, the teacher provides a positive counterexample $\mathit{ctx}_{+}=(\omega_{r},+)$. When receiving a counterexample $\mathit{ctx}=(\omega_{r},+/-)$, the learner converts the counterexample to a reset-clocked word $\gamma_{r}=(\sigma_1,\mathbf{v}_1,\mathbf{b}_1)(\sigma_2,\mathbf{v}_2,\mathbf{b}_2)\cdots(\sigma_n,\mathbf{v}_n,\mathbf{b}_n)$ and add all prefixes of $\gamma_r$ to $\bm{R}$ except for those already in $\bm{S}\cup\bm{R}$. According to the definition of the reset-clocked word, $\omega_{r}$ and $\gamma_{r}$ trigger the same sequence of transitions in $\mathcal{A}$. According to the  contraposition of Theorem~\ref{theorem:basicidea}, if $\omega$ is an evidence of $\mathcal{L}(\mathcal{H})\neq\mathcal{L}(\mathcal{A})$, $\gamma_{r}$ is an evidence of $\mathscr{L}_{r}(\mathcal{H})\neq \mathscr{L}_{r}(\mathcal{A})$.

\subsection{Learning algorithm}\label{sbsc:algorithm}

\begin{algorithm}[!t]
 \caption{Learning deterministic timed automata from a powerful teacher}
	\label{alg:learning}
	\SetKwInOut{Input}{input}
	\SetKwInOut{Output}{output}
	\Input{
  the timed observation table  $\mathbf{T}=(\Sigma, \bm{\Sigma_{G}}, \bm{\Sigma_{r}}, \bm{S}, \bm{R}, \bm{E}, f, g)$;
 the number of clocks $|\mathcal{C}|$.}
	\Output{the hypothesis $\mathcal{H}$ recognizing the target language $\mathcal{L}$.}
	$\bm{S}\leftarrow\{\epsilon\}$;
	$\bm{R}\leftarrow\{\pi(\gamma) \mid \gamma=(\sigma,\{0\}^{|\mathcal{C}|}), \forall \sigma \in \Sigma \}$;
	$\bm{E}\leftarrow\{\epsilon\}$ \tcp*{initialization}
	fill $\mathbf{T}$ by membership queries\;
	$\mathit{equivalent}$ $\leftarrow$ $\bot$\;
	\While{$\mathit{equivalent}$ = $\bot$}{
		$\mathit{prepared}$ $\leftarrow$ is\_prepared($\mathbf{T}$) \tcp*{whether the table is prepared}
		\While{$\mathit{prepared}$ = $\bot$}
		{
			\uIf{$\mathbf{T}$ is not closed}{
				$\!$make\_closed($\mathbf{T}$)
			}
			\uIf{$\mathbf{T}$ is not consistent}{
				$\!$make\_consistent($\mathbf{T}$)
			}
			\uIf{$\mathbf{T}$ is not evidence-closed}{
				$\!$make\_evidence\_closed($\mathbf{T}$)
			}
			$\mathit{prepared}$ $\leftarrow$ is\_prepared($\mathbf{T}$)\;
		}
		$\text{M} \leftarrow$ build\_DFA($\mathbf{T}$) \tcp*{transforming $\mathbf{T}$ to a DFA $\text{M}$}
		$\mathcal{H} \leftarrow$ build\_hypothesis($\text{M}$) \tcp*{constructing a hypothesis $\mathcal{H}$ from $\text{M}$}
		$\mathit{equivalent}$, $\mathit{ctx}$ $\leftarrow$ equivalence\_query($\mathcal{H}$)\;
		\If{$\mathit{equivalent}$ = $\bot$}{
			ctx\_processing($\mathbf{T}$, $\mathit{ctx}$) \tcp*{counterexample processing}
		}
	}
	\Return $\mathcal{H}$\;
\end{algorithm}

The learning process that integrates all the previously stated
ingredients is presented in Algorithm~\ref{alg:learning}. 
First, the learner initializes a timed observation table $\mathbf{T}$, where $\bm{S}=\{\epsilon\}$, $\bm{E}=\{\epsilon\}$. For each $\sigma\in\Sigma$, the learner obtains the corresponding reset-clocked word $\pi(\gamma)$ and the acceptance status in $\mathcal{A}$ of the clocked word $\gamma=(\sigma,\{0\}^{\lvert \mathcal{C} \rvert})$ through membership queries and adds the results to $\bm{R}$. Before constructing a hypothesis, the learner performs several rounds of operations described in Section~\ref{sbsc:membership} until $\mathbf{T}$ is prepared. Then, a hypothesis $\mathcal{H}$ is constructed leveraging an intermediate DFA $\text{M}$ and submitted to the teacher for an equivalence query. If the answer is positive, $\mathcal{H}$ recognizes the target language. Otherwise, the learner receives a counterexample $\mathit{ctx}$ to update $\mathbf{T}$.
The whole procedure repeats until the teacher gives a positive answer to an equivalence query. 
An illustrative case on learning DTA $\mathcal{A}$ in Figure~\ref{fig:timedautomata} is presented in Appendix~\ref{appendix:illustrative_case}. 
We have the following result.

\begin{theorem}\label{theorem:powerfulcorrectness}
Algorithm~\ref{alg:learning} terminates and returns a CTA $\mathcal{H}$ which recognizes the target timed language $\mathcal{L}$.
\end{theorem}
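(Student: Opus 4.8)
The plan is to prove the two claims of Theorem~\ref{theorem:powerfulcorrectness} separately: first \emph{termination}, then \emph{correctness}. For termination, the key idea is to bound the number of distinct rows that can ever appear in the prepared timed observation table. By Lemma~\ref{lemma:finitevalid}, two valid reset-clocked words reaching the same symbolic state of $\mathcal{A}$ are equivalent under $\sim_{\mathscr{L}_{r}(\mathcal{A})}$, and by Lemma~\ref{lemma:finiteinvalid} all invalid (in particular all doomed) reset-clocked words collapse into a single class; hence by Theorem~\ref{thm:equivalenceclass} the relation $\sim_{\mathscr{L}_{r}(\mathcal{A})}$ has finitely many classes, bounded by the number of symbolic states $|L|\cdot|\mathcal{R}|$ plus one. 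I would then argue that each $\textit{row}(\gamma_r)$ is an invariant of the equivalence class of $\gamma_r$: since every $e\in\bm{E}$ is a region word and, by Definition~\ref{def:equivalencerelation}, equivalent reset-clocked words agree on both the acceptance value $f(\gamma_r,e)$ and the reset label $g(\gamma_r,e)$ for every such $e$, two words in the same class always yield the same row. Because $\bm{S}$ is \emph{reduced}, its elements have pairwise distinct rows, so $|\bm{S}|$ never exceeds the finite number of equivalence classes. This caps the number of locations, and therefore the number of times the table can be made \emph{closed} (each adds a new element to $\bm{S}$). Each counterexample is processed by adding finitely many prefixes to $\bm{R}$, and each equivalence query that returns a counterexample forces, after finitely many make-operations, at least one new distinct row in $\bm{S}$ — otherwise the hypothesis would already be correct. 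Since $|\bm{S}|$ is bounded, only finitely many equivalence queries can return ``no''.

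For correctness, I would assume the main loop halts, which happens exactly when an equivalence query returns ``yes'', i.e. $\mathcal{L}(\mathcal{H})=\mathcal{L}(\mathcal{A})=\mathcal{L}$. By Theorem~\ref{theorem:cta}, the returned hypothesis $\mathcal{H}$ is a CTA, so the object returned is of the claimed type, and by construction it recognizes exactly the target timed language. The substantive content is thus already delivered by the termination argument together with Theorem~\ref{theorem:cta}; it remains to confirm that the equivalence oracle described in Section~\ref{sbsc:eqctx} is sound, i.e. that a ``yes'' answer genuinely certifies $\mathcal{L}(\mathcal{H})=\mathcal{L}(\mathcal{A})$. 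This follows from the two-sided inclusion test using closure of deterministic timed automata under complementation~\cite{AlurD94}, so no gap arises here.

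The main obstacle I anticipate is the termination step, specifically showing that the interleaving of the three make-operations (closed, consistent, evidence-closed) itself terminates before each hypothesis construction. Making the table consistent adds region words to $\bm{E}$, which lengthens each row and can in principle re-trigger closedness or evidence-closedness violations, so the inner \textbf{while} loop is not obviously monotone in a single quantity. To handle this I would exhibit a well-founded measure: I would show that each make-consistent operation strictly increases the number of distinct rows realizable over $\bm{E}$ (it separates two previously identical rows via the newly added suffix by Definition~\ref{def:equivalencerelation} and the consistency conditions), while make-closed strictly increases $|\bm{S}|$, and make-evidence-closed only adds finitely many prefixes without ever merging classes. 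Since both the number of distinguishable rows and $|\bm{S}|$ are bounded above by the finite class count from Theorem~\ref{thm:equivalenceclass}, and evidence-closure is bounded once $\bm{S}$ and $\bm{E}$ are fixed, the inner loop must stabilize. Care is needed to argue that adding a suffix $\llbracket vw(\bm{\sigma_r})\rrbracket\cdot e$ actually refines the row partition rather than leaving it unchanged, which is precisely what the consistency violation guarantees; I would make this the technical crux of the proof.
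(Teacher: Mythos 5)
Your overall skeleton (correctness from the soundness of the equivalence oracle plus Theorem~\ref{theorem:cta}; termination from an injection of $\bm{S}$ into the finitely many classes of $\sim_{\mathscr{L}_{r}(\mathcal{A})}$) matches the paper, and your bound on $|\bm{S}|$ is exactly the paper's. The gap is in your progress argument for the outer loop: you claim that every counterexample ``forces, after finitely many make-operations, at least one new distinct row in $\bm{S}$ --- otherwise the hypothesis would already be correct.'' This is false for this algorithm. The hypothesis is not determined by the location structure alone: its guards are produced by the partition function $P(\cdot)$ from the finitely many clock valuations currently recorded in the table. A counterexample can therefore expose a wrong guard rather than a missing location. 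Processing such a counterexample adds prefixes to $\bm{R}$ whose rows coincide with rows already in $\bm{S}$ (so the table remains closed and reduced, and $|\bm{S}|$ does not grow), yet the new clocked actions enlarge $\Psi_{l,\sigma}$ and hence refine the partition and the guards of the next hypothesis. Consequently, boundedness of $|\bm{S}|$ alone does not bound the number of failed equivalence queries, and your termination argument does not go through.

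The paper closes exactly this hole with a second counting argument that your proposal is missing: by consistency of a prepared table, any two transitions $(l,(\sigma,\mathbf{v},\mathbf{b}),l')$ and $(l,(\sigma,\mathbf{v}',\mathbf{b}'),l'')$ of the intermediate DFA $\text{M}$ with $\llbracket\mathbf{v}\rrbracket=\llbracket\mathbf{v}'\rrbracket$ must satisfy $l'=l''$ and $\mathbf{b}=\mathbf{b}'$, so each location--action pair carries at most $\Lambda=|\mathcal{C}|!\cdot 2^{|\mathcal{C}|}\cdot\prod_{c\in\mathcal{C}}(2\kappa(c)+2)$ transitions in the hypothesis. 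Termination then follows because every iteration either adds an element to $\bm{S}$ or refines at least one partition (i.e., adds a fresh transition), or both, and both quantities are finite; this is also what underlies the paper's bound of $mn\Lambda^{2}$ on the number of equivalence queries. Your well-founded-measure discussion for the inner loop of make-operations is sensible and compatible with the paper, but to repair the proof you must replace the ``every counterexample yields a new location'' claim with a bound on the number of transitions per location--action pair.
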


\subsection{Complexity}\label{sbsc:complexity}
Suppose that $\mathcal{L}$ is a target timed language accepted by a DTA $\mathcal{A}$, let $n=\lvert L\rvert$ represent the number of locations of $\mathcal{A}$, $m=\lvert \Sigma\rvert$ represent the size of the alphabet, 
$\kappa$ be a function mapping each clock $c\in\mathcal{C}$ to the largest integer appearing in the guards over $c$ 
and $\Lambda=|\mathcal{C}|!\cdot 2^{|\mathcal{C}|}\cdot\prod_{c\in\mathcal{C}}(2\kappa(c)+2)$ represent the bound of the number of clock regions. 
We measure the number of queries.

\noindent\textbf{Equivalence query.} According to the proof of Theorem~\ref{theorem:powerfulcorrectness}, $\mathcal{H}$ has at most $n\cdot \Lambda$ locations. The number of transitions from a location for an action is at most $\Lambda$.
Therefore, the number of transitions in $\mathcal{H}$ is at most $mn\Lambda^{2}$. As every counterexample adds at least one fresh transition to $\mathcal{H}$ if considering each final timing constraint of the partition corresponds to a transition, the number of equivalence queries is bounded by $mn\Lambda^{2}$.

\noindent\textbf{Membership query.} In order to count the number of membership queries, $(\lvert\bm{S}\rvert+\lvert\bm{R}\rvert)\cdot\lvert\bm{E}\rvert$ needs to be counted. 
Since in the observation table, $\bm{E}$ is used to distinguish the elements in $\bm{S}$, we have $|\bm{E}| \leq n\cdot \Lambda$. 
Next, $\lvert\bm{R}\rvert$ is analyzed as follows. Let $h$ denote the maximum length of all counterexamples obtained by the learner. The algorithm adds elements to $\bm{R}$ in three cases: processing of counterexamples, making $\mathbf{T}$ closed, and making $\mathbf{T}$ evidence-closed. 
In the first case, because the number of counterexamples equals to the number of equivalence queries, which is bounded by $mn\Lambda^{2}$, the number of prefixes added to $\bm{R}$ is at most $hmn\Lambda^{2}$ in this case. In the second case, since for each prefix in $\bm{S}$ and each action, there is a corresponding prefix in $\bm{R}$, the number of prefixes added to $\bm{R}$ in this step is at most $mn\cdot \Lambda$. In the last case, the number of prefixes added to $\bm{R}$ depends on $\lvert\bm{S}\rvert\cdot\lvert\bm{E}\rvert$, which is at most $(n\cdot \Lambda)^2$. Therefore, the number of membership queries is bounded by $(n\cdot \Lambda+hmn\Lambda^{2}+mn\cdot \Lambda+(n\cdot \Lambda)^2)\cdot n\cdot \Lambda$.

\noindent\textbf{Reset information query.} In each membership query, the number of reset information queries depends on the length of the suffix. Let $p$ be the length of the longest suffix in $\bm{E}$. During the learning process, each new suffix added to $\bm{E}$ is composed of a region word with length one and a suffix that has already been in $\bm{E}$. Therefore, $p$ is at most $\lvert\bm{E}\rvert-1$. Hence, the number of reset information queries to fill the observation table is bounded by $(\lvert\bm{E}\rvert-1)\cdot(\lvert\bm{S}\rvert+\lvert\bm{R}\rvert)\cdot\lvert\bm{E}\rvert$. Moreover, to make the observation table closed, for each $s\in \bm{S}$, $m$ reset information queries are needed. As a result, the number in total is bounded by $(\lvert\bm{E}\rvert-1)\cdot(\lvert\bm{S}\rvert+\lvert\bm{R}\rvert)\cdot\lvert\bm{E}\rvert+m\cdot\lvert\bm{S}\rvert$.

\section{Learning DTA from a normal teacher}\label{sec:normalteacher}

In this section, we extend the learning algorithm for learning a DTA $\mathcal{A}$ from a normal teacher. In this situation, the teacher can only answer the standard membership query and equivalence query. In a standard membership query, the input is delay-timed words, and the teacher returns whether the timed word is in the target timed language or not. In a standard equivalence query, if $\mathcal{L}(\mathcal{A})\neq\mathcal{L}(\mathcal{H})$, the teacher provides a delay-timed word as a counterexample. 

The algorithm in the normal teacher situation is based on the learning procedure in Section~\ref{sec:powerfulteacher}.
In this algorithm, the structure of the observation table is unchanged. $\bm{S}\cup\bm{R}$ still stores reset-clocked words, and $\bm{E}$ still stores region words. In order to obtain reset-clocked words from delay-timed words of the membership queries, the learner \emph{guesses} clock reset information in the table. 

Since guessing the reset information, for each reset case, it will result in a table instance. A list $\mathit{ToExplore}$ is maintained to collect all these table instances that need to be explored later. We use breadth-first search (BFS) to explore the tables in $\mathit{ToExplore}$, which is crucial for ensuring the termination. Since the algorithm cannot detect incorrect guesses of reset information, depth-first search (DFS) may not terminate if it follow a branch that never finds the correct hypothesis. Moreover, our BFS is not based on a strictly FIFO queue. In each round, we take out the table instance with the minimal number of guessed reset. Since the table instance contains reset information, the operations on the table instance are the same as those in the powerful teacher situation. 

The algorithmic summary (Algorithm~\ref{alg:learningnormal}) is provided in Appendix~\ref{appendix:learnfromnormal}, as its structure is similar to Algorithm~\ref{alg:learning}. 
Details are as follows.

The initial tables in $\mathit{ToExplore}$ are as follows. 
For each action $\sigma\in\Sigma$, there is one row in $\bm{R}$ corresponding to the clocked word $\gamma=(\sigma, \{0\}^{\lvert \mathcal{C} \rvert})$. 
To transform it to a reset-clocked word, the learner needs to guess the reset information of each clock in $\gamma$, get a reset-clocked word $\gamma_r=(\sigma, \{0\}^{\lvert \mathcal{C} \rvert}, \mathbf{b})$ and add $\gamma_r$ into $\bm{R}$. For each action $\sigma$, there are $2^{\lvert \mathcal{C} \rvert}$ possible combinations of guesses. So, in this step, $2^{\lvert \mathcal{C} \rvert\lvert\Sigma\rvert}$ table instances are added to $\mathit{ToExplore}$.

In each iteration of the algorithm, one table instance is taken out of
$\mathit{ToExplore}$. The learner checks whether the table is closed,
consistent, and evidence-closed. If the table is not closed,
i.e. there exists $r\in\bm{R}$ such that $row(r)\neq row(s)$ for all
$s\in \bm{S}$. In this case, the learner moves $r$ from $\bm{R}$ to $\bm{S}$. Then
for each $\sigma\in \Sigma$, a word $\gamma=vw(r)\cdot (\sigma,\{0\}^{\lvert \mathcal{C} \rvert})$ is created, and the learner needs to guess the reset information of $(\sigma,\{0\}^{\lvert \mathcal{C} \rvert})$. Similar to the initialization of the table, there are $2^{\lvert \mathcal{C} \rvert}$ possible combinations of guesses for action $\sigma$. Therefore, $2^{\lvert \mathcal{C} \rvert\lvert\Sigma\rvert}$ unfilled table
instances will be generated. Next, these $2^{\lvert \mathcal{C} \rvert\lvert\Sigma\rvert}$ unfilled table instances are filled by membership queries. Because the learner cannot initiate the reset information query, for each added entry in $\bm{R}$, it is necessary to guess
reset information for all transitions in
$e\in\bm{E}$. For an element $e\in\bm{E}$, the reset information of $\lvert e\rvert\cdot \lvert\mathcal{C}\rvert$ clocks needs to be guessed. Therefore, for each new row in $\bm{R}$, a total of $(\sum_{e_i \in \bm{E}\setminus\{\epsilon\}}{\lvert e_i \rvert}) \cdot \lvert \mathcal{C} \rvert$ clock reset information needs to be guessed. Therefore, this procedure inserts at most $2^{(\sum_{e_i \in \bm{E}\setminus\{\epsilon\}}{\lvert e_i \rvert}) \cdot \lvert \mathcal{C} \rvert^2\cdot \lvert \Sigma \rvert}$ table instances to $\mathit{ToExplore}$.

If the table is not consistent, i.e. there exist two reset-clocked words $\gamma_{r},\gamma_{r}'\in \bm{S}\cup\bm{R}$ at least, for  $\bm{\sigma_{r}},\bm{\sigma_{r}}'\in \bm{\Sigma_{r}}$ such that $\llbracket vw(\bm{\sigma_{r}})\rrbracket=\llbracket vw(\bm{\sigma_{r}}')\rrbracket$,  both $\gamma_{r}\bm{\sigma_{r}}$ and $\gamma_{r}'\bm{\sigma_{r}}'\in \bm{S}\cup\bm{R}$. However, $\mathit{row}(\gamma_{r})=\mathit{row}(\gamma_{r}')$ but $\mathit{row}(\gamma_{r}\bm{\sigma_{r}})\neq\mathit{row}(\gamma_{r}'\bm{\sigma_{r}'})$ or $resets(\bm{\sigma_{r}})\neq resets(\bm{\sigma_{r}}')$.
Let $e\in\bm{E}$ be one place where they are different. Then
$\llbracket vw(\bm{\sigma_{r}})\rrbracket\cdot e$ needs to be added to $\bm{E}$. For each entry in
$\bm{S}\cup\bm{R}$, all transitions in
$\llbracket vw(\bm{\sigma_{r}})\rrbracket\cdot e$
need to be guessed, and then the table can be
filled.  Hence, $2^{\lvert \llbracket vw(\bm{\sigma_{r}})\rrbracket \cdot e \rvert \cdot\lvert \mathcal{C} \rvert\cdot (\lvert \bm{S}
  \rvert + \lvert \bm{R} \rvert)}$
filled table instances will be generated and inserted into
$\mathit{ToExplore}$. The operation for making tables evidence-closed
is analogous.

If the current table is prepared, the learner builds a hypothesis $\mathcal{H}$ and makes an equivalence query to the teacher. If the answer is negative, the teacher gives a delay-timed word
$\omega$ as a counterexample. The learner first finds the longest reset-clocked word $\gamma_r$ in $\bm{S}\cup\bm{R}$, which agrees with a prefix of $\omega$ after converting $\gamma_r$ to a delay-timed word. Then the reset information of  $\gamma_r$ is copied to $\omega$, and the reset information of the remainder of $\omega$ needs to be guessed. Hence, at most $2^{\lvert\omega\rvert\cdot \lvert \mathcal{C} \rvert}$ unfilled table instances are generated. To fill these table instances, the reset information of $\bm{E}$ also needs to be guessed. Therefore, there are at most $2^{\lvert\omega\rvert\cdot(\sum_{e_i \in \bm{E}\setminus\{\epsilon\}}{\lvert e_i \rvert}) \cdot \lvert \mathcal{C} \rvert^2}$ filled table instances need to be added in $\mathit{ToExplore}$.

\begin{theorem} \label{theorem:normalterminate}
  The algorithm learning DTA from a normal teacher terminates and returns a correct CTA $\mathcal{H}$.
\end{theorem}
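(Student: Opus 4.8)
The plan is to split the statement into a correctness part and a termination part, and to reduce both to the already-established powerful-teacher result (Theorem~\ref{theorem:powerfulcorrectness}). Correctness upon termination is immediate: the algorithm returns a hypothesis $\mathcal{H}$ only after a positive answer to an equivalence query, and since the normal teacher's equivalence query is posed over the timed languages (whether $\mathcal{L}(\mathcal{A})=\mathcal{L}(\mathcal{H})$), a positive answer directly certifies that $\mathcal{H}$ recognizes the target timed language. The constructions building $\mathcal{H}$ from a prepared table instance are identical to those of Section~\ref{sec:powerfulteacher}, so $\mathcal{H}$ is a CTA by Theorem~\ref{theorem:cta}. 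Hence the only real work is termination.

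For termination, the first step is to exhibit a distinguished branch of the search --- the \emph{correct-guess branch} --- along which every guessed reset bit agrees with the true reset information of $\mathcal{A}$ on the corresponding run. I would argue that, because a correct guess reproduces exactly the answer a reset information query would have returned, the table instance obtained along this branch coincides step by step with the table the powerful-teacher algorithm maintains; in particular the closed/consistent/evidence-closed operations and the hypothesis construction agree, and so does the counterexample handling, since copying the resets of the longest matching prefix in $\bm{S}\cup\bm{R}$ and then guessing the remaining resets correctly yields precisely the reset-delay-timed counterexample the powerful teacher would have supplied. By Theorem~\ref{theorem:powerfulcorrectness} the powerful-teacher run performs only finitely many operations before its hypothesis passes the equivalence query; consequently the correct-guess branch reaches, at some finite search depth, a prepared table instance whose hypothesis is correct. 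Let $N^\ast$ be the finite number of guessed reset bits in that terminal instance.

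The second step is to show that the breadth-first exploration reaches such a terminal instance after finitely many iterations. The key invariant is that every table operation that spawns children (making the table closed, consistent, or evidence-closed, and processing a counterexample) fixes at least one fresh reset bit, so each child carries strictly more guessed resets than its parent; moreover each operation spawns only finitely many children, because for a fixed instance the number of newly guessed bits is finite and the counterexample returned for its hypothesis is a single word of finite length. By induction on this progress measure, the set of table instances with at most $N^\ast$ guessed reset bits is finite. Since the BFS extracts instances in non-decreasing order of the number of guessed resets, it can process only finitely many instances before it is forced to extract one with guess count $N^\ast$; among these lies the correct terminal instance, on which the equivalence query succeeds and the algorithm halts. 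It may of course halt earlier on some other guess whose hypothesis happens to be correct, which is harmless since termination occurs only on a positive equivalence query.

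The main obstacle is the termination argument, and within it the finiteness of the bounded-guess frontier. The delicate point is that an \emph{incorrect} reset guess need not correspond to any genuine DTA, so the finite-index bound of Theorem~\ref{thm:equivalenceclass} does not apply to wrong branches and such a branch may grow without bound; this is exactly why a depth-first strategy can diverge. The argument must therefore avoid reasoning about individual wrong branches and instead rely solely on (i) finite branching together with the strict increase of the guessed-reset count at every spawning operation, which confines all instances of guess count at most $N^\ast$ to a finite prefix of the search, and (ii) the faithful simulation of the powerful teacher along the correct-guess branch, which guarantees that a correct terminal instance sits at guess count $N^\ast$. Combining these two facts yields termination, and the preceding correctness observation completes the proof.
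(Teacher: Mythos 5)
Your proposal is correct and follows essentially the same route as the paper's own proof: identify the final table of the powerful-teacher run as the instance produced along the all-correct-guess branch, observe that BFS ordered by the number of guessed resets processes only the (finitely many) instances with no more guesses than that table, and conclude termination, with correctness coming from the positive equivalence query and Theorem~\ref{theorem:cta}. Your write-up is in fact somewhat more explicit than the paper's, which leaves the powerful-teacher simulation and the finite-branching/progress-measure justification implicit.
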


\noindent\emph{Complexity analysis.} 
If $\mathbf{T}=(\Sigma, \bm{\Sigma_{G}}, \bm{\Sigma_{r}}, \bm{S}, \bm{R}, \bm{E}, f, g)$ is the final table corresponding to the correct hypothesis, 
the number of guessed resets in $\bm{S}\cup\bm{R}$ is $(|\bm{S}|+|\bm{R}|) \cdot \lvert \mathcal{C} \rvert$, and the
number of guessed resets for entries in each row of the table is $(\sum_{e_i \in \bm{E}\setminus\{\epsilon\}}{\lvert e_i \rvert}) \cdot \lvert \mathcal{C} \rvert$. 
Since we use BFS to explore the tables in $ToExplore$ w.r.t. the size of the tables, a smaller table is always handled before any larger ones. Hence, before finding the final table $\mathbf{T}$ corresponding to the correct hypothesis, the number of the handled tables is bounded by 
$\mathcal{O}(2^{(|\bm{S}|+|\bm{R}|) \cdot \lvert \mathcal{C} \rvert^2 \cdot(\sum_{e_i \in \bm{E}\setminus\{\epsilon\}}{\lvert e_i \rvert})})$.

\section{Experiments}\label{sec:experiment}

We have implemented the a prototype named \textsc{DTAL} in JAVA for our learning algorithms of deterministic timed automata with multiple clocks. Specially, we denote the implementations for the powerful and the normal teachers by \textsc{DTAL\_powerful} and by \textsc{DTAL\_normal}, respectively.  
We compared our methods with \textsc{LearnTA} proposed in~\cite{waga2023active} for learning DTAs, and with \textsc{LearnRTA} proposed in~\cite{AnWZZZ21} for learning real-time automata (RTAs). These two algorithms are implemented in C++ and Python, respectively. Note that we omitted checking of evidence-closed condition in our experiments to avoid adding too many extra rows in $\bm{R}$. 
The evaluations have been carried out on an Intel Core-i7 processor with 16GB RAM running 64-bit Ubuntu. 

\subsection{Evaluation on \textsc{DTAL\_powerful}}

\noindent\textbf{Experiment on RTAs.} We first compared the performance of the learning algorithm \textsc{DTAL\_powerful} with the algorithm \textsc{LearnRTA} proposed in~\cite{AnWZZZ21} by learning real-time automata (RTAs). Since RTA can be viewed as a TA with only one clock which resets at every transition, we can directly use \textsc{DTAL\_powerful} to learn real-time automata instead of \textsc{DTAL\_normal} by trivially answering reset information queries. We used the $240$ randomly generated RTAs from~\cite{AnWZZZ21}, which are divided into $12$ groups depending on different numbers of locations, size of alphabet, and
maximum number of time intervals that can be generated by a single division. These three numbers make up case ID. Table~\ref{tab:RTAresults} summarizes the experimental results. It can be seen that the number of equivalence queries needed in \textsc{DTAL\_powerful}  is always very close to that in \textsc{LearnRTA}. However, the number of membership queries needed in \textsc{DTAL\_powerful} is sightly smaller than that in \textsc{LearnRTA}. This is because that we omitted checking of evidence-closed condition. Besides, the runtime needed in \textsc{DTAL\_powerful} is shorter than that in \textsc{LearnRTA}. This may be due to the fact that these two algorithms are written in JAVA and Python, respectively.


\noindent\textbf{Experiment on DTAs.} We first compared the performance of our learning algorithm \textsc{DTAL\_powerful} with the algorithm \textsc{LearnTA} proposed in~\cite{waga2023active} on learning the DTA ``unbalanced:2" in~\cite{waga2023active} as an example. The target automaton contains $5$ locations, $1$ action and $5$ transitions. Our algorithm \textsc{DTAL\_powerful} yields a timed automaton with only $8$ locations (including the ``sink" location) and $2$ clocks, whereas \textsc{LearnTA} yields a timed automaton with $25$ locations and $7$ clocks. It gives an evidence that our algorithm can return a much smaller DTA with fewer number of clocks. Since when learning a DTA by using the algorithm \textsc{DTAL\_normal}, the final table will never be larger than the final table in the powerful teacher situation. Therefore, the experimental result gives us highly confidence that \textsc{DTAL\_normal} will return a much smaller automaton than \textsc{LearnTA} once it terminates.

\begin{table}\centering
  \caption{Experimental results on learning RTAs. \#Reset, \#Membership, \#Equivalence represent the number of reset information queries, membership queries and equivalence queries needed during the learning process, respectively. $n$ is the number of locations of the learned automaton.} 
  \label{tab:RTAresults}
  \resizebox{1\columnwidth}{!}{
  \begin{tabular}{ccccccc}
    \toprule
    Case ID & Algorithm & \#Reset & \#Membership &\#Equivalence &$n$ &time(s) \\
    \midrule
    \multirow{2}{*}{4-4-4} & \textsc{DTAL\_powerful} & $228.0$ & $267.6$ & $28.7$ & $5$ &$2.9$ \\ 
                           &        \textsc{LearnRTA} & $-$ & $298.4$&$28.3$ &$5$& $4.5$\\
    \midrule
    \multirow{2}{*}{6-4-4} & \textsc{DTAL\_powerful} & $529.3$ &$578.8$ &$46.8$ & $7$ &$16.4$\\ 
                   &       \textsc{LearnRTA} & $-$ & $696.0$ & $45.6$ & $7$ & $37.1$\\
    \midrule
    \multirow{2}{*}{7-2-4} & \textsc{DTAL\_powerful} & $388.2$ & $383.3$ & $32.6$ & $8$ & $3.5$\\ 
                   &    \textsc{LearnRTA} & $-$ & $554.0$ & $31.2$ & 8 & $9.0$\\
    \midrule
    \multirow{2}{*}{7-3-4} &\textsc{DTAL\_powerful} & $458.2$ & $456.9$ & $37.6$ & $8$ & $6.5$\\ 
                   &        \textsc{LearnRTA} & $-$ & $630.4$ & $36.7$ & 8 & $17.0$\\
    \midrule
    \multirow{2}{*}{7-4-2}& \textsc{DTAL\_powerful} & $469.9$ & $436.9$ & $23.6$ & $8$ & $2.9$\\ 
                   &    \textsc{LearnRTA} & $-$ & $630.0$ & $22.8$ & 8 & $11.0$\\
    \midrule
    \multirow{2}{*}{7-4-3}& \textsc{DTAL\_powerful} & $470.9$ & $496.6$ & $32.0$ & $8$ & $5.8$\\ 
                   &    \textsc{LearnRTA} &$-$ &$680.0$ & $31.4$ & 8 &$16.1$\\
    \midrule
    \multirow{2}{*}{7-4-4}& \textsc{DTAL\_powerful} & $529.4$ & $571.5$ & $46.6$ & $8$ & $14.0$\\ 
                   &    \textsc{LearnRTA} & $-$ & $793.6$ & $45.2$ & 8 & $34.7$\\
    \midrule
    \multirow{2}{*}{7-4-5}& \textsc{DTAL\_powerful}&$613.6$ &$639.3$ &$50.0$ & $8$ & $19.7$\\ 
                   &    \textsc{LearnRTA}  & $-$ & $831.3$ & $48.5$ & 8 & $46.0$ \\
    \midrule
    \multirow{2}{*}{7-4-6}& \textsc{DTAL\_powerful} & $907.8$ & $986.4$ &$85.2$ & $8$&$110.3$\\ 
                   &    \textsc{LearnRTA} & $-$ &$1149.9$ & $83.4$ & 8 & $142.5$ \\
    \midrule
    \multirow{2}{*}{7-4-7}& \textsc{DTAL\_powerful} & $1029.8$ & $1143.1$ & $95.0$ & $8$&$176.1$\\ 
                   &    \textsc{LearnRTA}  &$-$ &$1304.3$ & $93.1$& 8 &$219.9$\\
    \midrule
    \multirow{2}{*}{8-4-4}& \textsc{DTAL\_powerful} & $804.0$ & $841.4$ &$56.3$ & $9$&$33.9$\\ 
                   &    \textsc{LearnTA}  &$-$ &$1156.5$ & $54.3$& 9 &$55.7$\\
    \midrule
    \multirow{2}{*}{10-4-4}& \textsc{DTAL\_powerful}&$1268.2$& $1291.1$ & $70.2$ & $11$ & $94.1$ \\ 
                   &    \textsc{LearnTA}  &$-$ &$1843.3$ & $67.9$& 11 &$167.7$\\
  \bottomrule
\end{tabular}}
\vspace{-0.2cm}
\end{table}

\begin{figure*}[!t]
\begin{center}
\begin{minipage}[b]{0.35\textwidth}
\begin{center}
\resizebox{!}{0.9\textwidth}{
\begin{tikzpicture}[scale=0.65, ->, >=stealth', shorten >=1pt, auto, node distance=2cm, semithick, every node/.style={scale=0.65}]
        \node[initial,accepting, state]  (1) {$l_1$};
          \node[accepting, state](2) at (3,0) {$l_2$};
        \node[accepting,state](3) at (6,0) {$l_3$};
         \node[accepting, state](4) at (6,-6) {$l_4$};
          \node[accepting, state](5) at (0,-6) {$l_5$};
          
        \path  
    
        (1) edge[ color=black] node[ align=center] {$a,$ \\ $ 1<c_{1}<2,$\\ $\{c_1\}$} (2)

        (2) edge[color=black] node[ align=center] {$a,$ \\ $ 1<c_{1}<2\wedge 2<c_{2}<4,$\\ $\{c_1,c_2\}$} (3)
        
        (3) edge[ color=black] node[left, align=center] {$a,$ \\ $ 1<c_{1}<2,$\\ $\{c_1\}$} (4)
        
        (4) edge[color=black] node[ align=center] {$a,$ \\ $ 1<c_{1}<2\wedge 2<c_{2}<4,$\\ $\{c_1,c_2\}$} (5)

        (5) edge[ color=black] node[ align=center] {$a,$ \\ $ 1<c_{1}<2\wedge c_{2}> 1,$\\ $\{c_1\}$} (1);
\end{tikzpicture}
}
\end{center}
\end{minipage}
\begin{minipage}[b]{0.49\textwidth}
\begin{center}
\resizebox{!}{0.8\textwidth}{
\begin{tikzpicture}[scale=0.65, ->, >=stealth', shorten >=1pt, auto, node distance=2cm, semithick, every node/.style={scale=0.65}]
        \node[initial,accepting, state]  (1) {$l_1$};
         \node[accepting, state](3) at (3,0) {$l_2$};
          \node[accepting, state](4) at (6,-6) {$l_6$};
          \node[accepting, state](5) at (9,-6) {$l_5$};
        \node[accepting,state](6) at (3,-6) {$l_7$};
         \node[accepting, state](7) at (9,0) {$l_4$};
          \node[accepting, state](8) at (6,0) {$l_3$};
          
        \path  
        


        
        (1) edge[in=100,out=90, color=black] node[ align=center] {$a,$ \\ $ 1<c_{1}<2\wedge c_{2}> 1,$\\ $\{c_1\}$} (3)

        (1) edge[in=260,out=270, color=black] node[ align=center] {$a,$ \\ $ c_{1}\geq 2\wedge 1<c_{2}<2,$\\ $\{c_1\}$} (3)
        
        
        (3) edge[in=260,out=270,color=black] node[ align=center] {$a,$ \\ $ 1<c_{1}<2\wedge c_{2}>2,$\\ $\{c_1,c_2\}$} (8)
        
        (3) edge[in=100,out=90, color=black] node[ align=center] {$a,$ \\ $ c_{1}\geq 2\wedge 2<c_{2}\leq 3,$\\ $\{c_1,c_2\}$} (8)
        
        (8) edge[in=90,out=90,color=black] node[ align=center] {$a,$ \\ $ c_{1}\geq 2\wedge 1<c_{2}<2,$\\ $\{c_1\}$} (7)

        (8) edge[in=270,out=280, color=black] node[ align=center] {$a,$ \\ $ 1<c_{1}<2\wedge c_{2}> 1,$\\ $\{c_1\}$} (7)
        
        (7) edge[in=50,out=310,color=black] node[ left,align=center] {$a,$ \\ $ (1<c_{1}<2\wedge c_{2}>2)\vee$\\$(c_{1}\geq 2\wedge 2<c_{2}\leq 3),$\\ $\{c_1,c_2\}$} (5)
        
        
        (5) edge[in=90,out=90, color=black] node[above, align=center] {$a,$ \\ $ 1<c_{1}<2\wedge c_{2}> 1,$\\ $\{c_1\}$} (4)
        
        (5) edge[in=270,out=270,color=black] node[above, align=center] {$a,$ \\ $ c_{1}\geq 2\wedge 1<c_{2}<2,$\\ $\{c_1\}$} (4)
        
        (4) edge[in=270,out=270,color=black] node[ align=center] {$a,$ \\ $ 1<c_{1}<2\wedge c_{2}>2,$\\ $\{c_1\}$} (6)
        
        (4) edge[ color=black] node[ align=center] {$a,$ \\ $ c_{1}\geq 2\wedge 2<c_{2}\leq 3,$\\ $\{c_1\}$} (6)
        
        (6) edge[in=270, color=black] node[ align=center] {$a,$ \\ $ c_{1}> 1\wedge 3<c_{2}<4,$\\ $\{c_1,c_2\}$} (8);
\end{tikzpicture}
}
\end{center}
\end{minipage}
\end{center}
\caption{Target DTA unbalanced:2 in~\cite{waga2023active} (left) and the learnt DTA by \textsc{DTAL\_powerful} (right). The ``sink" location is omitted.}
\label{fig:unbalanced:2}
\end{figure*}

\subsection{Evaluation on \textsc{DTAL\_normal}}
\begin{table}\centering
  \caption{Experimental results on learning DTAs. Each row represents one DTA where $\lvert L\rvert$ is the number of locations, $\lvert \Sigma\rvert$ is the alphabet size, $\lvert \mathcal{C}\rvert$ is the number of clock variables, and $\kappa(\mathcal{C})$ is the maximum constant in the guards in the DTA. $n$ is the number of locations of the learned automaton. $-$ represents a timeout (1.5 hours).} 
  \label{tab:DTAresults}
  \resizebox{1\columnwidth}{!}{
  \begin{tabular}{ccccccccc}
    \toprule
    $\lvert L\rvert$&$\lvert \Sigma\rvert$&$\lvert \mathcal{C}\rvert$&$\kappa(\mathcal{C})$&Algorithm &\#Membership&\#Equivalence&$n$&time(s)\\
    \midrule
    \multirow{2}{*}{2} & \multirow{2}{*}{1}  & \multirow{2}{*}{2} & \multirow{2}{*}{3}&\textsc{DTAL\_normal} & $763145$ &$272734$ & $3$&$2440$\\ 
                   &      &     &     &    \textsc{LearnTA}&$1189$ &$6$ & $3$&$0.035$\\
    \midrule
    \multirow{2}{*}{2} & \multirow{2}{*}{1} & \multirow{2}{*}{2} & \multirow{2}{*}{30} & \textsc{DTAL\_normal}&$891513$ &$320108$ & $3$&$5093$\\ 
                   &      &     &     &  \textsc{LearnTA}  & $793983$&$33$ & $29$&$631$\\
    \midrule
    \multirow{2}{*}{3} & \multirow{2}{*}{1} & \multirow{2}{*}{2} & \multirow{2}{*}{3}& \textsc{DTAL\_normal}&$1020044$ &$102024$ & $3$ &$1269$\\ 
                   &      &     &     &      \textsc{LearnTA}& $1251$&$7$ & $4$&$0.009$\\
    \midrule
    \multirow{2}{*}{3} & \multirow{2}{*}{1} & \multirow{2}{*}{2} & \multirow{2}{*}{30}& \textsc{DTAL\_normal}& $914554$&$122622$ & $3$ & $2588 $\\ 
                   &      &     &     &  \textsc{LearnTA} & $1874410$&$49$ & $39$&$475$\\
    \midrule
    \multirow{2}{*}{4} & \multirow{2}{*}{1}  & \multirow{2}{*}{2} & \multirow{2}{*}{50}&\textsc{DTAL\_normal} & $-$ &$-$ & $-$&$-$\\ 
                   &      &     &     &    \textsc{LearnTA}&$-$ &$-$ & $-$&$-$\\
  \bottomrule
\end{tabular}}
\vspace{-0.2cm}
\end{table}

Table~\ref{tab:DTAresults} summarizes the experimental results. It shows that, as long as the maximum constant $\kappa(\mathcal{C})$ increases, the number of locations in the timed automata learned by \textsc{LearnTA} increases dramatically. This may be due to the fact that in the timed observation table of \textsc{LearnTA}, prefixes consist of a series of simple elementary languages. Hence, as the value of $\kappa(\mathcal{C})$ increases, the number of prefixes in the timed observation table increases significantly, thus affecting the number of locations of the hypothesis. Besides, in \textsc{LearnTA}, the number
of the membership queries blows up exponentially to the value of $\kappa(\mathcal{C})$. This is because that when learning DTA from a normal teacher, symbolic membership queries is forbidden, a symbolic membership query needs to be reduced to finitely many membership queries. However, the size of this decomposition exponentially blows up with respect to $\kappa(\mathcal{C})$. Similarly, in \textsc{LearnTA}, the number of equivalence queries also grows as $\kappa(\mathcal{C})$ increases.

The experimental results show that \textsc{DTAL} in the normal teacher situation requires more membership queries and equivalence queries. This is because when guessing the reset information of the target DTA, a large number of observation tables are generated, and more membership queries and equivalence queries are needed to maintain these observation tables. But it can return a much smaller automaton, which benefits from the equivalence relation presented in Section~\ref{sec:myhill}. The results also show that both algorithms, i.e., \textsc{DTAL} and \textsc{LearnTA}, need improvements before being applied in practice.

\section{Related work}\label{sec:relatedwork}

\noindent\textbf{Active learning.} In~\cite{GrinchteinJP06,GrinchteinJL10}, Grinchtein et al. propose learning algorithms for deterministic event-recording automata (ERA)~\cite{AlurFH99}. ERA is a kind of timed automata with several restrictions on clocks such that for every action $a$, a clock $x_a$ is associated to record the time length from the last occurrence of $a$ to the current. Recently, in~\cite{HenryJM20}, Henry et al. consider the learning problem of reset-free ERA in which some transitions may reset no clocks. In~\cite{AnCZZZ20}, An et al. provide an active learning method for deterministic one-clock timed automata (DOTA). After that, several improvements have been proposed, e.g., using PAC learning~\cite{ShenAZZ0Z20}, mutation testing~\cite{TangSZAZZ22}, SMT solving~\cite{XuAZ22}, etc. Real-time automata can be viewed as a kind of DOTA such that the single clock resets at every transition. The efficient learning algorithms have been designed in both the deterministic case~\cite{AnWZZZ21} and the nondeterministic case~\cite{AnZZZ21}. Instead of using clocks, the ``timer'' is another kind of model to record the time length. It can be assigned a value, and if it counts down to zero, a visible timeout signal will be triggered. In~\cite{CaldwellCF16}, Caldwell et al. proposed an algorithm for learning the Mealy machine with timers from programmable logic controllers. In~\cite{VaandragerB021}, Vaandrager et al. presented an efficient learning algorithm for such models with one timer. Recently, an active learning algorithm of DTA with Myhill-Nerode style characterization is proposed in~\cite{waga2023active}. However, with a large maximum constant in the guards in the target DTA, the number of queries required during learning and the number of locations in the learned DTA grows significantly in this algorithm. Our algorithm can return a much smaller DTA.

\noindent\textbf{Passive learning.} In~\cite{Verwer07,VerwerWW12}, an algorithm is proposed to learn deterministic RTA~\cite{Dima01} from labelled time-stamped event sequences. A passive learning algorithm for timed automata with one clock is further
proposed in~\cite{VerwerWW09,VerwerWW11}. In~\cite{TapplerALL19}, Tappler et al. present a passive learning algorithm for deterministic timed automata based on genetic programming (GP). In further, by introducing a testing process on the system under test (SUT), Aichernig et al. add an active manner in the GP-based learning procedure~\cite{AichernigPT20}. Recently, in~\cite{TapplerAL22}, Tappler et al. present an SMT-based passive learning algorithm for deterministic timed automata. There are also some works on passive learning of timed automata in practical cases~\cite{PastoreMM17,CornanguerLRT22} and dynamical systems~\cite{JinAZZZ21}. 
As mentioned before, passive learning algorithms cannot guarantee the correctness of the learned model.

\section{Conclusion and Future work}\label{sec:conclusion}
We propose an active learning approach for deterministic multi-clock timed automata. We transform the problem into learning the reset-clocked language of the target automaton. We first introduce an equivalence relation for reset-clocked languages. Based on it, we present a learning algorithm for deterministic timed automata in the powerful teacher situation by designing the membership query and the reset information query over region words. In further, we drop the reset information query and present the active learning algorithm in the normal teacher situation where the learner can only make standard membership queries and equivalence queries. 

In future work, we intend to improve the algorithm by using the technique reported in~\cite{XuAZ22} so that it can be used in learning timed models with large sizes of practical systems. 

\begin{acks}
We thank the anonymous reviewers for their valuable feedback and suggestions for improving the quality of this paper. Y. Teng and M. Zhang are supported by Shanghai 2023 ``Science and Technology Innovation Action Plan": Special Project for Key Technical Breakthrough of Blockchain (No. 23511100800), and the National Natural Science Foundation of China (No. 61972284, 62032019). J. An is supported by ERATO HASUO Metamathematics for Systems Design Project (No. JPMJER1603), JST.
\end{acks}
\bibliographystyle{ACM-Reference-Format}
\bibliography{ref}

\newpage
\appendix

\section{Proofs of Lemmas and Theorems}
\label{appendix:proofs}
\begin{proof}[Proof of Lemma~\ref{lemma:logictoresetlogic}]
    The two conclusions are based on the fact that $\gamma$ and $\gamma'$ witness the same transition sequence of $\mathcal{A}$. We show the fact and prove the two conclusions as follows.
Since $\llbracket \gamma \rrbracket = \llbracket \gamma' \rrbracket$, suppose $\gamma=(\sigma_1, \mathbf{v}_1)(\sigma_2,\mathbf{v}_2)\cdots(\sigma_n,\mathbf{v}_n)$ and $\gamma'=(\sigma_1, \mathbf{v}_1')(\sigma_2,\mathbf{v}_2')\cdots(\sigma_n,\mathbf{v}_n')$, we have $\llbracket\mathbf{v}_i \rrbracket = \llbracket\mathbf{v}_i'\rrbracket$ for all $1\leq i\leq n$. Depending on the definitions of clock constraint and region, given a transition guard $\phi_i$, we have $\llbracket\mathbf{v}_i \rrbracket\subseteq \phi_i$ iff $\llbracket\mathbf{v}_i' \rrbracket\subseteq \phi_i$. What's more, since $\mathcal{A}$ is deterministic and complete, for every clocked action $(\sigma_i,\mathbf{v}_i)$, it triggers a unique transition $(l_{i-1},\sigma_i,\phi_i,\mathcal{B}_i,l_i)$ which is the same as the one triggered by $(\sigma_i,\mathbf{v}_i')$ for all $1\leq i \leq n$. Hence, $\gamma$ and $\gamma'$ witness the same transition sequence of $\mathcal{A}$, and thus the reset information in each step of $\gamma_{r}$ and $\gamma_{r}'$ are the same.
So, the clock valuations after passing $\gamma$ and $\gamma'$ in $\mathcal{A}$ also must belong to the same region. Hence, $\gamma$ and $\gamma'$ reach the same symbolic state. Therefore, the two conclusions hold.
\end{proof}

\begin{proof}[Proof of Lemma~\ref{lemma:alltssame}]
   According to the definition, $\gamma_{r}\gamma_r'$ and $\gamma_{r}\gamma_r''$ are two valid reset-clocked words of $\mathcal{A}$ and $\llbracket vw(\gamma_{r}\gamma_r')\rrbracket=\llbracket vw(\gamma_{r}\gamma_r'')\rrbracket$. 
Therefore, based on Lemma~\ref{lemma:logictoresetlogic}, $vw(\gamma_{r}\gamma_r')$ and $vw(\gamma_{r}\gamma_r'')$ witness the same transition sequence of $\mathcal{A}$. Thus, $\gamma_{r}\gamma_r$ and $\gamma_{r}\gamma_r''$ reach the same location. According to the lemma, we also have $resets(\gamma_r')=resets(\gamma_r'')$. Hence, the reset information in each step of $\gamma_{r}\gamma_r'$ and $\gamma_{r}\gamma_r''$ are the same. So, the clock valuations after passing $\gamma_{r}\gamma_r'$ and $\gamma_{r}\gamma_r''$ in $\mathcal{A}$ also must belong to the same region. Therefore, $\gamma_{r}\gamma_r'$ and $\gamma_{r}\gamma_r''$ reach the same symbolic state in $\mathcal{A}$.
\end{proof}

\begin{proof}[Proof of Lemma~\ref{lemma:samesymbolicstate}]
    Depending on the definition of the symbolic states, the clock valuations after running $\gamma_{r1}$ and $\gamma_{r2}$ on $\mathcal{A}$ must belong to the same region. Because by letting time pass from any two points in the same region, the next visited region is the same~\cite{MalerP04}, there are only two cases about the valid successors of $\gamma_{r1}$ and $\gamma_{r2}$ according to $\xi$. The first case is $\textit{vs}_{\mathcal{A}}(\gamma_{r1},\xi)=\textit{vs}_{\mathcal{A}}(\gamma_{r2},\xi)=\emptyset$ and the second case is $\textit{vs}_{\mathcal{A}}(\gamma_{r1},\xi)\neq\emptyset$ and $\textit{vs}_{\mathcal{A}}(\gamma_{r2},\xi)\neq\emptyset$. 

It is easy to show the correctness of the first case. Consider the second case, according to Definition~\ref{def:validsuccessor}, $\llbracket vw(\gamma_{r1}') \rrbracket = \llbracket vw(\gamma_{r2}') \rrbracket=\xi$. 
According to Lemma~\ref{lemma:logictoresetlogic}, $\gamma_{r1}'$ and $\gamma_{r2}'$ witness the same transition sequence reaching the same location $l'$ from some location $l$ of $\mathcal{A}$. What's more, the reset information over the transition sequences of $\gamma_{r1}'$ and $\gamma_{r2}'$ from $l$ is equal since the transition sequences are the same, i.e. $resets(\gamma_{r1}')=resets(\gamma_{r2}')$. Hence, the reset information at each step of $\gamma_{r1}'$ and $\gamma_{r2}'$ from $l$ are the same. Therefore, the clock valuations after passing $\gamma_{r1}\gamma_{r1}'$ and $\gamma_{r2}\gamma_{r2}'$ in $\mathcal{A}$ also must belong to the same region. So $\gamma_{r1}\gamma_{r1}'$ and $\gamma_{r2}\gamma_{r2}'$ also reach another same symbolic state in $\mathcal{A}$. 
\end{proof}

\begin{proof}[Proof of Lemma~\ref{lemma:finitevalid}]
    Since two valid reset-clocked words $\gamma_{r1}$ and $\gamma_{r2}$ reach the same symbolic state of $\mathcal{A}$, according to Lemma~\ref{lemma:samesymbolicstate}, given a $\xi\in\bm{\Sigma_{G}}^*$, for all $\gamma_{r1}'\in \textit{vs}_{\mathcal{A}}(\gamma_{r1},\xi)$ and $\gamma_{r2}' \in \textit{vs}_{\mathcal{A}}(\gamma_{r2},\xi)$, we have $\gamma_{r1}\gamma_{r1}'$ and $\gamma_{r2}\gamma_{r2}'$ also reach some same symbolic state. Hence, $\gamma_{r1}\gamma_{r1}'\in\mathscr{L}_{r}(\mathcal{A})$ iff $\gamma_{r2}\gamma_{r2}'\in\mathscr{L}_{r}(\mathcal{A})$. Besides, Lemma~\ref{lemma:samesymbolicstate} also shows that $resets(\gamma_{r1}')=resets(\gamma_{r2}')$. According to Definition~\ref{def:equivalencerelation}, we have $\gamma_{r1}\sim_{\mathscr{L}_{r}(\mathcal{A})}\gamma_{r2}$. 
\end{proof}

\begin{proof}[Proof of Lemma~\ref{lemma:finiteinvalid}]
    If $\gamma_{r1}$ and $\gamma_{r2}$ are two invalid reset-clocked words of $\mathcal{A}$, then for all $\xi\in\bm{\Sigma_{G}}^*$, we have $\textit{vs}_{\mathcal{A}}(\gamma_{r1},\xi)=\textit{vs}_{\mathcal{A}}(\gamma_{r2},\xi)=\emptyset$. Then, according to Definition~\ref{def:equivalencerelation}, $\gamma_{r1}\sim_{\mathscr{L}_{r}(\mathcal{A})}\gamma_{r2}$. 
\end{proof}

\begin{proof}[Proof of Theorem~\ref{thm:equivalenceclass}]
    Depending on the definitions of the region and symbolic state, each valid reset-clocked word $\gamma_r$ of $\mathcal{A}$ reaches a symbolic state. In Section~\ref{sbsc:timedautomata}, we know that the number of symbolic states of $\mathcal{A}$ is bounded by $|L|\cdot |\mathcal{C}|!\cdot 2^{|\mathcal{C}|} \cdot\prod_{c\in\mathcal{C}}(2\kappa(c)+2)$. Lemma~\ref{lemma:finitevalid} shows that all valid reset-clocked words form at most as many equivalence classes as the number of symbolic states. And Lemma~\ref{lemma:finiteinvalid} tells that all invalid reset-clocked words of $\mathcal{A}$ belong to the same equivalence class. Hence we have the conclusion that $\sim_{\mathscr{L}_{r}(\mathcal{A})}$ has a finite number of equivalence classes. 
\end{proof}

\begin{proof}[Proof of Theorem~\ref{theorem:basicidea}]
    By the transforming method of the reset-clocked word and the reset-delay-timed word, $\mathscr{L}_{r}(\mathcal{A}_1)=\mathscr{L}_{r}(\mathcal{A}_2)$ iff $\mathcal{L}_{r}(\mathcal{A}_1)=\mathcal{L}_{r}(\mathcal{A}_2)$. By the definitions of the delay-timed word and the reset-delay-timed word,  we conclude that $\mathcal{L}(\mathcal{A}_1)=\mathcal{L}(\mathcal{A}_2)$ if $\mathcal{L}_{r}(\mathcal{A}_1)=\mathcal{L}_{r}(\mathcal{A}_2)$.  Hence, if $\mathscr{L}_{r}(\mathcal{A}_1)=\mathscr{L}_{r}(\mathcal{A}_2)$, then $\mathcal{L}(\mathcal{A}_1)=\mathcal{L}(\mathcal{A}_2)$.
\end{proof}

\begin{proof}[Proof of Lemma~\ref{lemma:tabledfa}]
    First, consider the case that $\gamma_{r}\in\bm{S}$. According to the definition of $ f(\gamma_{r}, e)$, if $f(\gamma_{r}, e)=+$, then $vs_{\mathcal{A}}(\gamma_{r}, e)\neq\emptyset$. Because the observation table $\mathbf{T}$ is evidence-closed, there must be $\gamma_r'\in vs_{\mathcal{A}}(\gamma_{r}, e)$ and $\gamma_{r}\gamma_r'\in \bm{S}\cup\bm{R}$. According to the definition of observation table, $g(\gamma_r,e)=resets(\gamma_r')$. Since $f(\gamma_{r}, e)=+$, then $f(\gamma_r\gamma_r', \epsilon)=+$. According to the construction of $\text{M}$, $\gamma_{r}\gamma_r'$ ends in location $l_{\mathit{row}(\gamma_{r}\gamma_r')}\in F_{M}$, namely the constructed DFA $\text{M}$ accepts $\gamma_{r}\gamma_r'$. When $f(\gamma_{r}, e)=-$ and $vs_{\mathcal{A}}(\gamma_{r}, e)\neq\emptyset$, a similar analysis can be performed. If $f(\gamma_{r}, e)=-$ and $vs_{\mathcal{A}}(\gamma_{r}, e)=\emptyset$, it means that there is no valid successor. Suppose that $\text{M}$ has an accepting word in the form of $\gamma_r\gamma_r'$ such that $\llbracket vw(\gamma_r')\rrbracket = e$. According to the construction of $\text{M}$, there must be some word $\gamma_r''$ such that $\gamma_r''\gamma_r'\in\bm{S}\cup\bm{R}$ with $\mathit{row}(\gamma_r'')=\mathit{row}(\gamma_r)$,  $\gamma_r'\in vs_{\mathcal{A}}(\gamma_r'',e)\neq\emptyset$, and $f(\gamma_r'',e)=+$. It is impossible that $f(\gamma_r,e)\neq f(\gamma_r'',e)$ while $\mathit{row}(\gamma_r'')=\mathit{row}(\gamma_r)$. Thus, DFA $\text{M}$ has no accepting word in such a form.

For the case that $\gamma_{r}\in\bm{R}$, there exists $\gamma_r'''\in\bm{S}$ such that $\mathit{row}(\gamma_{r})= \mathit{row}(\gamma_r''')$ since $\mathbf{T}$ is closed, which implies that $ f(\gamma_{r} ,e)= f(\gamma_r''',e)$ and $ g(\gamma_{r} ,e)= g(\gamma_r''',e)$ for all $ e\in\bm{E}$. What's more, $\mathit{row}(\gamma_{r})= \mathit{row}(\gamma_r''')$ indicates that $\gamma_r'''$ and $\gamma_r$ reach the same location in $\text{M}$. Thus, after appending any suffixes, the behaviours must be the same. Thus, it is reduced to the first case.
\end{proof}

\begin{proof}[Proof of Lemma~\ref{lemma:partitiondeterministic}]
 Suppose that $i<j$, according to the fourth step, $I_i=W_i+A_i$ and $I_j=W_j+A_j$. Now we  proof that $I_i\cap I_j=\emptyset$. According to the third step, we have $W_i\cap W_j=\emptyset$.  In the first step, if $A_i$ or $A_j$ equals $\emptyset$, $A_i\cap A_j=\emptyset$. Otherwise, if neither $A_i$ nor $A_j$ equals $\emptyset$, $A_i$ and $A_j$ must belong to different regions, which also leads to the fact that $A_i\cap A_j=\emptyset$. Besides, since $A_i, A_j\in U_0$, according to the third step, $W_i\cap A_j=\emptyset$ and $W_j\cap A_i=\emptyset$. In addition, the last step does not influence $I_i\cap I_j$. Therefore, $I_i\cap I_j=\emptyset$.
\end{proof}

\begin{proof}[Proof of Theorem~\ref{theorem:partitioncomplete}]
According to the fourth step, $I_{1}\cup I_{2}\cup\cdots\cup I_{n}=(W_1\cup W_2\cup\cdots\cup W_n)\cup (A_1\cup A_2\cup\cdots\cup A_n)=(W_1\cup W_2\cup\cdots\cup W_n)\cup U_0$. 
According to the definition of $U_{i}$, $U_{1}$ is a constraint as a conjunction of $|\mathcal{C}|$ constraints $ic_{1,j}$ which are defined according to $\mathbf{v}_{1}=\{0\}^{|\mathcal{C}|}$.
Therefore, $U_{1}$ is equal to $\mathbb{R}_{\geq0}^{\lvert \mathcal{C} \rvert}$. Hence, according to the third step, $W_{1}\cup W_{2}\cup\cdots\cup W_{n}=(U_{1}-(U_{1}\cap(W_{2}\cup\cdots\cup  W_{n}\cup U_0))\cup(W_{2}\cup\cdots\cup W_{n})=U_{1}-(U_1\cap U_0)=U_1-U_0$. Hence, $I_{1}\cup I_{2}\cup\cdots\cup I_{n}=(U_1-U_0)\cup U_0=U_1=\mathbb{R}_{\geq0}^{\lvert \mathcal{C} \rvert}$. Lemma~\ref{lemma:partitiondeterministic} tells that $I_{1},\cdots,I_{n}$ are disjoint. Hence, it is concluded that $P(\cdot)$ returns a partition on $\mathbb{R}_{\geq0}^{\lvert \mathcal{C} \rvert}$.
\end{proof}

\begin{proof}[Proof of Lemma~\ref{lemma:dfatocta}]
    According to the definition of the partition function, for each transition $(l,(\sigma,\mathbf{v},\mathbf{b}),l')\in\Delta_{M}$ in $\text{M}$, there is a transition $(l, \sigma, I, \mathcal{B}, l')$ in the hypothesis $\mathcal{H}$, where $\mathbf{v}\in I\in P(\Psi_{l,\sigma})$ and $\mathcal{B}$ is the set of clocks $c_i=\top$ in $\mathbf{b}$. Hence, for every $\gamma_r\gamma_r'$, $\mathcal{H}$ accepts it iff $M$ accepts it. According to Lemma~\ref{lemma:tabledfa}, $M$ has the three properties. Thus, $\mathcal{H}$ also keep these. 
\end{proof}

\begin{proof}[Proof of Theorem~\ref{theorem:cta}]
    First, for any $l\in L_{M}$ and $\sigma\in\Sigma$, the partition function $P(\Psi_{l,\sigma})$ maps  $\Psi_{l,\sigma}$ to $n$ districts, $I_{1}, \cdots, I_{n}$, where $n=|\Psi_{l,\sigma}|$. According to Lemma~\ref{lemma:partitiondeterministic}, $I_{i}\cap I_{j}=\emptyset$, $i\neq j$ for all $1\leq i,j\leq n$. Thus, for each pair of transitions of the form $(l,\sigma,\phi_{1},-,-)$ and 
$(l,\sigma,\phi_{2},-,-)$, the clock constraints $\phi_{1}$ and $\phi_{2}$ are mutually exclusive. Therefore, $\mathcal{H}$ is a DTA. 
Second, to make the observation table closed, when the learner moves an element $\emph{r}\in\bm{R}$ to $\bm{S}$, a reset-clocked word $\pi(vw(r)\cdot \bm{\sigma})$ is added to $\bm{R}$, where $\bm{\sigma}=(\sigma,0^{\lvert \mathcal{C} \rvert})$ for all $\sigma\in\Sigma$. 
Thus, the condition $\mathbf{v}_1=\{0\}^{|\mathcal{C}|}\in\Psi_{l,\sigma}$ is satisfied. 
Therefore, $P(\Psi_{l,\sigma})$ returns
a partition on $\mathbb{R}_{\geq0}^{\lvert \mathcal{C} \rvert}$ according to Theorem~\ref{theorem:partitioncomplete}. Hence, $\mathcal{H}$ is a CTA.
\end{proof}

\begin{proof}[Proof of Theorem~\ref{theorem:7}]
    By Lemma~\ref{lemma:tabledfa}, Lemma~\ref{lemma:dfatocta} and Theorem~\ref{theorem:cta}, for every $\gamma_{r}\cdot e\in(\bm{S}\cup\bm{R})\cdot\bm{E}$, there is $\gamma_r''\in vs_{\mathcal{A}}(\gamma_{r},e)$ and there exists a unique accepting run $\rho$ that admits $\gamma_r\gamma_r''$ in $\mathcal{H}$ if $f(\gamma_{r}, e)=+$. 
Hence, every reset-clocked action $(\sigma_{i}, \mathbf{v}_i,\mathbf{b}_i)$ in $\gamma_r\gamma_r''$ triggers a transition 
$\delta_{i} = (l_{i-1}, \sigma_{i}, \phi_{i}, \mathcal{B}_i,$ $l_{i})$,
where $\mathbf{v}_i\in\phi_{i}$ and $\mathcal{B}_i=\{c_j \vert \mathbf{b}_{i,j}=\top\}$ for $1\leq i\leq n$ and $1\leq j \leq |\mathcal{C}|$. According to the definitions of the clock constraint of TA and the region, $\llbracket\mathbf{v}_i\rrbracket\subseteq\phi_{i}$. 
Therefore, for each $\llbracket\mathbf{v}_i'\rrbracket=\llbracket\mathbf{v}_i\rrbracket$, the reset-clocked action $(\sigma_{i}, \mathbf{v}_i',\mathbf{b}_i)$ also triggers the transition $\delta_{i}$ as long as it is valid. Hence, there exists a unique accepting run $\rho'$ that admits $\gamma_{r}'=(\sigma_{1},\mathbf{v}_1',\mathbf{b}_1)(\sigma_{2}, \mathbf{v}_2',\mathbf{b}_2) $ $\cdots
(\sigma_{n}, \mathbf{v}_n',\mathbf{b}_n)$ in $\mathcal{H}$. 
Suppose it is the case when $f(\gamma_{r}, e)=-$ and $vs_{\mathcal{A}}(\gamma_{r}, e)\neq\emptyset$, it is easy to follow the case $f(\gamma_{r}, e)=+$ by Lemma~\ref{lemma:dfatocta}. In the case where $f(\gamma_{r}, e)=-$ and $vs_{\mathcal{A}}(\gamma_{r}, e)=\emptyset$, there is no $\gamma_r''\in vs_{\mathcal{A}}(\gamma_{r},e)$ causing $\gamma_r'=\gamma_r\gamma_r''$ to be
valid. Lemma~\ref{lemma:tabledfa} and Lemma~\ref{lemma:dfatocta} guarantee that there are no such accepting words in both DFA $\text{M}$ and $\mathcal{H}$.
\end{proof}

\begin{proof}[Proof of Theorem~\ref{theorem:powerfulcorrectness}]
    First, the correctness of the algorithm is guaranteed by the equivalence query. Now we prove termination. 
According to Section~\ref{sec:myhill} and Section~\ref{sbsc:membership}, we know that each reset-clocked word $\emph{s}\in\bm{S}$ actually corresponds to a symbolic state in $\mathcal{A}$. Because the observation table $\mathbf{T}$ is reduced, for any two reset-clocked words $s_{1}$ and $s_{2}$ in $\bm{S}$, there must exist $\emph{e}\in\bm{E}$ such that $f(s_{1}, \emph{e})\neq f(s_{2}, \emph{e})$ or $g(s_{1}, \emph{e})\neq g(s_{2}, \emph{e})$. So, $s_{1}$ and $s_{2}$ belong to different equivalence classes, i.e., $s_{1}$ and $s_{2}$ correspond to different symbolic states in $\mathcal{A}$. 
Therefore, the algorithm constructs an injection from $\bm{S}$ to the set of symbolic states of $\mathcal{A}$. 
So the size of $\bm{S}$ does not exceed 
$|L|\cdot |\mathcal{C}|!\cdot 2^{|\mathcal{C}|} \cdot\prod_{c\in\mathcal{C}}(2\kappa(c)+2)$ according to Theorem~\ref{thm:equivalenceclass}. 

Now let's consider the number of transitions. Given a prepared table, since the table is consistent, $\forall \gamma_{r}, \gamma_{r}' \in \bm{S}\cup\bm{R}$ such that $\mathit{row}(\gamma_{r})=\mathit{row}(\gamma_{r}')$ and $\forall \bm{\sigma_{r}},\bm{\sigma_{r}}'\in \bm{\Sigma_{r}}$ such that $\llbracket vw(\bm{\sigma_{r}})\rrbracket=\llbracket vw(\bm{\sigma_{r}}')\rrbracket$,  if $\gamma_{r}\bm{\sigma_{r}}, \gamma_{r}'\bm{\sigma_{r}}'\in \bm{S}\cup\bm{R}$, $\mathit{row}(\gamma_{r}\bm{\sigma_{r}})=\mathit{row}(\gamma_{r}'\bm{\sigma_{r}}')$ and $resets(\bm{\sigma_{r}})=resets(\bm{\sigma_{r}}')$.
So, for each $l\in L_{M}$ and $\sigma\in\Sigma$ in DFA $\text{M}$, if there are two transitions $(l,(\sigma,\mathbf{v},\mathbf{b}),l')$ and $(l,(\sigma,\mathbf{v}',\mathbf{b}')),l'')$ such that $\llbracket\mathbf{v}\rrbracket=\llbracket\mathbf{v}'\rrbracket$, $l'$ and $l''$ must be the same location and $\mathbf{b}=\mathbf{b}'$.
So, the number of transitions from a location triggered by an action in $\mathcal{H}$ is bounded by $ |\mathcal{C}|!\cdot 2^{|\mathcal{C}|} \cdot\prod_{c\in\mathcal{C}}(2\kappa(c)+2)$. Hence, the number of transitions for one location is $\lvert \Sigma\rvert\cdot |\mathcal{C}|!\cdot 2^{|\mathcal{C}|} \cdot\prod_{c\in\mathcal{C}}(2\kappa(c)+2)$. Since each iteration of the algorithm either
adds at least one element to $\bm{S}$ or refines at least one of the partitions along with transitions of $\mathcal{H}$, or both, the algorithm is guaranteed to terminate. 
\end{proof}

\begin{proof}[Proof of Theorem~\ref{theorem:normalterminate}]

Let $\mathbf{T}=(\Sigma, \bm{\Sigma_{G}}, \bm{\Sigma_{r}}, \bm{S}, \bm{R}, $ $\bm{E}, f, g)$ be the final table  in the learning process with a powerful teacher. The number of guesses in prefixes is $(|\bm{S}| + |\bm{R}|)\cdot |\mathcal{C}|$, as the table is prefix-closed.
Since when a new prefix is added to $\bm{S}\cup\bm{R}$, there are at most $(\sum_{e_i \in \bm{E}\setminus\{\epsilon\}}{\lvert e_i \rvert}) \cdot \lvert \mathcal{C} \rvert$ guessed resets. 
Since we use BFS to explore the tables in $ToExplore$ and BFS is based on a queue in terms of the size of the table, a smaller table is always handled before any larger ones. Hence, before finding the final table $\mathbf{T}$, all handled tables is smaller than $\mathbf{T}$. Since all table instances are produced by guessing reset information, the number of tables smaller than $\mathbf{T}$ is bounded by 
$\mathcal{O}(2^{(|\bm{S}|+|\bm{R}|) \cdot \lvert \mathcal{C} \rvert^2 \cdot(\sum_{e_i \in \bm{E}\setminus\{\epsilon\}}{\lvert e_i \rvert})})$. Therefore, in the worst case, the learner can find $\mathbf{T}$ after checking $\mathcal{O}(2^{(|\bm{S}|+|\bm{R}|) \cdot \lvert \mathcal{C} \rvert^2 \cdot(\sum_{e_i \in \bm{E}\setminus\{\epsilon\}}{\lvert e_i \rvert})})$ table instances.
\end{proof}

\section{Example of a transformation from DTA to CTA}
\label{appendix:transitionCTA}

Figure~\ref{fig:timedautomata} depicts the transformation of DTA $\mathcal{A}$ with $\Sigma=\{a,b\}$ and $\mathcal{C}=\{c_1,c_2\}$ on the left into the CTA on the right. First, a non-accepting ``sink” location $l_{2}$ is introduced. Second, we introduce three fresh transitions (marked in blue) from $l_{0}$ to $l_{2}$ as well as transitions from $l_{2}$ to itself. In the last step, five new transitions (marked in red) are introduced. For location $l_{0}$ and action $a$, the outgoing transition has a guard cover $c_{1}\textgreater 1\wedge c_{2}\textgreater 1$. Then we have the complement $\textit{Compl}_{l_0,a}=(0 \leq c_{1}\leq 1\wedge c_{2}\geq 0)\vee (c_{1} \textgreater 1\wedge 0 \leq c_{2}\leq 1)$. Hence, we introduce transitions $(l_{0}, a,0 \leq c_{1}\leq 1 \wedge c_{2}\geq 0,\{c_{1}, c_{2}\},l_{2})$ and $(l_{0}, a,c_{1} \textgreater 1\wedge 0 \leq c_{2}\leq 1,\{c_{1}, c_{2}\},l_{2})$. For location $l_1$ and actions $a,b$, three new transitions from $l_{1}$ to $l_{2}$  are constructed similarly.

\section{Example of finding valid successor}
\label{appendix:findvalidsucc}

\begin{figure}[H]
\begin{center}
\resizebox{0.49\textwidth}{!}{
\begin{tabular}{ P{5.5cm}|Y{0.5cm} Y{5cm}  }
 \hline
  &$\epsilon$&$(a,1< c_{1}< 2\wedge 1< c_{2}< 2\wedge c_{1}=c_{2})$\\
 \hline
 $\epsilon$&$-$&$+,\bot,\top$\\
  $(a,\{1.05,1.05\},\{\bot,\top\})$   & $+$    &$-,\top,\top$\\
   $(a,\{0,0\},\{\top,\top\})$   & $-$    &$-,\top,\top$\\
 \hline
  $(b,\{0,0\},\{\top,\top\})$   & $-$    &$-,\top,\top$\\
  $(a,\{1.05,1.05\},\{\bot,\top\})(a,\{0,0\},\{\top,\top\})$   & $-$    &$-,\top,\top$\\
  $(a,\{1.05,1.05\},\{\bot,\top\})(b,\{0,0\},\{\top,\top\})$   & $-$    &$-,\top,\top$\\
  $(a,\{1.05,1.05\},\{\bot,\top\})(b,\{1.05,0\},\{\bot,\top\})$   & $+$    &$-,\top,\top$\\
  $(a,\{1.05,1.05\},\{\bot,\top\})(b,\{2.05,1\},\{\top,\top\})$   & $-$    &$-,\top,\top$\\
  $(a,\{0,0\},\{\top,\top\})(a,\{1.05,1.05\},\{\top,\top\})$   & $-$    &$-,\top,\top$\\
  $(a,\{0,0\},\{\top,\top\})(a,\{0,0\},\{\top,\top\})$   & $-$    &$-,\top,\top$\\
  $(a,\{0,0\},\{\top,\top\})(b,\{0,0\},\{\top,\top\})$   & $-$    &$-,\top,\top$\\
 \hline
\end{tabular}
}
\end{center}
\caption{A copy of Fig.~\ref{fig:observationtable} for the ease of reading.}
\label{fig:observationtableinappendix}
\end{figure}

Figure~\ref{fig:observationtableinappendix} shows an example of a timed observation table. Suppose that the CTA $\mathcal{A}$ in Figure~\ref{fig:timedautomata} is the underlying target automaton. Consider $\gamma_{r}=(a,\{0,0\},\{\top,\top\})$ and $e=(a,1< c_{1}< 2\wedge 1< c_{2}< 2\wedge c_{1}=c_{2})$, $n=1$ for the length of $\gamma_r$ and there is only one region action in $e$. According to Algorithm~\ref{alg:findvalidsuccessor}, we can get $\mathbf{v}_{n+1,1}=0$ and $\mathbf{v}_{n+1,2}=0$ since $\mathbf{b}_{n,1}=\mathbf{b}_{n,2}=\top$. Consider the region $\llbracket \nu_1 \rrbracket= 1< c_{1}< 2\wedge 1< c_{2}< 2\wedge c_{1}=c_{2}$, we build a formula for $\mathbf{v}_{n+1} + d \in \llbracket \nu_1 \rrbracket$ such that $1 < 0 + d < 2 \wedge 1 < 0+d < 2 \wedge 0+d = 0 +d$. 
Then we find $d$ can be $1.05$, thus we have $\mathbf{v}_{n+1}=\{1.05,1.05\}$.
After a reset information query $\textsf{RQ}((a,\{0,0\})\cdot(a, \mathbf{v}_{n+1}))$, we get $\mathbf{b}_{n+1}=\{\top,\top\}$. Therefore, $e_{r}=(a,\{1.05,1.05\},\{\top,\top\})\in \textit{vs}_{\mathcal{A}}(\gamma_r, e)$. In this case, $f(\gamma_{r},e)=-$ because $\textsf{MQ}(\gamma_{r} e_{r})=-$, i.e. $\gamma_{r} e_{r}\notin\mathscr{L}_{r}(\mathcal{A})$, and $g(\gamma_{r},e)=resets(e_{r})=\{\top,\top\}$.

\section{Illustrative Case} 
\label{appendix:illustrative_case}

In order to describe our algorithm for learning DTA from a powerful teacher more illustratively, we present the process of learning DTA $\mathcal{A}$ in Figure~\ref{fig:timedautomata} where the initial location of $\mathcal{A}$ is $l_{0}$, the accepting location is $l_{1}$, and the alphabet $\Sigma=\{a,b\}$.
Figure~\ref{fig:case-tables} shows the iterations of the timed observation table during the learning process, and Figure~\ref{fig:case-automata} shows the iterations of the intermediate DFA $\mathrm{M}$ and hypothesis $\mathcal{H}$.

The first step in the learning algorithm is the initialization of the timed observation table. The learner obtains $\mathbf{T1}$ in Figure~\ref{fig:case-tables} after that. Since $\mathbf{T1}$ is prepared, the learner then constructs DFA $\mathrm{M_1}$ and  hypothesis $\mathcal{H}_1$ in further in Figure~\ref{fig:case-automata} based on $\mathbf{T1}$. By asking an equivalence query, the learner obtains a counterexample $\mathit{ctx_1}=(a,1.05,\{\bot,\top\})$, and then adds the corresponding reset-clocked word $(a,\{1.05,$ $1.05\},\{\bot,\top\})$ to the timed observation table. 

Since $\mathit{row}(a,\{1.05,1.05\},\{\bot,\top\})$ $=+$ and there is no row in $\bm{S}$ with the same value, the current table $\mathbf{T2}$ is not closed. The learner then makes the observation table closed by moving $(a,\{1.05,1.05\},$ $\{\bot,\top\})$ from $\bm{R}$ to $\bm{S}$ and gets $\mathbf{T3}$. Then, DFA $\mathrm{M}_3$ and hypothesis $\mathcal{H}_3$ in Figure~\ref{fig:case-automata} can be constructed. Similarly, the learner gets new counterexamples $(a,1.05,\{\bot,\top\})(b,0,\{\bot,\top\})$, $(a,1.05,\{\bot,\top\})(b,1,\{\top,$ $\top\})$ and  $(a,0,\{\top,\top\})(a,$ $1.05,\{\top,\top\})$ through the equivalence queries and constructs the tables $\mathbf{T4}$, $\mathbf{T5}$ and $\mathbf{T6}$, respectively. 

Table $\mathbf{T6}$ is not consistent since for the prefixes $\epsilon$ and $(a,\{0,0\},\{\top,$ $\top\})$, the reset information of $(a,\{1.05,1.05\})$ in $\epsilon\cdot(a,\{1.05,1.05\},\{\bot,$ $\top\})$ is not equal to that of $(a,\{1.05,1.05\})$ in $(a,\{0,0\},\{\top,\top\})\cdot(a,\{1.05,1.05\},\{\top,\top\})$, and $\mathit{row}((a,\{1.05,$ $1.05\},\{\bot,\top\}))\neq \mathit{row}$ $((a,\{0,0\},\{\top,\top\})\cdot(a,\{1.05,1.05\},\{\top,$ $\top\}))$, the learner adds $(a,1\textless c_{1}\textless 2\wedge 1\textless c_{2}\textless 2\wedge c_{1}=c_{2})$ to the $\bm{E}$ to make the table consistent. 

Table $\mathbf{T7}$ is not closed since there is no row in $\bm{S}$ having the same value with $(a,\{0,0\},\{\top,\top\})$. Therefore, the learner moves $(a,\{0,0\},\{\top,\top\})$ from $\bm{R}$ to $\bm{S}$ to make the timed observation table closed and get a prepared table $\mathbf{T8}$. 
Similarly, whenever the learner obtains a new counterexample through the equivalence query, the learner adds all prefixes of the counterexample to the timed observation table. When the timed observation table is prepared, the learner constructs a new DFA and hypothesis according to the observation table. After two such steps, the CTA $\mathcal{H}_{10}$ in Figure~\ref{fig:case-automata} is obtained, and the positive answer of an equivalence query shows that $\mathcal{L}(\mathcal{H}_{10})=\mathcal{L}(\mathcal{A})$. Thus, $\mathcal{H}_{10}$ is the correct model, and the learning process terminates. 

Note that the learned model may have tighter timing constraints on transitions than those of the target DTA. Compared to the target DTA $\mathcal{A}$ in Figure~\ref{fig:timedautomata}, the learned model $\mathcal{H}_{10}$ is a CTA with a sink location $l_{--}$, and the other two locations $l_{-+}$ and $l_{+-}$ correspond to the locations $l_0$ and $l_1$ of $\mathcal{A}$, respectively. We can find that two self transitions on $l_{+-}$ can be merged and form a new transition $(l_{+-}, b, c_1 > 1 \wedge 0 \leq c_2 < 1, \{c_2\}, l_{+-})$. Compared to the transition $(l_1, b, c_1 \geq 0 \wedge 0 \leq c_2 < 1, \{c_2\}, l_1)$ in $\mathcal{A}$, $c_1 >1$ is a correct and tighter constraint than $c_1 \geq 0$, since when reaching location $l_1$ the clock $c_1$ does never reset, its value must be greater than $1$. Similarly, the constraint of $c_1$ on the transition from $l_{+-}$ to $l_{-+}$ in $\mathcal{H}_{10}$ is tighter than the original constraint of $c_1$ on the transition from $l_1$ to $l_0$ in $\mathcal{A}$.

\section{Algorithm for learning DTA from a normal teacher}
\label{appendix:learnfromnormal}
\begin{algorithm}
 \caption{Learning deterministic timed automata with multiple clocks from a normal teacher}
	\label{alg:learningnormal}
	\SetKwInOut{Input}{input}
	\SetKwInOut{Output}{output}
	\Input{
 the timed observation table  $\mathbf{T}=(\Sigma, \bm{\Sigma_{G}}, \bm{\Sigma_{r}}, \bm{S}, \bm{R}, \bm{E}, f, g)$;
 the number of clocks $|\mathcal{C}|$}
	\Output{the hypothesis $\mathcal{H}$ recognizing the target language $\mathcal{L}$.}
	$\mathit{ToExplore}\leftarrow\emptyset$;
	$\bm{S}\leftarrow\{\epsilon\}$;
	$\bm{R}\leftarrow\{\pi(\gamma) \mid \gamma=(\sigma,\{0\}^{|\mathcal{C}|}), \forall \sigma \in \Sigma \}$;
	$\bm{E}\leftarrow\{\epsilon\}$\; 
	$\mathit{currentTable} \leftarrow (\Sigma, \bm{\Sigma_{G}}, \bm{\Sigma_r}, \bm{S}, \bm{R}, \bm{E}, f, g)$\;
	$\mathit{tables}$ $\leftarrow$ guess\_and\_fill($\mathit{currentTable}$)\tcp*{guess resets and fill all table instances}
	$\mathit{ToExplore}$.insert($\mathit{tables}$)\tcp*{insert table instances $\mathit{tables}$ into $\mathit{ToExplore}$}  \label{line:initaltables}
	
	$\mathit{currentTable}$ $\leftarrow$ $\mathit{ToExplore}$.pop()\tcp*{pop out head instance as the current table}
	$\mathit{equivalent}$ $\leftarrow$ $\bot$\;
	\While{$\mathit{equivalent}$ = $\bot$}{
		$prepared$ $\leftarrow$ is\_prepared($\mathit{currentTable}$)\tcp*{whether the current table is prepared}
		\While{$prepared$ = $\bot$}
		{
			\If{$\mathit{currentTable}$ is not closed}{
				$\mathit{tables}$ $\leftarrow$ guess\_and\_make\_closed($\mathit{currentTable}$);
				$\mathit{ToExplore}$.insert($\mathit{tables}$)\;
				$\mathit{currentTable}$ $\leftarrow$ $\mathit{ToExplore}$.pop()\; 
			}
			\If{$\mathit{currentTable}$ is not consistent}{
				$\mathit{tables}$ $\leftarrow$ guess\_and\_make\_consistent($\mathit{currentTable}$);
				$\mathit{ToExplore}$.insert($\mathit{tables}$)\;
				$\mathit{currentTable}$ $\leftarrow$ $\mathit{ToExplore}$.pop()\; 
			}
			\If{$\mathit{currentTable}$ is not evidence-closed}{
				$\mathit{tables}$ $\leftarrow$ guess\_and\_make\_evidence\_closed($\mathit{currentTable}$);
				$\mathit{ToExplore}$.insert($\mathit{tables}$)\;
				$\mathit{currentTable}$ $\leftarrow$ $\mathit{ToExplore}$.pop()\;
			}
			$prepared$ $\leftarrow$ is\_prepared($\mathit{currentTable}$)\;
		}
		$\text{M} \leftarrow$ build\_DFA($\mathit{currentTable}$) \tcp*{transforming $\mathit{currentTable}$ to a DFA $\text{M}$}
		$\mathcal{H} \leftarrow$ build\_hypothesis($\text{M}$) \tcp*{constructing a hypothesis $\mathcal{H}$ from $\text{M}$}
		$\mathit{equivalent}$, $\mathit{ctx}$ $\leftarrow$ equivalence\_query($\mathcal{H}$)\tcp*{$\mathit{ctx}$ is a delay-timed word}
		\If{$\mathit{equivalent}$ = $\bot$}{
			$\mathit{tables}$ $\leftarrow$ guess\_and\_ctx\_processing($\mathit{currentTable}$, $\mathit{ctx}$) \tcp*{counterexample processing}
			$\mathit{ToExplore}$.insert($\mathit{tables}$)\;
			$\mathit{currentTable}$ $\leftarrow$ $\mathit{ToExplore}$.pop()\;
		}
	}
	\Return $\mathcal{H}$\;
 \end{algorithm}

\section{Timed automata in Section~\ref{sec:experiment}}
\label{appendix:targetDTAs}

Figure~\ref{fig:target_dta_experiment} presents the target DTAs used in our experiment (Section~\ref{sec:experiment}). For each target DTA, a case ID is in the form of $\lvert L\rvert\_\lvert \Sigma\rvert\_\lvert \mathcal{C}\rvert\_\lvert \kappa(\mathcal{C})\rvert$, which is designed to consist of the number of locations, the alphabet size, the number of clock variables and the maximum constant in the guards in the DTA.

\begin{figure}
\vspace{1cm}
\begin{center}
\resizebox{0.18\textwidth}{!}{
\begin{tabular}{ P{2.5cm}|Y{0.3cm}}
 \hline
  $\textbf{T1}$&$\epsilon$\\
 \hline
 $\epsilon$&$-$\\
 \hline
  $(a,\{0,0\},\{\top,\top\})$   & $-$ \\
   $(b,\{0,0\},\{\top,\top\})$   & $-$    \\
 \hline
\end{tabular}
}
$\xrightarrow[\Gamma(ctx_1)=(a,\{1.05,1.05\},\{\bot,\top\})]{ctx_1=(a,1.05,\{\bot,\top\}),+}$
 \\[5pt]
\resizebox{0.22\textwidth}{!}{
\begin{tabular}{ P{3.4cm}|Y{0.3cm} }
 \hline
  $\textbf{T2}$&$\epsilon$\\
 \hline
 $\epsilon$&$-$\\
 \hline
  $(a,\{0,0\},\{\top,\top\})$   & $-$ \\
   $(b,\{0,0\},\{\top,\top\})$   & $-$    \\
   $(a,\{1.05,1.05\},\{\bot,\top\})$   & $+$    \\
 \hline
\end{tabular}
}
$\xrightarrow{\ \ \ \ \ \ \ \ \ \ \ \ closed\ \ \ \ \ \ \ \ \ \ \ \ }$
 \\[5pt]

\resizebox{0.38\textwidth}{!}{
\begin{tabular}{ P{6.5cm}|Y{0.3cm} }
 \hline
  $\textbf{T3}$&$\epsilon$\\
 \hline
 $\epsilon$&$-$\\
 $(a,\{1.05,1.05\},\{\bot,\top\})$   & $+$    \\
 \hline
  $(a,\{0,0\},\{\top,\top\})$   & $-$ \\
   $(b,\{0,0\},\{\top,\top\})$   & $-$    \\
   $(a,\{1.05,1.05\},\{\bot,\top\})(a,\{0,0\},\{\top,\top\})$   & $-$    \\
   $(a,\{1.05,1.05\},\{\bot,\top\})(b,\{0,0\},\{\top,\top\})$   & $-$    \\
 \hline
\end{tabular}
}
\\[5pt]
\quad
$\xrightarrow[\Gamma(ctx_2)=(a,\{1.05,1.05\},\{\bot,\top\})(b,\{1.05,0\},\{\bot,\top\})]{ctx_2=(a,1.05,\{\bot,\top\})(b,0,\{\bot,\top\}),+}$


\resizebox{0.38\textwidth}{!}{
\begin{tabular}{ P{6.5cm}|Y{0.3cm} }
 \hline
  $\textbf{T4}$&$\epsilon$\\
 \hline
 $\epsilon$&$-$\\
 $(a,\{1.05,1.05\},\{\bot,\top\})$   & $+$    \\
 \hline
  $(a,\{0,0\},\{\top,\top\})$   & $-$ \\
   $(b,\{0,0\},\{\top,\top\})$   & $-$    \\
   $(a,\{1.05,1.05\},\{\bot,\top\})(a,\{0,0\},\{\top,\top\})$   & $-$    \\
   $(a,\{1.05,1.05\},\{\bot,\top\})(b,\{0,0\},\{\top,\top\})$   & $-$    \\
    $(a,\{1.05,1.05\},\{\bot,\top\})(b,\{1.05,0\},\{\bot,\top\})$   & $+$    \\
 \hline
\end{tabular}
}
\\[5pt]
\quad
$\xrightarrow[\Gamma(ctx_3)=(a,\{1.05,1.05\},\{\bot,\top\})(b,\{2.05,1\},\{\top,\top\})]{ctx_3=(a,1.05,\{\bot,\top\})(b,1,\{\top,\top\}),-}$


\resizebox{0.38\textwidth}{!}{
\begin{tabular}{ P{6.5cm}|Y{0.3cm} }
 \hline
  $\textbf{T5}$&$\epsilon$\\
 \hline
 $\epsilon$&$-$\\
 $(a,\{1.05,1.05\},\{\bot,\top\})$   & $+$    \\
 \hline
  $(a,\{0,0\},\{\top,\top\})$   & $-$ \\
   $(b,\{0,0\},\{\top,\top\})$   & $-$    \\
   $(a,\{1.05,1.05\},\{\bot,\top\})(a,\{0,0\},\{\top,\top\})$   & $-$    \\
   $(a,\{1.05,1.05\},\{\bot,\top\})(b,\{0,0\},\{\top,\top\})$   & $-$    \\
    $(a,\{1.05,1.05\},\{\bot,\top\})(b,\{1.05,0\},\{\bot,\top\})$   & $+$    \\
    $(a,\{1.05,1.05\},\{\bot,\top\})(b,\{2.05,1\},\{\top,\top\})$   & $-$    \\
 \hline
\end{tabular}
}
\\[5pt]
\quad
$\xrightarrow[\Gamma(ctx_4)=(a,\{0,0\},\{\top,\top\})(a,\{1.05,1.05\},\{\top,\top\})]{ctx_4=(a,0,\{\top,\top\})(a,1.05,\{\top,\top\}),-}$
\resizebox{0.38\textwidth}{!}{
\begin{tabular}{ P{6.5cm}|Y{0.3cm}  }
 \hline
  $\textbf{T6}$&$\epsilon$\\
 \hline
 $\epsilon$&$-$\\
 $(a,\{1.05,1.05\},\{\bot,\top\})$   & $+$    \\
 \hline
  $(a,\{0,0\},\{\top,\top\})$   & $-$ \\
   $(b,\{0,0\},\{\top,\top\})$   & $-$    \\
   $(a,\{1.05,1.05\},\{\bot,\top\})(a,\{0,0\},\{\top,\top\})$   & $-$    \\
   $(a,\{1.05,1.05\},\{\bot,\top\})(b,\{0,0\},\{\top,\top\})$   & $-$    \\
    $(a,\{1.05,1.05\},\{\bot,\top\})(b,\{1.05,0\},\{\bot,\top\})$   & $+$    \\
    $(a,\{1.05,1.05\},\{\bot,\top\})(b,\{2.05,1\},\{\top,\top\})$   & $-$    \\
    $(a,\{0,0\},\{\top,\top\})(a,\{1.05,1.05\},\{\top,\top\})$   & $-$    \\
 \hline
\end{tabular}
}
\end{center}
\caption{Iterations of the timed observation table during the learning process w.r.t $\mathcal{A}$ in Figure~\ref{fig:timedautomata}.}
\label{fig:case-tables}
\end{figure}

\begin{figure*}
\ContinuedFloat
\vspace{1cm}
\begin{center}
\resizebox{0.3\textwidth}{!}{
\quad
 $\xrightarrow{\ \ \ \ \ \ \ \ \ \ \ \ \ \ \ \ \ \ \ consistent\ \ \ \ \ \ \ \ \ \ \ \ \ \ \ \ \ \ \ }$}
 \\[5pt]
 \resizebox{0.7\textwidth}{!}{
\resizebox{0.66\textwidth}{!}{
\begin{tabular}{ P{6.5cm}|Y{0.3cm} Y{5.4cm}  }
 \hline
  $\textbf{T7}$&$\epsilon$&$(a,1< c_{1}< 2\wedge   1<c_{2}< 2\wedge c_1=c_2)$\\
 \hline
 $\epsilon$&$-$&$+,\bot,\top$\\
 $(a,\{1.05,1.05\},\{\bot,\top\})$   & $+$&$-,\top,\top$    \\
 \hline
  $(a,\{0,0\},\{\top,\top\})$   & $-$&$-,\top,\top$ \\
   $(b,\{0,0\},\{\top,\top\})$   & $-$&$-,\top,\top$    \\
   $(a,\{1.05,1.05\},\{\bot,\top\})(a,\{0,0\},\{\top,\top\})$   & $-$&$-,\top,\top$    \\
   $(a,\{1.05,1.05\},\{\bot,\top\})(b,\{0,0\},\{\top,\top\})$   & $-$&$-,\top,\top$    \\
    $(a,\{1.05,1.05\},\{\bot,\top\})(b,\{1.05,0\},\{\bot,\top\})$   & $+$&$-,\top,\top$    \\
    $(a,\{1.05,1.05\},\{\bot,\top\})(b,\{2.05,1\},\{\top,\top\})$   & $-$&$-,\top,\top$    \\
    $(a,\{0,0\},\{\top,\top\})(a,\{1.05,1.05\},\{\top,\top\})$   & $-$&$-,\top,\top$    \\
 \hline
\end{tabular}
}
\quad
$\xrightarrow{closed}$}
\\[5pt]
\resizebox{0.69\textwidth}{!}{
\begin{tabular}{ P{7cm}|Y{0.3cm} Y{5.4cm}  }
 \hline
  $\textbf{T8}$&$\epsilon$&$(a,1< c_{1}< 2\wedge  1<c_{2}< 2\wedge c_1=c_2)$\\
 \hline
 $\epsilon$&$-$&$+,\bot,\top$\\
 $(a,\{1.05,1.05\},\{\bot,\top\})$   & $+$&$-,\top,\top$    \\
 $(a,\{0,0\},\{\top,\top\})$   & $-$&$-,\top,\top$ \\
 \hline
   $(b,\{0,0\},\{\top,\top\})$   & $-$&$-,\top,\top$    \\
   $(a,\{1.05,1.05\},\{\bot,\top\})(a,\{0,0\},\{\top,\top\})$   & $-$&$-,\top,\top$    \\
   $(a,\{1.05,1.05\},\{\bot,\top\})(b,\{0,0\},\{\top,\top\})$   & $-$&$-,\top,\top$    \\
    $(a,\{1.05,1.05\},\{\bot,\top\})(b,\{1.05,0\},\{\bot,\top\})$   & $+$&$-,\top,\top$    \\
    $(a,\{1.05,1.05\},\{\bot,\top\})(b,\{2.05,1\},\{\top,\top\})$   & $-$&$-,\top,\top$    \\
    $(a,\{0,0\},\{\top,\top\})(a,\{1.05,1.05\},\{\top,\top\})$   & $-$&$-,\top,\top$    \\
    $(a,\{0,0\},\{\top,\top\})(a,\{0,0\},\{\top,\top\})$   & $-$&$-,\top,\top$ \\
    $(a,\{0,0\},\{\top,\top\})(b,\{0,0\},\{\top,\top\})$   & $-$&$-,\top,\top$ \\
 \hline
\end{tabular}
}
\begin{center}
$\xrightarrow[\Gamma(ctx_5)=(a,\{1.05,1.05\},\{\bot,\top\})(a,\{2.1,1.05\},\{\top,\top\})(a,\{1.05,1.05\},\{\bot,\top\})]{ctx_5=(a,1.05,\{\bot,\top\})(a,1.05,\{\top,\top\})(a,1.05,\{\bot,\top\}),+}$
\end{center}
\resizebox{0.85\textwidth}{!}{
\begin{tabular}{ P{10.4cm}|Y{0.3cm} Y{5.4cm}  }
 \hline
  $\textbf{T9}$&$\epsilon$&$(a,1< c_{1}< 2\wedge  1<c_{2}< 2\wedge c_1=c_2)$\\
 \hline
 $\epsilon$&$-$&$+,\bot,\top$\\
 $(a,\{1.05,1.05\},\{\bot,\top\})$   & $+$&$-,\top,\top$    \\
 $(a,\{0,0\},\{\top,\top\})$   & $-$&$-,\top,\top$ \\
 \hline
   $(b,\{0,0\},\{\top,\top\})$   & $-$&$-,\top,\top$    \\
   $(a,\{1.05,1.05\},\{\bot,\top\})(a,\{0,0\},\{\top,\top\})$   & $-$&$-,\top,\top$    \\
   $(a,\{1.05,1.05\},\{\bot,\top\})(b,\{0,0\},\{\top,\top\})$   & $-$&$-,\top,\top$    \\
    $(a,\{1.05,1.05\},\{\bot,\top\})(b,\{1.05,0\},\{\bot,\top\})$   & $+$&$-,\top,\top$    \\
    $(a,\{1.05,1.05\},\{\bot,\top\})(b,\{2.05,1\},\{\top,\top\})$   & $-$&$-,\top,\top$    \\
    $(a,\{0,0\},\{\top,\top\})(a,\{1.05,1.05\},\{\top,\top\})$   & $-$&$-,\top,\top$    \\
    $(a,\{0,0\},\{\top,\top\})(a,\{0,0\},\{\top,\top\})$   & $-$&$-,\top,\top$ \\
    $(a,\{0,0\},\{\top,\top\})(b,\{0,0\},\{\top,\top\})$   & $-$&$-,\top,\top$ \\
    $(a,\{1.05,1.05\},\{\bot,\top\})(a,\{2.1,1.05\},\{\top,\top\})$   & $-$&$+,\bot,\top$ \\
    $(a,\{1.05,1.05\},\{\bot,\top\})(a,\{2.1,1.05\},\{\top,\top\})(a,\{1.05,1.05\},\{\bot,\top\})$   & $+$&$-,\top,\top$ \\
 \hline
\end{tabular}
}
\begin{center}
$\xrightarrow[\Gamma(ctx_6)=(a,\{1.05,1.05\},\{\bot,\top\})(a,\{3,1.95\},\{\top,\top\})(a,\{1.05,1.05\},\{\top,\top\})]{ctx_6=(a,1.05,\{\bot,\top\})(a,1.95,\{\top,\top\})(a,1.05,\{\top,\top\}),-}$
\end{center}  
\resizebox{0.85\textwidth}{!}{
\begin{tabular}{ P{10.4cm}|Y{0.3cm} Y{5.4cm}  }
 \hline
  $\textbf{T10}$&$\epsilon$&$(a,1< c_{1}< 2\wedge   1<c_{2}< 2\wedge c_1=c_2)$\\
 \hline
 $\epsilon$&$-$&$+,\bot,\top$\\
 $(a,\{1.05,1.05\},\{\bot,\top\})$   & $+$&$-,\top,\top$    \\
 $(a,\{0,0\},\{\top,\top\})$   & $-$&$-,\top,\top$ \\
 \hline
   $(b,\{0,0\},\{\top,\top\})$   & $-$&$-,\top,\top$    \\
   $(a,\{1.05,1.05\},\{\bot,\top\})(a,\{0,0\},\{\top,\top\})$   & $-$&$-,\top,\top$    \\
   $(a,\{1.05,1.05\},\{\bot,\top\})(b,\{0,0\},\{\top,\top\})$   & $-$&$-,\top,\top$    \\
    $(a,\{1.05,1.05\},\{\bot,\top\})(b,\{1.05,0\},\{\bot,\top\})$   & $+$&$-,\top,\top$    \\
    $(a,\{1.05,1.05\},\{\bot,\top\})(b,\{2.05,1\},\{\top,\top\})$   & $-$&$-,\top,\top$    \\
    $(a,\{0,0\},\{\top,\top\})(a,\{1.05,1.05\},\{\top,\top\})$   & $-$&$-,\top,\top$    \\
    $(a,\{0,0\},\{\top,\top\})(a,\{0,0\},\{\top,\top\})$   & $-$&$-,\top,\top$ \\
    $(a,\{0,0\},\{\top,\top\})(b,\{0,0\},\{\top,\top\})$   & $-$&$-,\top,\top$ \\
    $(a,\{1.05,1.05\},\{\bot,\top\})(a,\{2.1,1.05\},\{\top,\top\})$   & $-$&$+,\bot,\top$ \\
    $(a,\{1.05,1.05\},\{\bot,\top\})(a,\{2.1,1.05\},\{\top,\top\})(a,\{1.05,1.05\},\{\bot,\top\})$   & $+$&$-,\top,\top$ \\
    $(a,\{1.05,1.05\},\{\bot,\top\})(a,\{3,1.95\},\{\top,\top\})$   & $-$&$-,\top,\top$ \\
    $(a,\{1.05,1.05\},\{\bot,\top\})(a,\{3,1.95\},\{\top,\top\})(a,\{1.05,1.05\},\{\top,\top\})$   & $-$&$-,\top,\top$ \\
 \hline
\end{tabular}
}
\end{center}
\caption{Iterations of the timed observation table during the learning process w.r.t $\mathcal{A}$ in Figure~\ref{fig:timedautomata} (Continued).}
\end{figure*}

\clearpage
\begin{figure*}
\begin{center}
\begin{minipage}[b]{0.14\textwidth}
\begin{center}
\resizebox{1\textwidth}{!}{
\begin{tikzpicture}[scale=0.65, ->, >=stealth', shorten >=1pt, auto, node distance=2cm, semithick, every node/.style={scale=0.65}]
        \node[initial, state]  (0) {$l_-$};
        \path  (0) edge[in=240, out=300,loop, color=black] node[below, black, align=center] {$(b,\{0,0\},\{\top, \top\})$} (0)
        
        (0) edge[in= 60, out=120,loop, color=black] node[above, black, align=center] {$(a,\{0,0\},\{\top, \top\})$} (0);
        \node [below=40pt, align=flush center,text width=3cm] at (0) {$\text{M}_1$};
\end{tikzpicture}}
\end{center}
\end{minipage}
\begin{minipage}[b]{0.155\textwidth}
\begin{center}
\resizebox{1\textwidth}{!}{
\begin{tikzpicture}[scale=0.65, ->, >=stealth', shorten >=1pt, auto, node distance=2cm, semithick, every node/.style={scale=0.65}]
\centering
        \node[initial, state]  (0) {$l_-$};
        \path  (0) edge[in=240, out=300,loop, color=black] node[below, black, align=center] {$b,c_{1}\geq 0\wedge c_{2}\geq 0,\{c_1, c_2\}$} (0)
        
        (0) edge[in= 60, out=120,loop, color=black] node[above, black, align=center] {$a,c_{1}\geq 0\wedge c_{2}\geq 0,\{c_1, c_2\}$} (0);

         \node [below=40pt, align=flush center,text width=3cm] at (0) {$\mathcal{H}_1$};
\end{tikzpicture}
}
\end{center}
\end{minipage}
\begin{minipage}[b]{0.32\textwidth}
\begin{center}
\resizebox{1\textwidth}{!}{
\begin{tikzpicture}[scale=0.65, ->, >=stealth', shorten >=1pt, auto, node distance=2cm, semithick, every node/.style={scale=0.65}]
        \node[initial, state]  (0) {$l_-$};
        \node[accepting, state]  (1) at (5,0) {$l_+$};
        \path  (0) edge[in=210, out=240,loop, color=black] node[below, black, align=center] {$(b,\{0,0\},\{\top, \top\})$} (0)
        
        (0) edge[in= 150, out=120,loop, color=black] node[above, black, align=center] {$(a,\{0,0\},\{\top, \top\})$} (0)
        
        (1) edge[in= 330, out=210, color=black] node[below, black, align=center] {$(a,\{0,0\},\{\top, \top\})$} (0)
        
        (1) edge[in= 300, out=240, color=black] node[below, black, align=center] {$(b,\{0,0\},\{\top, \top\})$} (0)
        
        (0) edge[color=black] node[above, black, align=center] {$(a,\{1.05,1.05\},\{\bot, \top\})$} (1);

        \node [below right= 45pt and 20pt, align=flush center,text width=3cm] at (0) {$\text{M}_3$};
\end{tikzpicture}
}
\end{center}
\end{minipage}
\begin{minipage}[b]{0.37\textwidth}
\begin{center}
\resizebox{1\textwidth}{!}{
\begin{tikzpicture}[scale=0.65, ->, >=stealth', shorten >=1pt, auto, node distance=2cm, semithick, every node/.style={scale=0.65}]
\centering
        \node[initial, state]  (0) {$l_-$};
        \node[accepting, state]  (1) at (5,0) {$l_+$};
        \path  (0) edge[in=210, out=240,loop, color=black] node[left, black, align=center] {$b,c_{1}\geq 0\wedge c_{2}\geq 0,$ \\$\{c_1, c_2\}$} (0)
        
        (0) edge[in= 150, out=120,loop, color=black] node[left, black, align=center] {$a, c_{1}> 1\wedge c_{2}\leq 1,$ \\$\{c_1,c_2\}$} (0)

        (0) edge[in= 30, out=60,loop, color=black] node[right, black, align=center] {$a, c_{1}\leq 1\wedge c_{2}\geq 0,\{c_1,c_2\}$} (0)
        
        (1) edge[in= 330, out=210, color=black] node[below, black, align=center] {$a,c_{1}\geq 0\wedge c_{2}\geq 0,\{c_1, c_2\}$} (0)
        
        (1) edge[in= 300, out=240, color=black] node[below, black, align=center] {$b,c_{1}\geq 0\wedge c_{2}\geq 0,\{c_1, c_2\}$} (0)
        
        (0) edge[color=black] node[above, black, align=center] {$a,c_{1}> 1\wedge c_{2} > 1,\{c_2\}$} (1);

        \node [below right=45pt and 5pt, align=flush center,text width=3cm] at (0) {$\mathcal{H}_3$};
\end{tikzpicture}
}
\end{center}
\end{minipage}
\begin{minipage}[b]{0.49\textwidth}
\begin{center}
\resizebox{0.75\textwidth}{!}{
\begin{tikzpicture}[scale=0.65, ->, >=stealth', shorten >=1pt, auto, node distance=2cm, semithick, every node/.style={scale=0.65}]
        \node[initial, state]  (0) {$l_-$};
        \node[accepting, state]  (1) at (5,0) {$l_+$};
        \path  (0) edge[in=210, out=240,loop, color=black] node[below, black, align=center] {$(b,\{0,0\},\{\top, \top\})$} (0)
        
        (0) edge[in= 150, out=120,loop, color=black] node[above, black, align=center] {$(a,\{0,0\},\{\top, \top\})$} (0)
        
        (1) edge[in= 330, out=210, color=black] node[below, black, align=center] {$(a,\{0,0\},\{\top, \top\})$} (0)
        
        (1) edge[in= 300, out=240, color=black] node[below, black, align=center] {$(b,\{0,0\},\{\top, \top\})$} (0)
        
        (0) edge[color=black] node[above, black, align=center] {$(a,\{1.05,1.05\},\{\bot, \top\})$} (1)
        
        (1) edge[in=75, out=105,loop, color=black] node[above, black, align=center] {$(b,\{1.05,0\},\{\bot, \top\})$} (1);

        \node [below right=45pt and 20pt, align=flush center,text width=3cm] at (0) {$\text{M}_4$};
\end{tikzpicture}
}
\end{center}
\end{minipage}
\begin{minipage}[b]{0.49\textwidth}
\begin{center}
\resizebox{0.9\textwidth}{!}{
\begin{tikzpicture}[scale=0.65, ->, >=stealth', shorten >=1pt, auto, node distance=2cm, semithick, every node/.style={scale=0.65}]
\centering
        \node[initial, state]  (0) {$l_-$};
        \node[accepting, state]  (1) at (5,0) {$l_+$};
        \path  (0) edge[in=210, out=240,loop, color=black] node[left, black, align=center] {$b,c_{1}\geq 0\wedge c_{2}\geq 0,$ \\$\{c_1, c_2\}$} (0)
        
        (0) edge[in= 150, out=120,loop, color=black] node[left, black, align=center] {$a, c_{1}> 1\wedge c_{2}\leq 1, $ \\$\{c_1,c_2\}$} (0)

        (0) edge[in= 30, out=60,loop, color=black] node[right, black, align=center] {$a, c_{1}\leq 1\wedge c_{2}\geq 0,\{c_1,c_2\}$} (0)
        
        (1) edge[in= 330, out=210, color=black] node[below, black, align=center] {$a,c_{1}\geq 0\wedge c_{2}\geq 0,\{c_1, c_2\}$} (0)
        
        (1) edge[in= 300, out=240, color=black] node[below, black, align=center] {$b,0\leq c_{1}\leq 1 \wedge c_{2}\geq 0,\{c_1, c_2\}$} (0)
        
        (0) edge[color=black] node[above, black, align=center] {$a,c_{1}> 1\wedge c_{2} > 1,\{c_2\}$} (1)

        (1) edge[in=75, out=105,loop, color=black] node[above, black, align=center] {$b,c_{1}>1 \wedge c_{2}\geq 0,\{c_2\}$} (1);

        \node [below right=45pt and 5pt, align=flush center,text width=3cm] at (0) {$\mathcal{H}_4$};
\end{tikzpicture}
}
\end{center}
\end{minipage}
\begin{minipage}[b]{0.49\textwidth}
\begin{center}
\resizebox{0.8\textwidth}{!}{
\begin{tikzpicture}[scale=0.65, ->, >=stealth', shorten >=1pt, auto, node distance=2cm, semithick, every node/.style={scale=0.65}]
\centering
        \node[initial, state]  (0) {$l_-$};
        \node[accepting, state]  (1) at (5.5,0) {$l_+$};
        \path  (0) edge[in=210, out=240,loop, color=black] node[below, black, align=center] {$(b,\{0,0\},\{\top, \top\})$} (0)
        
        (0) edge[in= 150, out=120,loop, color=black] node[above, black, align=center] {$(a,\{0,0\},\{\top, \top\})$} (0)
        
        (1) edge[in= 330, out=210, color=black] node[below, black, align=center] {$(a,\{0,0\},\{\top, \top\}$} (0)
        
        (1) edge[in= 300, out=240, color=black] node[below, black, align=center] {$(b,\{0,0\},\{\top, \top\})$} (0)
        
        (0) edge[color=black] node[above, black, align=center] {$(a,\{1.05,1.05\},\{\bot, \top\})$} (1)
        
        (1) edge[in=75, out=105,loop, color=black] node[above, black, align=center] {$(b,\{1.05,0\},\{\bot, \top\})$} (1)
        
        (1) edge[in=30, out=150, color=black] node[above, black, align=center,pos=0.58] {$(b,\{2.05,1\},\{\top, \top\})$} (0);

        \node [below right=45pt and 25pt, align=flush center,text width=3cm] at (0) {$\text{M}_5$};
\end{tikzpicture}
}
\end{center}
\end{minipage}
\begin{minipage}[b]{0.49\textwidth}
\begin{center}
\resizebox{1\textwidth}{!}{
\begin{tikzpicture}[scale=0.65, ->, >=stealth', shorten >=1pt, auto, node distance=2cm, semithick, every node/.style={scale=0.65}]
\centering
        \node[initial, state]  (0) {$l_-$};
        \node[accepting, state]  (1) at (5.5,0) {$l_+$};
        \path  (0) edge[in=210, out=240,loop, color=black] node[left, black, align=center] {$b,c_{1}\geq 0\wedge c_{2}\geq 0,$ \\$\{c_1, c_2\}$} (0)
        
        (0) edge[in= 150, out=120,loop, color=black] node[left, black, align=center] {$a, c_{1}> 1\wedge  c_{2}\leq 1, $ \\$\{c_1,c_2\}$} (0)

        (0) edge[in= 105, out=75,loop, color=black] node[above, black, align=center] {$a,  c_{1}\leq 1\wedge c_{2}\geq 0,$ \\$\{c_1,c_2\}$} (0)
        
        (1) edge[in= 330, out=210, color=black] node[below, black, align=center] {$a,c_{1}\geq 0\wedge c_{2}\geq 0,\{c_1, c_2\}$} (0)
        
        (1) edge[in= 300, out=240, color=black] node[below, black, align=center] {$b,0\leq c_{1}\leq 1 \wedge c_{2}\geq 0,\{c_1, c_2\}$} (0)
        
        (0) edge[color=black] node[above, black, align=center] {$a,c_{1}> 1\wedge c_{2} > 1,\{c_2\}$} (1)

        (1) edge[in=30, out=60,loop, color=black] node[above, black, align=center] {$b,1<c_{1}\leq 2 \wedge c_{2}\geq 0,$ \\$\{c_2\}$} (1)

        (1) edge[in=30, out=150, color=black] node[above, black, align=center,pos=0.45] {$b,c_{1}>2 \wedge c_{2}\geq 1,$ \\$\{c_1, c_2\}$} (0)
        
        (1) edge[in=330, out=300,loop, color=black] node[below, black, align=center] {$b,c_{1}>2\wedge c_{2}< 1,$ \\$\{c_2\}$} (1);

        \node [below right=45pt and 15pt, align=flush center,text width=3cm] at (0) {$\mathcal{H}_5$};
\end{tikzpicture}
}
\end{center}
\end{minipage}
\begin{minipage}[b]{0.49\textwidth}
\begin{center}
\resizebox{0.75\textwidth}{!}{
\begin{tikzpicture}[scale=0.65, ->, >=stealth', shorten >=1pt, auto, node distance=2cm, semithick, every node/.style={scale=0.65}]
\centering
        \node[initial, state]  (0) {$l_{-+}$};
        \node[accepting, state]  (1) at (5,0) {$l_{+-}$};
        \node[state]  (2) at (2.5,-4) {$l_{--}$};
        \path  (0) edge[in=150, out=270, color=black] node[below,sloped, black, align=center] {$(b,\{0,0\},\{\top, \top\})$} (2)
        
        (0) edge[color=black] node[below, sloped, black, align=center] {$(a,\{0,0\},\{\top, \top\})$} (2)
        
        (1) edge[in= 75, out=210, color=black] node[above, sloped, black, align=center] {$(a,\{0,0\},\{\top, \top\})$} (2)
        
        (1) edge[color=black] node[below, sloped, black, align=center] {$(b,\{0,0\},\{\top, \top\})$} (2)
        
        (0) edge[color=black] node[above, black, align=center] {$(a,\{1.05,1.05\},\{\bot, \top\})$} (1)
        
        (1) edge[in=75, out=105,loop, color=black] node[above, black, align=center] {$(b,\{1.05,0\},\{\bot, \top\})$} (1)
        
        (1) edge[in= 30, out=270, color=black] node[below, sloped, black, align=center] {$(b,\{2.05,1\},\{\top, \top\})$} (2)
        
        (2) edge[in=230, out=180,loop, color=black] node[left, black, align=center] {$(a,\{0,0\},\{\top, \top\})$} (2)
        
        (2) edge[in=290, out=240,loop, color=black] node[below, black, align=center] {$(a,\{1.05,1.05\},\{\top, \top\})$} (2)
        
        (2) edge[in=350, out=300,loop, color=black] node[right, black, align=center] {$(b,\{0,0\},\{\top, \top\})$} (2);

        \node [below=40pt, align=flush center,text width=3cm] at (2) {$\text{M}_8$};
\end{tikzpicture}
}
\end{center}
\end{minipage}
\begin{minipage}[b]{0.49\textwidth}
\begin{center}
\resizebox{0.9\textwidth}{!}{
\begin{tikzpicture}[scale=0.65, ->, >=stealth', shorten >=1pt, auto, node distance=2cm, semithick, every node/.style={scale=0.65}]
\centering
        \node[initial, state]  (0) {$l_{-+}$};
        \node[accepting, state]  (1) at (5,0) {$l_{+-}$};
        \node[state]  (2) at (2.5,-4) {$l_{--}$};
        \path  (0) edge[in=180, out=240, color=black] node[below,sloped, black, align=center] {$b,c_{1}\geq 0\wedge c_{2}\geq 0,$ \\$\{c_1, c_2\}$} (2)
        
        (0) edge[color=black] node[above, sloped,black, align=center] {$a, c_{1}> 1\wedge  c_{2}\leq 1, $ \\$\{c_1,c_2\}$} (2)

        (0) edge[in= 150, out=270, color=black] node[above, sloped,black, align=center] {$a,  c_{1}\leq 1\wedge c_{2}\geq 0,$ \\$\{c_1,c_2\}$} (2)
        
        (1) edge[color=black] node[above,sloped, black, align=center,pos=0.4] {$a,c_{1}\geq 0\wedge c_{2}\geq 0,$ \\$\{c_1, c_2\}$} (2)
        
        (1) edge[in= 30, out=270, color=black] node[above,sloped, black, align=center] {$b,0\leq c_{1}\leq 1 \wedge c_{2}\geq 0,$ \\$\{c_1, c_2\}$} (2)
        
        (0) edge[color=black] node[above, black, align=center] {$a,c_{1}> 1\wedge c_{2} > 1,\{c_2\}$} (1)

        (1) edge[in=120, out=90,loop, color=black] node[above, black, align=center] {$b,1<c_{1}\leq 2 \wedge c_{2}\geq 0,$ \\$\{c_2\}$} (1)

        (1) edge[in=0, out=300, color=black] node[below, sloped, black, align=center] {$b,c_{1}>2 \wedge c_{2}\geq 1,$ \\ $\{c_1, c_2\}$} (2)
        
        (1) edge[in=30, out=60,loop, color=black] node[right, black, align=center] {$b,c_{1}>2\wedge c_{2}< 1,$ \\$\{c_2\}$} (1)
        
        (2) edge[in=210, out=240,loop, color=black] node[left, black, align=center] {$a, c_{1}\geq 0\wedge c_{2}\geq 0, $ \\$\{c_1,c_2\}$} (2)
        
        (2) edge[in=330, out=300,loop, color=black] node[right, black, align=center] {$b, c_{1}\geq 0\wedge c_{2}\geq 0, $ \\$\{c_1,c_2\}$} (2);

        \node [below=40pt, align=flush center,text width=3cm] at (2) {$\mathcal{H}_8$};
\end{tikzpicture}
}
\end{center}
\end{minipage}
\end{center}
\caption{Iterations of intermediate DFA $\mathrm{M}$ and hypothesis $\mathcal{H}$ during the learning process w.r.t $\mathcal{A}$ in Figure~\ref{fig:timedautomata}}
\label{fig:case-automata}
\end{figure*}

\begin{figure*}
  \ContinuedFloat
\begin{center}
\begin{minipage}[b]{0.49\textwidth}
\begin{center}
\resizebox{0.75\textwidth}{!}{
\begin{tikzpicture}[scale=0.65, ->, >=stealth', shorten >=1pt, auto, node distance=2cm, semithick, every node/.style={scale=0.65}]
\centering
        \node[initial, state]  (0) {$l_{-+}$};
        \node[accepting, state]  (1) at (5,0) {$l_{+-}$};
        \node[state]  (2) at (2.5,-4) {$l_{--}$};
        \path  (0) edge[in=150, out=270, color=black] node[below,sloped, black, align=center] {$(b,\{0,0\},\{\top, \top\})$} (2)
        
        (0) edge[color=black] node[below,sloped, black, align=center] {$(a,\{0,0\},\{\top, \top\})$} (2)
        
        (1) edge[in=75, out=210, color=black] node[above,sloped, black, align=center] {$(a,\{0,0\},\{\top, \top\})$} (2)
        
        (1) edge[color=black] node[below,sloped, black, align=center] {$(b,\{0,0\},\{\top, \top\})$} (2)
        
        (0) edge[color=black] node[above, black, align=center] {$(a,\{1.05,1.05\},\{\bot, \top\})$} (1)
        
        (1) edge[in=75, out=105,loop, color=black] node[above, black, align=center] {$(b,\{1.05,0\},\{\bot, \top\})$} (1)
        
        (1) edge[in= 30, out=270, color=black] node[below, sloped, black, align=center] {$(b,\{2.05,1\},\{\top, \top\})$} (2)
        
        (2) edge[in=230, out=180,loop, color=black] node[left, black, align=center] {$(a,\{0,0\},\{\top, \top\})$} (2)
        
        (2) edge[in=290, out=240,loop, color=black] node[below, black, align=center] {$(a,\{1.05,1.05\},\{\top, \top\})$} (2)
        
        (2) edge[in=350, out=300,loop, color=black] node[right, black, align=center] {$(b,\{0,0\},\{\top, \top\})$} (2)
        
        (1) edge[in=30, out=150,color=black] node[above, black, align=center] {$(a,\{2.1,1.05\},\{\top, \top\})$} (0);

         \node [below=40pt, align=flush center,text width=3cm] at (2) {$\text{M}_9$};
\end{tikzpicture}
}
\end{center}
\end{minipage}
\begin{minipage}[b]{0.49\textwidth}
\begin{center}
\resizebox{0.9\textwidth}{!}{
\begin{tikzpicture}[scale=0.65, ->, >=stealth', shorten >=1pt, auto, node distance=2cm, semithick, every node/.style={scale=0.65}]
\centering
        \node[initial, state]  (0) {$l_{-+}$};
        \node[accepting, state]  (1) at (5,0) {$l_{+-}$};
        \node[state]  (2) at (2.5,-4) {$l_{--}$};
        \path  (0) edge[in=180, out=240, color=black] node[below,sloped, black, align=center] {$b,c_{1}\geq 0\wedge c_{2}\geq 0,$ \\$\{c_1, c_2\}$} (2)
        
        (0) edge[color=black] node[above,sloped, black, align=center] {$a, c_{1}> 1\wedge  c_{2}\leq 1, $ \\$\{c_1,c_2\}$} (2)

        (0) edge[in= 150, out=270, color=black] node[above,sloped, black, align=center] {$a,  c_{1}\leq 1\wedge c_{2}\geq 0,$ \\$\{c_1,c_2\}$} (2)
        
        (1) edge[color=black] node[above,sloped, black, align=center,pos=0.4] {$a,c_{1}\leq 2\wedge c_{2}>1,$ \\$\{c_1, c_2\}$} (2)
        
        (1) edge[in= 30, out=270, color=black] node[above, sloped,black, align=center] {$b,0\leq c_{1}\leq 1 \wedge c_{2}\geq 0,$ \\$\{c_1, c_2\}$} (2)
        
        (0) edge[color=black] node[above, black, align=center] {$a,c_{1}> 1\wedge c_{2} > 1,\{c_2\}$} (1)

        (1) edge[in=105, out=75,loop, color=black] node[above, black, align=center] {$b,1<c_{1}\leq 2 \wedge c_{2}\geq 0,$ \\$\{c_2\}$} (1)

        (1) edge[in=350, out=310, color=black] node[above,sloped, black, align=center] {$b,c_{1}>2 \wedge c_{2}\geq 1,$ \\$\{c_1, c_2\}$} (2)
        
        (1) edge[in=30, out=60,loop, color=black] node[right, black, align=center] {$b,c_{1}>2\wedge c_{2}< 1,$ \\$\{c_2\}$} (1)
        
        (2) edge[in=210, out=240,loop, color=black] node[left, black, align=center] {$a, c_{1}\geq 0\wedge c_{2}\geq 0, $ \\$\{c_1,c_2\}$} (2)
        
        (2) edge[in=310, out=280,loop, color=black] node[below, black, align=center] {$b, c_{1}\geq 0\wedge c_{2}\geq 0, $ \\$\{c_1,c_2\}$} (2)
        
        (1) edge[in=30, out=150,color=black] node[above, black, align=center] {$a,c_1>2\wedge c_2>1,\{c_1, c_2\}$} (0)
        
        (1) edge[in=340, out=330, color=black] node[below, sloped, black, align=center] {$a,c_{1}\geq 0\wedge c_{2}\leq 1,$ \\$\{c_1, c_2\}$} (2);

        \node [below=40pt, align=flush center,text width=3cm] at (2) {$\mathcal{H}_9$};
\end{tikzpicture}
}
\end{center}
\end{minipage}
\begin{minipage}[b]{0.49\textwidth}
\begin{center}
\resizebox{0.75\textwidth}{!}{
\begin{tikzpicture}[scale=0.65, ->, >=stealth', shorten >=1pt, auto, node distance=2cm, semithick, every node/.style={scale=0.65}]
\centering
        \node[initial, state]  (0) {$l_{-+}$};
        \node[accepting, state]  (1) at (5,0) {$l_{+-}$};
        \node[state]  (2) at (2.5,-4) {$l_{--}$};
        \path  (0) edge[in=150, out=270, color=black] node[below,sloped, black, align=center] {$(b,\{0,0\},\{\top, \top\})$} (2)
        
        (0) edge[color=black] node[below,sloped, black, align=center] {$(a,\{0,0\},\{\top, \top\})$} (2)
        
        (1) edge[color=black] node[below,sloped, black, align=center] {$(a,\{0,0\},\{\top, \top\})$} (2)
        
        (1) edge[in= 30, out=270, color=black] node[below, sloped,black, align=center] {$(b,\{0,0\},\{\top, \top\})$} (2)
        
        (0) edge[color=black] node[above, black, align=center] {$(a,\{1.05,1.05\},\{\bot, \top\})$} (1)
        
        (1) edge[in=75, out=105,loop, color=black] node[above, black, align=center] {$(b,\{1.05,0\},\{\bot, \top\})$} (1)
        
        (1) edge[in=0, out=300, color=black] node[below,sloped, black, align=center] {$(b,\{2.05,1\},\{\top, \top\})$} (2)
        
        (2) edge[in=230, out=180,loop, color=black] node[left, black, align=center] {$(a,\{0,0\},\{\top, \top\})$} (2)
        
        (2) edge[in=290, out=240,loop, color=black] node[below, black, align=center] {$(a,\{1.05,1.05\},\{\top, \top\})$} (2)
        
        (2) edge[in=350, out=300,loop, color=black] node[right, black, align=center] {$(b,\{0,0\},\{\top, \top\})$} (2)
        
        (1) edge[in=30, out=150,color=black] node[above, black, align=center] {$(a,\{2.1,1.05\},\{\top, \top\})$} (0)

        (1) edge[in=80, out=200, color=black] node[below,sloped, black, align=center] {$(a,\{3,1.95\},\{\top, \top\})$} (2);

         \node [below=40pt, align=flush center,text width=3cm] at (2) {$\text{M}_{10}$};
\end{tikzpicture}
}
\end{center}
\end{minipage}
\begin{minipage}[b]{0.49\textwidth}
\begin{center}
\resizebox{0.9\textwidth}{!}{
\begin{tikzpicture}[scale=0.65, ->, >=stealth', shorten >=1pt, auto, node distance=2cm, semithick, every node/.style={scale=0.65}]
\centering
        \node[initial, state]  (0) {$l_{-+}$};
        \node[accepting, state]  (1) at (5,0) {$l_{+-}$};
        \node[state]  (2) at (2.5,-4) {$l_{--}$};
        \path  (0) edge[in=180, out=240, color=black] node[below,sloped, black, align=center] {$b,c_{1}\geq 0\wedge c_{2}\geq 0,$ \\$\{c_1, c_2\}$} (2)
        
        (0) edge[color=black] node[above, sloped,black, align=center,pos=0.45] {$a, c_{1}> 1\wedge  c_{2}\leq 1, $ \\$\{c_1,c_2\}$} (2)

        (0) edge[in= 150, out=270, color=black] node[above,sloped, black, align=center] {$a,  c_{1}\leq 1\wedge c_{2}\geq 0,$ \\$\{c_1,c_2\}$} (2)
        
        (1) edge[color=black] node[below,sloped, black, align=center] {$a,c_{1}\leq 2\wedge c_{2}\geq 0,$ \\$\{c_1, c_2\}$} (2)
        
        (1) edge[in= 30, out=270, color=black] node[below, sloped,black, align=center] {$b, c_{1}\leq 1 \wedge c_{2}\geq 0,\{c_1, c_2\}$} (2)
        
        (0) edge[color=black] node[above, black, align=center] {$a,c_{1}> 1\wedge c_{2} > 1,\{c_2\}$} (1)

        (1) edge[in=105, out=75,loop, color=black] node[above, black, align=center] {$b,1<c_{1}\leq 2 \wedge c_{2}\geq 0,$ \\$\{c_2\}$} (1)

        (1) edge[in=0, out=300, color=black] node[below,sloped, black, align=center,] {$b,c_{1}>2 \wedge c_{2}\geq 1,\{c_1, c_2\}$} (2)
        
        (1) edge[in=30, out=60,loop, color=black] node[right, black, align=center] {$b,c_{1}>2\wedge c_{2}< 1,$ \\$\{c_2\}$} (1)
        
        (2) edge[in=210, out=240,loop, color=black] node[left, black, align=center] {$a, c_{1}\geq 0\wedge c_{2}\geq 0, $ \\$\{c_1,c_2\}$} (2)
        
        (2) edge[in=310, out=280,loop, color=black] node[below, black, align=center] {$b, c_{1}\geq 0\wedge c_{2}\geq 0, $ \\$\{c_1,c_2\}$} (2)
        
        (1) edge[in=30, out=150,color=black] node[above, black, align=center] {$a,2<c_1<3\wedge c_2>1,\{c_1, c_2\}$} (0)
        
        (1) edge[in=320, out=350, color=black] node[below, sloped,black, align=center] {$a,c_{1}>2\wedge c_{2}\leq 1,\{c_1, c_2\}$} (2)

        (1) edge[in=85, out=200, color=black] node[below, sloped,black, align=center] {$a, c_1\geq 3\wedge c_2>1,$ \\$\{c_1, c_2\}$} (2);

        \node [below=40pt, align=flush center,text width=3cm] at (2) {$\mathcal{H}_{10}$};
\end{tikzpicture}
}
\end{center}
\end{minipage}
\end{center}
\caption{Iterations of intermediate DFA $\mathrm{M}$ and hypothesis $\mathcal{H}$ during the learning process w.r.t $\mathcal{A}$ in Figure~\ref{fig:timedautomata} (Continued).}
\end{figure*}

\begin{figure*}
\begin{center}
\begin{subfigure}{0.3\textwidth}
\begin{center}
\resizebox{!}{0.8\textwidth}{
\begin{tikzpicture}[scale=0.65, ->, >=stealth', shorten >=1pt, auto, node distance=2cm, semithick, every node/.style={scale=0.65}]
        \node[initial, state]  (0) {$l_0$};
        \node[accepting, state](1) at (0,-3) {$l_1$};
        \path  (0) edge node[right, align=center] {$a,$ \\ $c_{1}> 1,$\\ $\{c_2\}$} (1)
        
        (1) edge[in= 240, out=120, color=black] node[left, black, align=center] {$a,$ \\ $ 0 \leq c_{1} < 3\wedge c_{2} > 1,$ \\ $\{c_1,c_2\}$} (0)
        
        (1) edge[loop below] node[left, align=center] {$a,$ \\ $ c_{1}\geq 0\wedge 0\leq c_{2}< 1,$\\ $\{c_2\}$} (1);
\end{tikzpicture}}
\end{center}
\caption{The target DTA with case ID $2\_1\_2\_3$.}
\end{subfigure}
\begin{subfigure}{0.3\textwidth}
\begin{center}
\resizebox{!}{0.8\textwidth}{
\begin{tikzpicture}[scale=0.65, ->, >=stealth', shorten >=1pt, auto, node distance=2cm, semithick, every node/.style={scale=0.65}]
        \node[initial, state]  (0) {$l_0$};
        \node[accepting, state](1) at (0,-3) {$l_1$};
        \path  (0) edge node[right, align=center] {$a,$ \\ $c_{1}> 10,$\\ $\{c_2\}$} (1)
        
        (1) edge[in= 240, out=120, color=black] node[left, black, align=center] {$a,$ \\ $ 0 \leq c_{1} < 30\wedge c_{2} > 10,$ \\ $\{c_1,c_2\}$} (0)
        
        (1) edge[loop below] node[left, align=center] {$a,$ \\ $ c_{1}\geq 0\wedge 0\leq c_{2}< 10,$\\ $\{c_2\}$} (1);
\end{tikzpicture}}
\end{center}
\caption{The target DTA with case ID $2\_1\_2\_30$.}
\end{subfigure}
\begin{subfigure}{0.3\textwidth}
\begin{center}
\resizebox{!}{0.6\textwidth}{
\begin{tikzpicture}[scale=0.65, ->, >=stealth', shorten >=1pt, auto, node distance=2cm, semithick, every node/.style={scale=0.65}]
        \node[initial, state]  (0) {$l_0$};
        \node[accepting, state](1) at (0,-3) {$l_1$};
         \node[state](2) at (3,-1.5)  {$l_2$};
         
        \path  (0) edge node[left, align=center] {$a,$ \\$c_{1}> 1,$\\ $\{c_2\}$} (1)
        
        (2) edge[color=black] node[right, black, align=center, pos= .8] {$a, 0 \leq c_{1} < 3\wedge c_{2} > 1,$ \\ $\{c_1,c_2\}$} (0)
        
        (1) edge[color=black] node[right=10pt, black, align=center, pos=.1] {$a,c_{1}\geq 0 \wedge 0\leq c_{2}< 1, $\\$\{c_2\}$} (2);
\end{tikzpicture}
}
\end{center}
\caption{The target DTA with case ID $3\_1\_2\_3$.}
\end{subfigure}
\\[0.2cm]
\begin{subfigure}{0.45\textwidth}
\begin{center}
\resizebox{!}{0.45\textwidth}{
\begin{tikzpicture}[scale=0.65, ->, >=stealth', shorten >=1pt, auto, node distance=2cm, semithick, every node/.style={scale=0.65}]
        \node[initial, state]  (0) {$l_0$};
        \node[accepting, state](1) at (0,-3) {$l_1$};
         \node[state](2) at (3,-1.5)  {$l_2$};
         
        \path  (0) edge node[left, align=center] {$a,$ \\$c_{1}> 10,$\\ $\{c_2\}$} (1)
        
        (2) edge[color=black] node[right, black, align=center, pos= .8] {$a, 0 \leq c_{1} < 30\wedge c_{2} > 10,$ \\ $\{c_1,c_2\}$} (0)
        
        (1) edge[color=black] node[right=10pt, black, align=center, pos=.1] {$a,c_{1}\geq 0 \wedge 0\leq c_{2}< 10, $\\$\{c_2\}$} (2);
\end{tikzpicture}
}
\end{center}
\caption{The target DTA with case ID $3\_1\_2\_30$.}
\end{subfigure}
\begin{subfigure}{0.45\textwidth}
\begin{center}
\resizebox{!}{0.67\textwidth}{
\begin{tikzpicture}[scale=0.65, ->, >=stealth', shorten >=1pt, auto, node distance=2cm, semithick, every node/.style={scale=0.65}]
        \node[initial, state]  (0) {$l_0$};
        \node[accepting, state](1) at (0,-3) {$l_1$};
         \node[state](2) at (3,-3)  {$l_2$};
         \node[state](3) at (3,0)  {$l_3$};
         
        \path  (0) edge node[left, align=center] {$a,$ \\$c_{1}> 10,$\\ $\{c_2\}$} (1)
        
        (2) edge[color=black] node[right, black, align=center] {$a, $ \\ $ c_{1} \geq 0\wedge c_{2} < 20,$ \\ $\{c_2\}$} (3)
        
        (1) edge[color=black] node[below, black, align=center] {$a,$ \\ $c_{1}\geq 0 \wedge 0\leq c_{2}< 10, $\\$\{c_2\}$} (2)

        (3) edge[color=black] node[above, black, align=center] {$a, $ \\ $ 0 \leq c_{1} < 50\wedge c_{2} > 10,$ \\ $\{c_1,c_2\}$} (0);
\end{tikzpicture}
}
\end{center}
\caption{The target DTA with case ID $4\_1\_2\_50$.}
\end{subfigure}
\end{center}
\caption{The target DTAs used in Table~\ref{tab:DTAresults} of Section~\ref{sec:experiment}}
\label{fig:target_dta_experiment}
\end{figure*}

\end{document}